\PassOptionsToPackage{dvipsnames}{xcolor}
\documentclass[11pt]{article}

\usepackage[margin=1in]{geometry}
\usepackage[T1]{fontenc}
\usepackage{lmodern}
\usepackage{amsmath,amssymb,amsthm,mathtools,booktabs,bm}
\usepackage[linesnumbered,ruled,vlined]{algorithm2e}
\usepackage{nicematrix}
\NiceMatrixOptions{code-for-first-row = \scriptstyle\color{gray}, code-for-first-col = ~\scriptstyle\color{gray}}
\mathtoolsset{showonlyrefs}

\usepackage{comment}
\usepackage[authoryear]{natbib}
\setcitestyle{authoryear}
\usepackage[colorlinks=true,linkcolor=blue,citecolor=blue,urlcolor=blue]{hyperref}
\usepackage[bibliography=common,appendix=append]{apxproof}

\usepackage{cleveref}
\usepackage{enumerate}

\usepackage{tikz}
\usetikzlibrary{positioning, backgrounds, calc}
\definecolor{markred}{RGB}{255, 180, 180}
\definecolor{markyellow}{RGB}{255, 255, 180}
\definecolor{markcyan}{RGB}{180, 255, 255}
\definecolor{lineyellow}{RGB}{240, 230, 80}

\newtheoremrep{theorem}{Theorem}[section]
\newtheorem{definition}{Definition}[section]

\newtheoremrep{lemma}{Lemma}[section]
\newtheorem{example}{Example}[section]

\renewcommand{\mid}{:}
\renewcommand{\epsilon}{\varepsilon}
\newcommand{\Chf}{\widetilde{C}} 
\newcommand{\Chcd}{\mathcal{C}} 
\newcommand{\Chcf}{\overline{\mathcal{C}}} 
\newcommand{\cM}{\mathcal{M}}

\newcommand{\ZZ}{\mathbb{Z}}
\newcommand{\RR}{\mathbb{R}}
\newcommand{\RRbar}{\underline{\mathbb{R}}}

\DeclareMathOperator{\dom}{dom}
\DeclareMathOperator{\conv}{conv}
\DeclareMathOperator{\supp}{supp}
\DeclareMathOperator{\argmax}{arg\,max}
\DeclareMathOperator{\argmin}{arg\,min}

\title{Stable and Fair Random Allocations in a Two-Sided Discrete-Concave Market}
\author{
Kenzo Imamura\\
The University of Tokyo
\and
Yasushi Kawase\\
Chuo University
}
\date{}

\begin{document}
\maketitle

\begin{abstract}
Random allocations are widely used to handle ties and indifferences in two-sided environments.
In such environments, commonly used procedures such as random tie-breaking may fail to ensure stability and fairness from an ex ante perspective.
We show that when agents have discrete concave (M$^\natural$-concave) valuations, there exists an ex ante stable and fair allocation.
To establish this result, we relate our framework to the model of stability introduced by Alkan and Gale. 
In particular, we show that ex ante stable and fair fractional allocations are exactly characterized as Alkan--Gale stable outcomes under choice functions induced from concave closures together with a symmetric strictly convex tie-breaking rule. We further prove that any ex ante stable fractional allocation can be decomposed into a lottery over stable deterministic allocations, using a generalization of the Birkhoff--von Neumann theorem.
Finally, we study a setting that does not rely on cardinal valuations and instead assumes ordinal preferences. 
Within this ordinal framework, we establish the existence of an ex ante stable and fair fractional allocation.
This setting is formulated within the matching-with-contracts framework under matroid constraints. The resulting class includes existing models, such as one-to-many random allocation with responsive choice correspondences, and captures a wide range of applications, including controlled school choice with lotteries.
\end{abstract}

\section{Introduction}\label{sec:introduction}

Two-sided matching markets are central to market design, governing student assignment \citep{abdulkadirouglu2003school}, medical matching \citep{roth1984evolution}, and labor markets \citep{cowgill2018matching}. In these settings, ties in priorities or preferences are common, and randomization is often employed to treat agents with identical attributes fairly. The standard approach is uniform random tie-breaking. It converts weak preferences into strict ones and allows the application of standard matching theory. This method yields ex post stable matchings, but the induced random allocation can be unstable and unfair from an ex ante perspective.

Ex ante stability and fairness are difficult to achieve together in the presence of ties.
In general, they can fail to coexist.
Existing positive results therefore rely on additional structure.
Under responsive choice correspondences, the existence of ex ante stable and fair random allocations is well understood: \citet{kesten2015theory} establish such allocations in many-to-one matching with indifferences on one side, while \citet{cookson2025fairly} extend the analysis to one-to-one matching with indifferences on both sides and introduce \emph{doubly-strong ex ante stability}, a refinement of ex ante stability that combines ex ante fairness with robustness across all realizations of tie-breaking.
In this responsive setting, a fair random allocation is essentially determined by its marginal assignment probabilities, and the remaining task is merely to decompose these marginals into a lottery over stable matchings, which can be done using a standard Birkhoff--von Neumann decomposition.

Many institutional constraints in practice, however, violate responsiveness and fall outside these analyses; distributional constraints in controlled school choice are a prominent example. Once the choice rule is non-responsive, an agent’s ex ante utility can depend on how a given fractional allocation is decomposed into a lottery over deterministic matchings, hence the marginal assignment probabilities no longer tell the whole story.

For instance, consider a school with two seats and three students $a,b,c$, each of whom must be admitted with probability $1/2$.
Students $a$ and $b$ are of the same type, while $c$ is of a different type, and the school’s utility depends only on the multiset of types it enrolls: it strictly prefers any outcome with two different types to any outcome with two identical types or a single student, and strictly prefers any nonempty outcome to the empty outcome (formally, $\{a,c\}\sim\{b,c\}\succ \{a,b\}\sim\{a\}\sim\{b\}\sim\{c\}\succ\varnothing$).
The marginal admission probabilities are the same in any decomposition, but the induced ex ante utility for the school depends on how the allocation is decomposed: a lottery that admits $a$ and $c$ together with probability $1/2$ and admits $b$ alone with probability $1/2$ yields a diverse outcome half of the time, whereas a lottery that admits $a$ and $b$ together with probability $1/2$ and admits $c$ alone with probability $1/2$ never yields a diverse outcome and is therefore strictly worse for the school. Thus, in the non-responsive case one must ask \emph{which} decomposition of a given fractional allocation is desirable, and how desirability should be defined, rather than implementing an arbitrary ex post stable lottery with the correct marginals.

\subsection{Our Contributions}\label{subsec:approach-results}
We address these limitations using tools from discrete convex analysis and represent agents' preferences using \emph{M$^\natural$-concave valuations}.
M$^\natural$-concavity is a discrete analogue of concavity, capturing a natural form of diminishing returns, that arises pervasively in the allocation of indivisible goods; it is precisely the structure that makes our approach broadly applicable.
With transferable utility, the canonical \emph{gross substitutes} condition \citep{kelso1982job} is equivalent to valuations being M$^\natural$-concave \citep{fujishige2003note}.
With non-transferable utility, a wide range of practical constraints, including the type-specific quotas, type-specific reserves, soft bounds, and overlapping types that we revisit in \Cref{sec:applications}, can be represented by M$^\natural$-concave functions or M$^\natural$-convex sets, and such representations have proven useful for designing desirable mechanisms under these constraints \citep{kojima2018designing,IK2025,BIK2025}.
We extend doubly-strong ex ante stability beyond responsive choice correspondences and connect our model to the fractional deterministic matching framework of \citet{AlkanGale2003}. By applying \emph{concave closure}, we extend integral valuations to fractional and derive induced choice functions. To obtain a single-valued choice function when the maximizer is not unique, we use a \emph{symmetric strictly convex tie-breaking rule} that treats tied alternatives symmetrically.
We show that, under these choice functions, stable matchings in the sense of \citet{AlkanGale2003} correspond exactly to our doubly-strong ex ante stable allocations (\Cref{thm:ag-ds-equivalence}). This equivalence implies existence of such allocations and characterizes their structural properties (\Cref{thm:cardinal-main}).
To implement fractional solutions, we use a generalized Birkhoff--von Neumann theorem to decompose fractional matchings into lotteries over deterministic ones. We show that any ex ante stable fractional allocation admits a decomposition into ex post stable deterministic allocations (\Cref{thm:ex-ante-to-ex-post}), making it implementable as a stable lottery.

Many applications rely only ordinal information, and existing models such as \citet{kesten2015theory} and \citet{cookson2025fairly} take ordinal rankings as primitives. Motivated by this, we also study an ordinal variant of the model. In the matroid-responsive subclass, the ordinal market admits a doubly-strong ex ante stable fractional allocation (\Cref{thm:ordinal-existence}). We also show that without responsiveness, ordinal information can be insufficient. Even with a fixed ranking over feasible bundles, different M$^\natural$-concave representations can disagree on which lottery decompositions are optimal (\Cref{ex:ordinal-ambiguous-order}).
We formulate this ordinal model using the matching-with-contracts framework with matroid constraints. 
In \Cref{sec:applications}, we show that this model includes the settings studied by \citet{kesten2015theory} and \citet{cookson2025fairly} and accommodates common school choice constraints, including type-specific quotas \citep{abdulkadirouglu2003school}, type-specific reserves \citep{hafalir2013effective,ehlers2014school}, and overlapping types \citep{sonmez2022affirmative,aygun2021college}.

\subsection{Related Work}\label{subsec:related_work}
Our paper is most closely related to \citet{kesten2015theory} and \citet{cookson2025fairly}. \citet{kesten2015theory} introduces ex ante stable and fair random allocations, proposes the fractional deferred acceptance (FDA) mechanism, and proves existence in a many-to-one matching model with responsive choice correspondences. In a one-to-one model with ties on both sides, \citet{cookson2025fairly} proposes doubly-strong ex ante stability as a fairness refinement for random allocations under indifferences.\footnote{\citet{karzanov2024mixed} provides a similar analysis to \citet{cookson2025fairly}.} We extend these concepts and results to a substantially more general class of two-sided markets using tools from discrete convex analysis. We also provide an ordinal-preference model in which these settings arise as special cases.

We contribute to the literature on stable random (or fractional) allocations. Early work on random stability under strict preferences includes \citet{roth1993stable}. In contrast, we focus on fairness issues that arise from indifferences and ties. Deferred acceptance with random tie-breaking is analyzed by \citet{abdulkadirouglu2009strategy}, and \citet{erdil2008whats} study its potential efficiency losses and propose efficiency-improving algorithms that preserve ex post stability. 
\citet{han2024theory} studies ex ante fair random matchings and propose a unified mechanism that encompasses the standard deferred acceptance mechanism. Related work also examines incentive issues in random matching mechanisms (e.g., \citealt{erdil2014strategy, bando2025impossibility}). Finally, \citet{caragiannis2019stable} study fractional stable matchings with additive valuations, whereas we allow for more general valuations.

Our work is also related to stable deterministic allocations with discrete-concave valuations. \citet{FujishigeTamura2006,FujishigeTamura2007} develop two-sided discrete-concave market models. \citet{fleiner2001} generalizes the stable matching polytope to a matroid setting. \citet{EguchiA2003Generalized} introduces a generalized Gale--Shapley algorithm for a discrete-concave stable marriage model. \citet{MurotaYokoi2015} study lattice structure in two-sided discrete-concave markets. \citet{kojima2018designing} design allocation mechanisms under constraints using discrete convex analysis. These works motivate our use of discrete-concave valuations in a random-allocation model and our study of their structural properties.

The classical Birkhoff--von Neumann decomposition theorem \citep{birkhoff1946tres,vonneumann1953assignment} connects fractional allocations to lotteries over deterministic matchings. \citet{kesten2015theory} uses this decomposition to show that any ex ante stable random matching can be implemented as a lottery over ex post stable matchings. The decomposition has since been generalized \citep{edmonds1965matching,Frank1984c,budish2013designing}. 
In our setting, we use a generalization of the Birkhoff--von Neumann theorem to relate ex ante stable random allocations to lotteries over ex post stable allocations.

There is also a large literature on one-sided random assignment. One line of work studies models with cardinal utilities: \citet{hylland1979efficient} introduces the pseudo-market framework and analyzes random allocations based on competitive equilibrium. 
The pseudo-market approach is extended to priorities by \citet{he2018pseudo}, to constraints by \citet{echenique2021constrained}, and to multi-unit demand settings by \citet{budish2013designing}. 
Another line of work studies one-sided random assignment with ordinal preferences: \citet{bogomolnaia2001new} introduces the probabilistic serial mechanism (PS), with extensions to weak preferences \citep{katta2006solution,yilmaz2009random}
and to constrained settings
\citep{budish2013designing,FujishigeSanoZhan2018,aziz2022vigilant,KawaseSumitaYokoi}.
Recent work investigates mechanisms based on competitive equilibria for multi-unit demands \citep{nguyen2025efficiency}. Our contribution differs because we study two-sided preferences and stability.


\section{Preliminaries}\label{sec:preliminaries}
We denote the set of integers by $\ZZ$, the set of nonnegative integers by $\ZZ_+$, the set of real numbers by $\RR$, and the set of nonnegative real numbers by $\RR_+$. Let $\RRbar=\RR\cup\{-\infty\}$.

Let $A$ be a nonempty finite set.
For each $a\in A$, let $\chi_a\in\{0,1\}^A$ be the vector whose $a$-component is $1$ and other components are $0$.
When the outside option $\varnothing$ is included in a comparison, we interpret $\chi_{\varnothing}$ as the zero vector. 
For a subset $S\subseteq A$, let $\chi_S\in\{0,1\}^A$ denote the vector defined by $(\chi_S)_a=1$ if $a\in S$ and $(\chi_S)_a=0$ otherwise.
For two vectors $x,y\in\RR_+^A$, define $x\vee y=(\max\{x_a,y_a\})_{a\in A}$ and $x\wedge y=(\min\{x_a,y_a\})_{a\in A}$.
For a vector $x\in\RR^A$, let $\supp(x)=\{a\in A\mid x_a\ne 0\}$, $\supp^+(x)=\{a\in A\mid x_a>0\}$, and $\supp^-(x)=\{a\in A\mid x_a<0\}$.
For a subset $S\subseteq A$ and a vector $x\in\RR^A$, write $x(S)=\sum_{a\in S}x_a$.
For a set $A$, let $\bm{1}_A$ denote the all-ones vector in $\RR^A$, and write $\bm{1}$ when the dimension is clear.
Let $\bm{0}_A$ denote the zero vector in $\RR^A$, and write $\bm{0}$ when the dimension is clear.

\subsection{Feasibility and Valuation Functions}\label{subsec:valuations}
We begin with standard set-function notions.
A set function $r\colon 2^A\to\RR_+$ is \emph{submodular} if it satisfies
\begin{align}
r(S)+r(T)\ge r(S\cup T)+r(S\cap T) \quad \text{for all } S,T\subseteq A.
\end{align}
We call $p\colon 2^A\to\RR$ \emph{supermodular} if $-p$ is submodular.
We say that $r$ is \emph{monotone} if $r(S)\le r(T)$ for all $S\subseteq T\subseteq A$.
For a monotone submodular function $r$ with $r(\emptyset)=0$, the \emph{polymatroid} and \emph{base polyhedron} associated to $r$ are the polytopes
\begin{align}
P(r)&\coloneqq\{x\in\RR_+^A\mid x(S)\le r(S)\ \ (\forall S\subseteq A)\},\\
B(r)&\coloneqq\{x\in\RR_+^A\mid x(S)\le r(S)\ \ (\forall S\subseteq A),\ x(A)=r(A)\}
\end{align}
A polymatroid and base polyhedron are called \emph{integral} if $r$ is integer-valued.
We use two equivalent descriptions of a matroid.
A \emph{matroid rank function} is an integral monotone submodular function $r\colon 2^A\to\ZZ_+$ satisfying $r(S)\le |S|$ for all $S\subseteq A$.
An \emph{independence system} on $A$ is a nonempty family $\mathcal{F}\subseteq 2^A$ such that $X\subseteq Y\in\mathcal{F}$ implies $X\in\mathcal{F}$.
A \emph{matroid} on $A$ is an independence system $\mathcal{F}$ with the exchange property that for any $X,Y\in\mathcal{F}$ with $|X|<|Y|$, there exists $y\in Y\setminus X$ such that $X\cup\{y\}\in\mathcal{F}$.
The two views coincide. From $r$ we obtain $\{S\subseteq A\mid r(S)=|S|\}$, and from $\mathcal{F}$ we obtain $r(S)\coloneqq \max\{|T|\mid T\subseteq S,\ T\in\mathcal{F}\}$.

We next define generalized polymatroids.
A \emph{generalized polymatroid (g-polymatroid, or M$^\natural$-convex polyhedron)} is a polyhedron of the form
\begin{align}
P(\rho,\mu)\coloneqq\{x\in\RR^A\mid \mu(S)\le x(S)\le \rho(S)\ \ (\forall S\subseteq A)\},
\end{align}
where $\rho$ is submodular, $\mu$ is supermodular, $\rho(\emptyset)=\mu(\emptyset)=0$, and $\rho(S)-\mu(T)\ge \rho(S\setminus T)-\mu(T\setminus S)$ for all $S,T\subseteq A$.
If $\rho$ and $\mu$ are integer-valued then $P(\rho,\mu)$ is called integral.

The next lemma states a standard integrality property of polymatroids and g-polymatroids.
\begin{lemma}[\citet{murota2003,edmonds1971matroids}]\label{lem:matroid-polytope-integral}
If $P\subseteq \RR^A$ is an integral polymatroid or an integral g-polymatroid, then $P=\conv\bigl(P\cap\ZZ^A\bigr)$.
\end{lemma}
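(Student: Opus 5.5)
We would prove the statement for g-polymatroids directly, since an integral polymatroid $P(r)$ is the special case $P(\rho,\mu)$ with $\rho=r$ and $\mu(S)=0$, once the nonnegativity constraints are written as $x(S)\ge 0$. (Monotonicity of $r$ makes the cross inequality $\rho(S)-\mu(T)\ge\rho(S\setminus T)-\mu(T\setminus S)$ hold.) The inclusion $\conv(P\cap\ZZ^A)\subseteq P$ is immediate because $P$ is convex. It therefore suffices to show that $P$ is bounded and that every vertex of $P$ is integral. Boundedness follows from $\mu(\{a\})\le x_a\le\rho(\{a\})$ for each $a\in A$. A polytope is the convex hull of its vertices, so integrality of the vertices gives the reverse inclusion.

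\textbf{Step 1: the greedy characterization of vertices.} Let $x$ be a vertex of $P=P(\rho,\mu)$. Choose a linear objective $w$ that is uniquely maximized at $x$, and perturb $w$ so that its coordinates are distinct and nonzero; the vertex $x$ is still the unique maximizer. Order $A$ as $a_1,\dots,a_n$ with $w_{a_1}>\dots>w_{a_n}$. Let $k$ be the last index with $w_{a_k}>0$, and write $S_i=\{a_1,\dots,a_i\}$. We then claim that the greedy vector
\[
y_{a_i}=\rho(S_i)-\rho(S_{i-1})\ (i\le k),\qquad y_{a_i}=\mu(S_i\setminus S_k)-\mu(S_{i-1}\setminus S_k)\ (i>k),
\]
lies in $P$ and maximizes $w^\top x$ over $P$. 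By uniqueness, $y=x$. Every coordinate of $y$ is a difference of values of $\rho$ or $\mu$, so $y$ is integral whenever $\rho$ and $\mu$ are integer-valued.

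\textbf{Step 2: optimality via LP duality.} Optimality of $y$ is the standard Edmonds argument. The dual has a variable for each set: put weight $w_{a_i}-w_{a_{i+1}}$ on the upper constraints for the chain $S_1\subset\dots\subset S_k$, and symmetric weights on lower constraints for the sets $A\setminus S_{i}$ with $i\ge k$. These weights are nonnegative, and by construction their total equals $w^\top y$. Hence $y$ is optimal, provided $y$ is feasible.

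\textbf{Step 3 (main obstacle): feasibility of $y$.} We must show $\mu(T)\le y(T)\le\rho(T)$ for every $T\subseteq A$. For the upper bound restricted to $S_k$, $y(T)\le\rho(T)$ follows by the usual telescoping induction using submodularity: each increment $\rho(S_i)-\rho(S_{i-1})$ is at most $\rho(T\cap S_i)-\rho(T\cap S_{i-1})$ when $a_i\in T$. The lower bound on $A\setminus S_k$ is symmetric, using supermodularity of $\mu$. The delicate part is handling sets $T$ that meet both $S_k$ and its complement. Here we combine the two partial bounds and apply the cross inequality $\rho(S)-\mu(T)\ge\rho(S\setminus T)-\mu(T\setminus S)$ to split $T$ into $T\cap S_k$ and $T\setminus S_k$. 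We also need $\mu(S)\le\rho(S)$ so that $P$ is nonempty. This also follows from the cross inequality: taking $S=T$ gives $\rho(S)-\mu(S)\ge\rho(\emptyset)-\mu(\emptyset)=0$.

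We expect this mixed-sign bookkeeping to be the main difficulty. If it becomes too messy, the fallback is to cite the equivalence between integral g-polymatroids and projections of integral base polyhedra in one dimension higher (\citet{murota2003}), which reduces the claim to the base-polyhedron case, where only Step 1 with $\rho$ is needed.
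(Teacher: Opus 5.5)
The paper gives no proof of this lemma; it cites Edmonds and Murota. Your plan follows the standard argument behind those citations: show that vertices of $P(\rho,\mu)$ are greedy vectors, certify them with an Edmonds-type dual, and use the cross inequality for mixed sets. However, Step 1 defines the nonpositive-weight part of $y$ along the wrong chain. As a result, the claims that $y$ maximizes $w^\top x$, and hence that $y=x$, are false as written.

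Your formula builds the lower part along $S_i\setminus S_k=\{a_{k+1},\dots,a_i\}$. It therefore first pushes down the coordinate whose weight is closest to $0$, whereas an optimum must first push down the coordinate with the most negative weight. A concrete counterexample:
take $A=\{a,b\}$, $\rho(\{a\})=\rho(\{b\})=\rho(A)=1$, $\mu(\{a\})=\mu(\{b\})=0$, $\mu(A)=1$. This satisfies all of the paper's axioms, and $P$ is the segment joining $(1,0)$ and $(0,1)$. For $w=(-1,-2)$ we have $k=0$, and your formula gives $y_a=\mu(\{a\})=0$ and $y_b=\mu(A)-\mu(\{a\})=1$. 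So $w^\top y=-2$, while the unique maximizer is $(1,0)$ with value $-1$.

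The inconsistency is already visible in Step 2. Your dual correctly places lower constraints on the sets $A\setminus S_i$ ($i\ge k$), and complementary slackness needs exactly those sets to be tight. Your $y$ instead makes the sets $S_i\setminus S_k$ tight: in the example $y(A\setminus S_1)=y_b=1>\mu(\{b\})=0$, although that constraint carries dual weight $w_{a_1}-w_{a_2}=1>0$. So the dual total does not equal $w^\top y$ ``by construction''.

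\textbf{Repair.} Set $y_{a_i}=\mu(A\setminus S_{i-1})-\mu(A\setminus S_i)$ for $i>k$. In the dual, put weight $w_{a_k}$ (not $w_{a_k}-w_{a_{k+1}}$) on the upper constraint for $S_k$, and $-w_{a_{k+1}}$ on the lower constraint for $A\setminus S_k$. This corrected vector is exactly what your fallback would produce: the base-polyhedron greedy for $\hat\rho$ on $A\cup\{0\}$, with $\hat\rho(X\cup\{0\})=-\mu(A\setminus X)$, yields it.

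Feasibility was never the problem; your original $y$ is feasible too. With the corrected $y$, Step 3 is short. Let $T_1=T\cap S_k$ and $T_2=T\setminus S_k$.
\begin{itemize}
\item The telescoping bounds give $y(T)\le\rho(T_1)+\mu(A\setminus S_k)-\mu((A\setminus S_k)\setminus T)$, which is at most $\rho(T)$ by the cross inequality for the pair $(T,A\setminus S_k)$.
\item They also give $y(T)\ge\rho(S_k)-\rho(S_k\setminus T)+\mu(T_2)$, which is at least $\mu(T)$ by the cross inequality for $(S_k,T)$.
\end{itemize}
Separately, $\mu\le\rho$ is neither needed nor sufficient for nonemptiness. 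Where nonemptiness matters, it follows from feasibility of the greedy vector.
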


We next define M$^\natural$-convex sets.
A nonempty set $Q\subseteq\ZZ^A$ is an \emph{M$^\natural$-convex set} if, for all $x,y\in Q$ and $i\in\supp^+(x-y)$, there exists $j\in\supp^-(x-y)\cup\{\varnothing\}$ such that $x-\chi_i+\chi_j\in Q$ and $y+\chi_i-\chi_j\in Q$.
Integral g-polymatroids provide a geometric representation of M$^\natural$-convex sets.
If $P\subseteq\RR^A$ is an integral g-polymatroid, then $P\cap\ZZ^A$ is M$^\natural$-convex.
Conversely, every M$^\natural$-convex set can be represented as the integer points of an integral g-polymatroid.
In particular, for a matroid rank function $r$, the set $P(r)\cap\ZZ_+^A$ is an M$^\natural$-convex set.

We also define M$^\natural$-concavity for functions on $\ZZ^A$.
For a function $g\colon\ZZ^A\to\RRbar$, define the effective domain $\dom(g)\coloneqq \{x\in\ZZ^A\mid g(x)\ne -\infty\}$.
A function $g\colon \ZZ^A\to\RRbar$ with $\dom(g)\ne\emptyset$ is \emph{M${}^\natural$-concave} if, for all $x,y\in\dom(g)$ and $i\in\supp^+(x-y)$, there exists $j\in\supp^-(x-y)\cup\{\varnothing\}$ such that
\begin{align}
g(x)+g(y)\le g(x-\chi_i+\chi_j)+g(y+\chi_i-\chi_j).
\end{align}
The effective domain of any M$^\natural$-concave function forms an M$^\natural$-convex set.

A typical example of an M$^\natural$-concave function is a modular function on an M$^\natural$-convex set.
Let $Q\subseteq\ZZ^A$ be M$^\natural$-convex and let $w\in\RR^A$.
Define $f\colon \ZZ^A\to\RRbar$ by $f(x)=\sum_{a\in A} w_a x_a$ if $x\in Q$ and $f(x)=-\infty$ otherwise.
This $f$ is M$^\natural$-concave.
As another example, let $U$ be a finite set and consider a weighted bipartite graph with bipartition $(A,U)$ and edge weights $w_{a,u}$.
For $x\in\{0,1\}^A$, define $g(x)$ as the maximum weight of a left-perfect matching that matches exactly the vertices in $\{a\in A \mid x_a=1\}$.
Set $g(x)=-\infty$ if $x\notin\{0,1\}^A$ or no such left-perfect matching exists.
This maximum weight matching valuation is M$^\natural$-concave.

A function $g\colon\ZZ^A\to\RRbar$ with a bounded effective domain is M$^\natural$-concave if and only if, for every vector $p\in\RR^A$, $\argmax\{g(x)-\textstyle\sum_{a\in A}p_ax_a \mid x\in\ZZ^A\}$ is an M$^\natural$-convex set~\citep{murota1999m}.
Moreover, if $A$ is a disjoint union $A_1\cup\cdots\cup A_n$ and each $Q_\ell\subseteq\ZZ^{A_\ell}$ is M$^\natural$-convex, then $\prod_{\ell=1}^n Q_\ell$ is M$^\natural$-convex.

For more details on M$^\natural$-concavity and related concepts, we refer to \citet{murota2003,murota2016}.

\subsection{Continuous Relaxation}\label{subsec:relaxation}
We next introduce concave closures to pass from discrete functions to fractional points.
Let $g\colon \ZZ^A \to \RRbar$. Its \emph{concave closure} $\overline{g} \colon \RR^A \to \RRbar$ is defined, for each $\pi\in\RR^A$, by
\begin{align}
\overline{g}(\pi)\coloneqq\inf_{(p,\alpha)\in\RR^A\times\RR}\left\{\sum_{a\in A}p_a \pi_a+\alpha \mid \sum_{a\in A}p_a x_a+\alpha\ge g(x)\ \ (\forall x\in\ZZ^A)\right\}.
\end{align}
Intuitively, $\overline{g}(\pi)$ represents the maximum expected utility achievable when the expected outcome received is $\pi$.

We also use the continuous analogue of M$^\natural$-concavity, which we define below.
A function $h\colon\RR^A\to\RRbar$ is called \emph{polyhedral concave} if its hypograph $\{(x,p)\mid x\in\RR^A,\,h(x)\ge p\}$ forms a polyhedron.
A polyhedral concave function $h\colon \RR^A\to\RRbar$ is said to be M$^\natural$-concave if, for all $x,y\in\dom(h)$ and $i\in\supp^+(x-y)$, there exist $j\in\supp^-(x-y)\cup\{\varnothing\}$ and a positive real number $\eta_0>0$ such that
\begin{align}
h(x)+h(y)\le h(x-\eta(\chi_i-\chi_j))+h(y+\eta(\chi_i-\chi_j))&&(\forall \eta\in[0,\eta_0]).
\end{align}

The following theorem characterizes M$^\natural$-concave functions in terms of g-polymatroids.
\begin{theorem}[\cite{MurotaShioura2000,murota2003}]\label{thm:Mconcave-g} 
A polyhedral concave function $h\colon \RR^A\to\RRbar$ is M$^\natural$-concave if and only if, for every vector $p\in\RR^A$, $\argmax\{h(x)-\sum_{a\in A}p_ax_a\mid x\in\RR^A\}$ is a g-polymatroid if it is not empty.
Specifically, if $h\colon \RR^A\to\RRbar$ is the concave closure of an M$^\natural$-concave function on $\ZZ^A$, then for every vector $p\in\RR^A$, $\argmax\{h(x)-\sum_{a\in A}p_ax_a\mid x\in\RR^A\}$ is an integral g-polymatroid if it is not empty.
\end{theorem}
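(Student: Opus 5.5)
The plan is to derive both directions of Theorem~\ref{thm:Mconcave-g} from the discrete characterization of \citet{murota1999m} recalled above, with LP duality and polyhedral structure as the bridge. For the ``only if'' direction, I fix $p$ and assume $X\coloneqq\argmax\{h(x)-\langle p,x\rangle\}$ is nonempty. Since $h$ is polyhedral concave, $X$ is a face of the hypograph projected to $\RR^A$, hence a polyhedron. To show $X$ is a g-polymatroid I use the known characterization that a polyhedron is a g-polymatroid if and only if every edge direction is parallel to some $\chi_i-\chi_j$ or $\chi_i$; equivalently, $X$ satisfies the continuous exchange property. Take $x,y\in X$ and $i\in\supp^+(x-y)$, and let $j$ and $\eta_0$ be given by the definition of M$^\natural$-concavity. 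The linear term $\langle p,\cdot\rangle$ is unchanged in sum by the opposite moves, so
\begin{align}
\bigl(h-p\bigr)(x')+\bigl(h-p\bigr)(y')\ \ge\ \bigl(h-p\bigr)(x)+\bigl(h-p\bigr)(y)=2\max(h-p),
\end{align}
where $x'=x-\eta(\chi_i-\chi_j)$ and $y'=y+\eta(\chi_i-\chi_j)$. Hence $x',y'\in X$, and $X$ has the polyhedral exchange property. I will then invoke (or reprove via edge directions) the fact that a polyhedron with this exchange property is a g-polymatroid.

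For the ``if'' direction I argue by contradiction. Suppose exchange fails for some $x,y,i$. Restricting to the segment-type neighborhood, I choose $p$ as a supergradient common to a face containing $x$ and $y$; this is possible after replacing $x,y$ by nearby points on a common linearity domain, since $h$ is polyhedral. Then $x,y$ lie in one argmax set $X$, which is a g-polymatroid and therefore has the exchange property. This gives $j$ and $\eta_0$ with $x',y'\in X$, and equality in $h-p$ yields the inequality. The main obstacle is the reduction to a common linearity domain: when $x$ and $y$ are far apart, I plan to use the standard local-to-global argument for polyhedral M$^\natural$-concavity. That argument first proves exchange along directions within a single cell and then uses concavity along the segment $[x,y]$.

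For the ``specifically'' part, let $h=\overline{g}$ with $g$ M$^\natural$-concave and bounded domain, which is the setting here. By LP duality, $\argmax(\overline{g}-p)=\conv\bigl(\argmax(g-p)\bigr)$: the maximum of a linear function plus the closure is attained at the integral points where $g-p$ is maximized. By \citet{murota1999m}, $\argmax(g-p)$ is M$^\natural$-convex, hence equals $P\cap\ZZ^A$ for an integral g-polymatroid $P$. By Lemma~\ref{lem:matroid-polytope-integral}, its convex hull is $P$ itself, which is integral.
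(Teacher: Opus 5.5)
The paper does not prove this statement. It quotes it from \citet{MurotaShioura2000,murota2003}, and later uses only the ``only if'' half and the integrality clause. Those two parts of your proposal are sound. The opposite moves cancel the linear term, so the exchanged pair stays in $X$, which makes $X$ an M$^\natural$-convex polyhedron and hence a g-polymatroid by the cited polyhedral theorem. For bounded domain, $\argmax(\overline{g}-p)=\conv(\argmax(g-p))$, together with M$^\natural$-convexity of $\argmax(g-p)$ and Lemma~\ref{lem:matroid-polytope-integral}, gives integrality directly, without needing the first part.

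The gap is the ``if'' direction, and it is the whole of that direction.

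\emph{The reduction does not apply to general pairs.} Your argument covers only pairs $x,y$ lying in a single linearity cell $\argmax(h-p)$. For a general pair there is no common supergradient. You also cannot replace $x,y$ by nearby points, because the exchange inequality must be proved for the given pair.

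\emph{The deferred local-to-global step does not close as described.} For polyhedral $h$, the exchange inequality with index $j$ is equivalent to $h'(x;\chi_j-\chi_i)+h'(y;\chi_i-\chi_j)\ge 0$. Cut $[x,y]$ at $x=z_0,\dots,z_m=y$ so that each piece lies in one cell. Exchange in the $k$th cell gives an index $j_k\in\supp^-(x-y)\cup\{\varnothing\}$ with $h'(z_k;\chi_{j_k}-\chi_i)+h'(z_{k+1};\chi_i-\chi_{j_k})\ge 0$. Summing over $k$ and using $h'(z;d)+h'(z;-d)\le 0$ at the interior cut points does give the endpoint inequality. But this works only when one and the same $j$ is admissible in every cell crossed. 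The cells are different g-polymatroids, and nothing in the hypothesis or in your sketch makes the $j_k$ agree. Reconciling them is exactly the nontrivial content of the theorem: the polyhedral counterpart of ``local exchange implies exchange.''

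\emph{What is needed.} You need a genuinely global argument here. One route goes through the convex conjugate $h^\circ(p)=\sup_x\{h(x)-\langle p,x\rangle\}$. Its subdifferentials are $-\argmax(h-p)$, hence g-polymatroids, so $h^\circ$ is locally L$^\natural$-convex. L$^\natural$-convexity is a submodularity condition on $(p_0,p)\mapsto h^\circ(p-p_0\bm{1})$, which passes from local to global. The polyhedral M$^\natural$/L$^\natural$ conjugacy theorem then returns M$^\natural$-concavity of $h$. Alternatively, cite \citet{MurotaShioura2000} for this direction, as the paper does.
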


\begin{theorem}[\cite{MurotaShioura2000,murota2003}]\label{thm:optimality}
For a polyhedral M$^\natural$-concave function $h\colon \RR^A\to\RRbar$ and $x\in\dom(h)$, 
we have $h(x)\ge h(y)$ for all $y\in\RR^A$ if and only if 
\begin{align}
h'(x;-\chi_i+\chi_j)\le 0 \quad(\forall i,j\in A\cup\{\varnothing\}),
\end{align}
where $h'(x;d)\coloneqq \lim_{\alpha\searrow 0}\frac{h(x+\alpha d)-h(x)}{\alpha}$ denotes the one-sided directional derivative of $h$ at $x\in\dom(h)$ in direction $d\in\RR^A$.
\end{theorem}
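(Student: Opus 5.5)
The statement to prove is \Cref{thm:optimality}: for a polyhedral M$^\natural$-concave $h$ and $x\in\dom(h)$, $x$ is a global maximizer if and only if $h'(x;-\chi_i+\chi_j)\le 0$ for all $i,j\in A\cup\{\varnothing\}$.

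\textbf{Necessity.} This direction is immediate. If $x$ maximizes $h$, then $h(x+\alpha d)-h(x)\le 0$ for every $\alpha>0$ and every direction $d$. Hence every one-sided directional derivative is nonpositive, in particular along $d=-\chi_i+\chi_j$.

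\textbf{Sufficiency: reduction to one direction.} I argue by contradiction. Suppose some $y$ has $h(y)>h(x)$. Because $h$ is polyhedral concave, the directional derivative $h'(x;\cdot)$ is a positively homogeneous concave polyhedral function. Concavity gives
\[
h'(x;y-x)\ge h(y)-h(x)>0.
\]
The goal is therefore to show that $h'(x;d)\le 0$ for every $d$, using only the hypotheses on the exchange directions $-\chi_i+\chi_j$.

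\textbf{Sufficiency: decomposing $d=y-x$.} I would decompose $d$ into a nonnegative combination of exchange vectors $\chi_j-\chi_i$ (with $\varnothing$ allowed) along which the local gains add up. This is where M$^\natural$-concavity is used.

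1. Apply the exchange axiom to $x$ and $y$ with $i\in\supp^+(x-y)$ to obtain $j$ and $\eta_0>0$ with
\[
h(x)+h(y)\le h\bigl(x-\eta(\chi_i-\chi_j)\bigr)+h\bigl(y+\eta(\chi_i-\chi_j)\bigr)\quad(\eta\in[0,\eta_0]).
\]
2. By the hypothesis at $x$, the term $h(x-\eta(\chi_i-\chi_j))$ is at most $h(x)+o(\eta)$. Polyhedrality makes $h$ affine along this segment near $x$, so this term is at most $h(x)$ exactly for small $\eta$. Hence $h(y+\eta(\chi_i-\chi_j))\ge h(y)$, so $y':=y+\eta(\chi_i-\chi_j)$ is still strictly better than $x$ and strictly closer to $x$ in the $\ell_1$ sense.
3. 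The case $\supp^+(x-y)=\emptyset$ is handled by the symmetric form of the exchange axiom, exchanging from $y$'s side instead.

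\textbf{Main obstacle: termination of the iteration.} Iterating step 2 reduces $\|y-x\|_1$, but the steps $\eta_0$ could shrink and the process might not terminate. I would handle this with polyhedrality. The hypograph is a finite union of linear regions, so on the segment from $x$ toward $y$ the function is piecewise linear with finitely many breakpoints. I would first replace $y$ by a point $x+t(y-x)$ for small $t>0$, which still satisfies $h>h(x)$ by concavity and lies in a single linearity cell containing $x$. Inside that cell $h$ is affine, say $h(z)=h(x)+\langle c,z-x\rangle$, and the cell is a polyhedron whose edge directions can be taken to be exchange vectors, since polyhedral M$^\natural$-concave functions have g-polymatroidal linearity domains (cf.\ \Cref{thm:Mconcave-g}). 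Writing the direction as
\[
y-x=\sum_k\lambda_k(\chi_{j_k}-\chi_{i_k}),\qquad \lambda_k\ge 0,
\]
with each exchange direction feasible in the cell, linearity gives $h(y)-h(x)=\sum_k\lambda_k h'(x;\chi_{j_k}-\chi_{i_k})\le 0$. This contradicts $h(y)>h(x)$.

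The delicate point is justifying this conformal decomposition of a g-polymatroid's tangent cone at $x$ into exchange directions. I would take it from the standard edge characterization of g-polymatroids, whose edges are parallel to $\chi_j-\chi_i$ or $\pm\chi_i$.
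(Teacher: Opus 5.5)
The paper gives no proof of this statement; it is quoted from \citet{MurotaShioura2000} and \citet{murota2003}, so your proposal has to stand on its own. Its final argument does stand. You replace $y$ by $y_t=x+t(y-x)$, put $x$ and $y_t$ in one cell on which $h$ is affine, use that this cell is a g-polymatroid, and split $y_t-x$ into exchange directions feasible at $x$. That is a valid proof, but two steps each need one more sentence.

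First, you need to land in a set covered by \Cref{thm:Mconcave-g}. The map $t\mapsto h(x+t(y-x))$ is affine on some $[0,t_0]$. Take any supergradient $p$ of $h$ at the midpoint $x+\tfrac{t_0}{2}(y-x)$. Then $h-\langle p,\cdot\rangle$ is affine on that segment and maximized at an interior point, hence constant there. So $F=\argmax\{h(z)-\langle p,z\rangle\}$ contains $x$ and $y_t$ for $t\le t_0$, and $h$ is affine on $F$. If $F$ is unbounded, intersect it with a box having $x$ in its interior.

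Second, the edge characterization tells you the directions of edges, but $x$ need not be a vertex of $F$. You also need that $\mathrm{cone}(F-x)$ is generated by edge directions that are feasible at $x$. These are both orientations of the edges of the minimal face $G\ni x$, together with the edges at one vertex $v$ of $G$. The latter are feasible at $x$ because $x-v$ is feasible at $x$, so $\mathrm{cone}(F-v)\subseteq\mathrm{cone}(F-x)$. The decomposition need not be conformal: the identity $h(y_t)-h(x)=\sum_k\lambda_k h'(x;e_k)$ uses only nonnegativity and feasibility at $x$.

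The termination issue that made you abandon the exchange iteration (your steps 1--3, which are correct as stated) disappears if you start from the right point, and this gives a more elementary proof. The superlevel set $\{z\mid h(z)\ge h(y)\}$ is nonempty and closed, so it contains a point $z^*$ minimizing $\|z-x\|_1$, and $z^*\ne x$. Apply the exchange axiom to the pair $(x,z^*)$ with $i\in\supp^+(x-z^*)$, or to $(z^*,x)$ with $i\in\supp^+(z^*-x)$ if the first set is empty. In either case the term at $x$ moves along an exchange direction $d$. Concavity alone gives $h(x+\eta d)\le h(x)+\eta h'(x;d)\le h(x)$ for every $\eta>0$. Hence the moved copy of $z^*$ has value at least $h(z^*)\ge h(y)$ and, for small $\eta$, is strictly closer to $x$ in $\ell_1$, a contradiction.

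This argument uses only the paper's definition of polyhedral M$^\natural$-concavity, concavity, and closedness of superlevel sets. Your cell argument instead rests on two substantial imported facts: \Cref{thm:Mconcave-g} and the edge characterization of g-polymatroids. In return, your route shows extra structure: within a linearity cell, the gain $h(y_t)-h(x)$ splits additively over feasible exchange directions.
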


The \emph{integral neighborhood} of $\pi\in\RR^A$ is $N(\pi)\coloneqq\left\{x\in\ZZ^A\mid \lfloor\pi_a\rfloor\le x_a\le\lceil\pi_a\rceil\ \ (\forall a\in A)\right\}.$

For a function $g\colon \ZZ^A \to \RRbar$, its \emph{local concave closure} $\tilde{g} \colon \RR^A \to \RRbar$ is defined, for each $\pi\in\RR^A$, by
\begin{align}
\textstyle
\tilde{g}(\pi)\coloneqq\max\left\{\sum_{x\in N(\pi)}\lambda_{x}g(x) \ \mid 
\sum_{x\in N(\pi)}\lambda_x x=\pi,  
\sum_{x\in N(\pi)}\lambda_x=1,\  
\lambda_x\ge 0~(\forall x\in N(\pi))\right\}.
\end{align}

The next theorem shows that global and local closures coincide for M$^\natural$-concave functions, which allows us to work with local neighborhoods while retaining M$^\natural$-concavity.
\begin{theorem}[\cite{MurotaShioura2000,murota2003}]\label{thm:local-global-closure}
For any M$^\natural$-concave function $g\colon \ZZ^A\to\RRbar$ with finite $\dom(g)$, its concave closure $\overline{g}\colon \RR^A\to\RRbar$ and local concave closure $\tilde{g}\colon \RR^A\to\RRbar$
coincide, and they are polyhedral M$^\natural$-concave.
\end{theorem}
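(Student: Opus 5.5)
The inequality $\tilde g\le \overline g$ is the easy half; for the reverse I would reduce to a statement about a single M$^\natural$-convex set, and then read off polyhedrality and M$^\natural$-concavity from the description of the argmax sets.

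First, since $\dom(g)$ is finite, I would rewrite $\overline g$ by LP duality:
\begin{align}
\overline g(\pi)=\max\Bigl\{\textstyle\sum_{x\in\dom(g)}\lambda_x g(x)\mid \sum_x\lambda_x x=\pi,\ \sum_x\lambda_x=1,\ \lambda\ge 0\Bigr\},
\end{align}
with value $-\infty$ if infeasible. The defining infimum over $(p,\alpha)$ is exactly the dual of this LP. Thus $\overline g$ is the concave envelope of finitely many points $(x,g(x))$. Its hypograph is the convex hull of these points plus the downward ray, so $\overline g$ is polyhedral concave. Every feasible $\lambda$ in the definition of $\tilde g(\pi)$ is feasible here, because terms with $g(x)=-\infty$ may be assumed absent whenever $\tilde g(\pi)>-\infty$. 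Hence $\tilde g\le\overline g$.

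For the reverse inequality, fix $\pi$ with $\overline g(\pi)>-\infty$, and take an optimal dual pair $(p,\alpha)$ and an optimal primal $\lambda$. By complementary slackness, every $x$ in the support of $\lambda$ satisfies $g(x)=p^\top x+\alpha$. So the support lies in
\begin{align}
Q\coloneqq\argmax\{g(x)-p^\top x\},
\end{align}
which is M$^\natural$-convex by the argmax characterization of \citet{murota1999m}. Consequently $\pi\in\conv(Q)$.

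Now write $Q=P\cap\ZZ^A$ for an integral g-polymatroid $P$. By \Cref{lem:matroid-polytope-integral}, $P=\conv(Q)\ni\pi$. Next I would intersect $P$ with the integral box $\{y\mid\lfloor\pi\rfloor\le y\le\lceil\pi\rceil\}$. The intersection of an integral g-polymatroid with an integral box is again an integral g-polymatroid; this is a standard fact, and I would verify it via the M$^\natural$ exchange axiom on integer points, or by citing \citet{murota2003}. It contains $\pi$, so \Cref{lem:matroid-polytope-integral} writes $\pi$ as a convex combination of points of $Q\cap N(\pi)$. On these points $g(x)=p^\top x+\alpha$, so this combination attains $p^\top\pi+\alpha=\overline g(\pi)$. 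Hence $\tilde g(\pi)\ge\overline g(\pi)$, and the two closures coincide.

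For M$^\natural$-concavity of $\overline g$, I would use the ``if'' direction of \Cref{thm:Mconcave-g}. Fix $p$. Since $\overline g-p^\top(\cdot)$ is the concave closure of $g-p^\top(\cdot)$, its maximum equals the maximum $M$ of $g-p^\top x$. A point $\pi$ is a maximizer iff the optimal representing $\lambda$ is supported on the discrete maximizers. Therefore
\begin{align}
\argmax\{\overline g(x)-p^\top x\}=\conv\bigl(\argmax\{g(x)-p^\top x\}\bigr),
\end{align}
the convex hull of an M$^\natural$-convex set, which is an (integral) g-polymatroid. Then \Cref{thm:Mconcave-g} gives that $\overline g=\tilde g$ is polyhedral M$^\natural$-concave.

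The step I expect to be the main obstacle is the box-intersection integrality claim together with this argmax identity. If the box claim resists a direct proof, I would instead show that $Q\cap N(\pi)$ is M$^\natural$-convex. The exchange axiom keeps coordinates between the two endpoints, so the set is closed under exchanges. Then $\pi\in\conv(Q\cap N(\pi))$ would follow from \Cref{lem:matroid-polytope-integral} applied to its g-polymatroid hull, combined with a separation argument.
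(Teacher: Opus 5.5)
Your proof is correct. It cannot be compared step by step with the paper, because the paper gives no proof of this statement: it imports it from \citet{MurotaShioura2000} and \citet{murota2003}. There it rests on integral convexity, meaning that M$^\natural$-concave functions have a concave local closure, which then coincides with the global one. Your argument is essentially that route unpacked into elementary steps:
\begin{itemize}
\item LP duality on the finite domain gives polyhedrality and $\tilde g\le\overline g$.
\item Complementary slackness at an optimal dual pair $(p,\alpha)$ puts every optimal lottery on the face $Q=\argmax\{g(x)-p^\top x\}$, which is M$^\natural$-convex.
\item The local representation then comes from $\conv(Q)$ meeting the box around $\pi$ in an integral polytope.
\end{itemize}

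The step you flag as the main obstacle is already available in the paper. $N(\pi)$ is itself M$^\natural$-convex: for $x,y\in N(\pi)$ and $i\in\supp^+(x-y)$, the choice $j=\varnothing$ keeps both $x-\chi_i$ and $y+\chi_i$ in the box. Its convex hull is $\{y\mid\lfloor\pi\rfloor\le y\le\lceil\pi\rceil\}$. So \Cref{thm:Mintersection} with $Q_1=Q$ and $Q_2=N(\pi)$ gives $\pi\in\conv(Q)\cap\conv(N(\pi))=\conv(Q\cap N(\pi))$ in one line.

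Your fallback would not close on its own. M$^\natural$-convexity of $Q\cap N(\pi)$ says nothing about whether $\pi$ lies in its hull without an intersection theorem of exactly this kind. This does not affect correctness, since your main route through the (standard) integrality of a g-polymatroid cut by an integral box is valid.

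Your identity $\argmax\{\overline g(x)-p^\top x\}=\conv(\argmax\{g(x)-p^\top x\})$ is also right. You use only the ``if'' direction of \Cref{thm:Mconcave-g}, so you avoid circularity with its second sentence, which presupposes the present theorem.

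Compared with the bare citation, your version shows that the local representation comes from the same supporting hyperplane as the global one. That is the mechanism the paper later relies on in \Cref{thm:ex-ante-to-ex-post}, when it restricts the stable decomposition to $D_I\cap D_J\cap N(\pi)$. Your complementary-slackness derivation of $\pi\in\conv(Q)$ also avoids the appeal to \Cref{thm:Mconcave-g} used there. The citation route, for its part, is shorter and places the result inside the general theory of integrally convex functions.
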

It is worth noting that for a given $\pi\in\RR^A$, the value $\overline{g}(\pi)$ can be computed in polynomial time by utilizing the ellipsoid method~\citep{GLS1988}, because the separation problem reduces to maximizing an M$^\natural$-concave function.

We also use a basic intersection property under convexification.
In our applications, the M$^\natural$-convex sets arise as integer points of integral g-polymatroids, for instance as effective domains of M$^\natural$-concave functions.
\begin{theorem}[\cite{edmonds1970submodular,Frank1984c}]\label{thm:Mintersection}
For two M$^\natural$-convex sets $Q_1,Q_2\subseteq \ZZ^A$, we have
\begin{align}
    \conv(Q_1)\cap \conv(Q_2)=\conv(Q_1\cap Q_2).
\end{align}
\end{theorem}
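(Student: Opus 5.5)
The inclusion $\conv(Q_1\cap Q_2)\subseteq\conv(Q_1)\cap\conv(Q_2)$ is immediate, so the plan concentrates on the reverse inclusion. I would reduce it to Edmonds' integrality of (g-)polymatroid intersection. Throughout I assume, as in all our applications, that $Q_1,Q_2$ are bounded (finite), so every convex hull involved is a polytope. This boundedness is an assumption of the plan, not something the theorem as stated provides; the unbounded case would need a separate argument for closedness of the convex hulls.

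\textbf{Step 1: Represent each set by a g-polymatroid.} As recalled in Section~\ref{subsec:valuations}, each $Q_k$ equals $P_k\cap\ZZ^A$ for an integral g-polymatroid $P_k=P(\rho_k,\mu_k)$. By Lemma~\ref{lem:matroid-polytope-integral}, $P_k=\conv(Q_k)$. It therefore suffices to show that the polytope $P_1\cap P_2$ is integral, meaning every vertex is an integer vector. Indeed, such a vertex then lies in $P_1\cap P_2\cap\ZZ^A=Q_1\cap Q_2$. Hence $P_1\cap P_2$, being the convex hull of its vertices, is contained in $\conv(Q_1\cap Q_2)$. The argument also covers emptiness: if $P_1\cap P_2\neq\emptyset$ it has a vertex, which is an integer point in $Q_1\cap Q_2$.

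\textbf{Step 2: Integrality of $P_1\cap P_2$ by uncrossing.} Fix a vertex $x$ of $P_1\cap P_2$. The vertex $x$ is the unique solution of its tight constraints, which are of two kinds: $x(S)=\rho_k(S)$ and $x(T)=\mu_k(T)$.
\begin{itemize}
\item For one g-polymatroid, the tight upper sets are closed under $\cup$ and $\cap$ by submodularity of $\rho_k$. The tight lower sets are closed under $\cup$ and $\cap$ by supermodularity of $\mu_k$.
\item For a tight upper set $S$ and a tight lower set $T$, the cross inequality $\rho_k(S)-\mu_k(T)\ge\rho_k(S\setminus T)-\mu_k(T\setminus S)$, combined with $x\in P_k$, forces $S\setminus T$ to be tight upper and $T\setminus S$ to be tight lower.
\end{itemize}
Using these closure properties, the standard uncrossing argument shows that the tight constraints of each $P_k$ at $x$ span the same space as the incidence vectors of a single laminar family $\mathcal{L}_k$.

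\textbf{Step 3: Total unimodularity.} The constraint matrix with rows $\{\chi_S: S\in\mathcal{L}_1\cup\mathcal{L}_2\}$ is the union of two laminar families, and such a matrix is totally unimodular. The right-hand sides $\rho_k,\mu_k$ are integers, so Cramer's rule gives $x\in\ZZ^A$.

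\textbf{Main obstacle and fallback.} I expect the main obstacle to be Step 2 in the mixed setting, where upper and lower constraints of the same g-polymatroid cross. The delicate point is showing that the resulting family can be made laminar without losing rank. If the direct uncrossing is awkward, I would instead use the standard lifting of a g-polymatroid to a base polyhedron:
\begin{itemize}
\item Add a new element $0$ and a constant $M$ that is large relative to the bounded sets $Q_1,Q_2$.
\item Map $x\mapsto(M-x(A),x)$. This sends $P(\rho_k,\mu_k)$ affinely and unimodularly onto an integral base polyhedron $B(r_k)$ on $A\cup\{0\}$, with $r_k$ submodular and integer-valued.
\end{itemize}
The map is unimodular, so it preserves integrality of vertices and commutes with intersections. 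The claim then reduces to Edmonds' polymatroid intersection theorem, which states that $B(r_1)\cap B(r_2)$ has integral vertices. That theorem in turn is proved by exactly the chain-uncrossing and total-unimodularity argument of Steps 2 and 3, now with only upper constraints plus a single equality.
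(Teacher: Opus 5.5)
Your proposal is correct and is essentially the classical Edmonds--Frank argument that the paper invokes only by citation: realize each $Q_k$ as the integer points of an integral g-polymatroid $P_k=\conv(Q_k)$, then prove $P_1\cap P_2$ integral by uncrossing plus total unimodularity of a union of two laminar families (or, equivalently, lift to base polyhedra and apply Edmonds' intersection theorem). The mixed step you flag is routine---if a tight upper set $S$ crosses a tight lower set $T$, then $S\setminus T$ and $T\setminus S$ are tight, $\chi_S-\chi_T=\chi_{S\setminus T}-\chi_{T\setminus S}$, and each crosses strictly fewer members of the current laminar family than the offending set, so the usual minimal-crossing argument goes through---and your boundedness assumption costs nothing here, since the paper applies the theorem only to the finite sets $D_I$ and $D_J$.
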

This can be seen as a generalization of the Birkhoff--von Neumann theorem.

\subsection{Choice Correspondences and Choice Functions}\label{subsec:choice-correspondences-functions}
Let $D\subseteq \RR_+$ with $0\in D$.
We work on the product domain $D^A$.

A \emph{choice function} $C\colon D^A\to D^A$ is a function that satisfies $C(\pi)\le \pi$ for all $\pi\in D^A$. Here, $C(\pi)$ is interpreted as the most preferred vector from an available vector $\pi$. Such a set is not uniquely determined when preferences involve ties or indifferences. 
A \emph{choice correspondence} is a set-valued function $\mathcal{C}\colon D^A \rightrightarrows D^A$ such that $\mathcal{C}(\pi)\subseteq\{x\in D^A\mid x\le \pi\}$ for all $\pi\in D^A$.

This definition covers integer-valued domains $D=\ZZ_+$ and fractional domains $D=\RR_+$.
When $D=\{0,1\}$, we identify vectors in $\{0,1\}^A$ with subsets of $A$ and write $C(X)$ and $\mathcal{C}(X)$ for the single-valued choice and the choice correspondence, respectively, from $X\subseteq A$.

A choice function $C$ on $D^A$ is \emph{path-independent} (PI) if
\begin{align}
C(x\vee y)=C(C(x)\vee y)\qquad(\forall x,y\in D^A),
\end{align}
where $\vee$ is the componentwise maximum.
In the binary case this reduces to $C(X\cup Y)=C(C(X)\cup Y)$.\footnote{\citet{plott1973path} introduces the concept of path-independence in the binary setting.}

We next introduce properties that will be used for fractional choice functions.
A choice function $C\colon\mathbb{R}^A_+\to\mathbb{R}^A_+$ is \emph{continuous} if
$x^m\to x$ implies $C(x^m)\to C(x)$.
It is \emph{consistent} if, for all $\xi,\xi'\in\RR_+^A$ satisfying $C(\xi)\le \xi'\le \xi$, it holds that $C(\xi')=C(\xi)$.
It is \emph{persistent} if, for all $\xi,\xi'\in\RR_+^A$ satisfying $\xi\ge\xi'$, it holds that $C(\xi')\ge C(\xi)\wedge\xi'$.
It is \emph{size-monotone} if, for all $\xi,\xi'\in\RR_+^A$ satisfying $\xi\ge\xi'$, it holds that $\|C(\xi)\|_1\ge \|C(\xi')\|_1$.

In the binary case, consistency and persistence reduce to
$C(X)\subseteq Y\subseteq X \Rightarrow C(Y)=C(X)$, and
$C(X)\cap Y\subseteq C(Y)$ for all $Y\subseteq X$, respectively.
In this binary setting, PI is equivalent to consistency and persistence jointly \citep{aizerman1981general}.

In \Cref{sec:cardinal} we define induced choice functions via concave closures and a strict convex tie-breaking rule, and the resulting choice function is consistent, persistent, and size-monotone.

\section{Model and Stability}\label{sec:model}
\nosectionappendix

\subsection{Market}\label{subsec:market}
For a nonnegative integer $n\in\ZZ_+$, we write $[n]$ to denote $\{1,2,\dots,n\}$.

A market is a tuple $(I,J,(v_k)_{k\in I\cup J})$ where $I=\{i_1,i_2,\dots,i_n\}$ is the set of workers, $J=\{j_1,j_2,\dots,j_m\}$ is the set of firms, and $A=I\times J$ is the edge set.
For each $i\in I$, let $A_i=\{i\}\times J$ and let $v_i\colon \ZZ^{A_i}\to\RRbar$ be M${}^\natural$-concave.
For each $j\in J$, let $A_j=I\times\{j\}$ and let $v_j\colon \ZZ^{A_j}\to\RRbar$ be M${}^\natural$-concave.
Fix $q\in\ZZ_+$ such that $\dom(v_k)\subseteq \{0,1,\dots,q\}^{A_k}$ for all $k\in I\cup J$, and assume $v_k(\bm{0})=0$ for all $k$.

Recall that $\overline{v}_k$ denotes the concave closure of $v_k$.
For each $k\in I\cup J$, we define choice correspondences $\Chcd_k\colon \ZZ_+^{A_k}\rightrightarrows \ZZ_+^{A_k}$ 
and $\Chcf_k\colon \RR_+^{A_k}\rightrightarrows \RR_+^{A_k}$ as
\begin{align}
\Chcd_k(x_k)&\coloneqq \argmax\{v_k(y) \mid y\le x_k,\ y\in\ZZ_+^{A_k}\}
\qquad(\forall x_k\in \ZZ_+^{A_k}),\\
\Chcf_k(x_k)&\coloneqq \argmax\{\overline{v}_k(y) \mid y\le x_k,\ y\in\RR_+^{A_k}\}
\qquad(\forall x_k\in \RR_+^{A_k}).
\end{align}

By \Cref{thm:local-global-closure}, the concave closure $\overline{v}_k$ is polyhedral M$^\natural$-concave.
The truncation of $\overline{v}_k$ preserves polyhedral M$^\natural$-concavity, and hence its maximizer set is a (non-empty) g-polymatroid.

\subsection{Integral Allocations and Stability}\label{subsec:integral-allocations}
An \emph{integral allocation} is a nonnegative integer function $x\colon A\to\ZZ_+$ (equivalently, $x\in\ZZ_+^{A}$).
The value $x_{ij}=x(i,j)$ represents the number of indivisible contract units (e.g., shifts or positions) in which $i\in I$ works for $j\in J$.
For each $i\in I$ and $j\in J$, we write $x_i=x|_{A_i}$ and $x_j=x|_{A_j}$.
We identify $x_i$ with the vector $(x_{ij})_{j\in J}\in\ZZ^J$ and $x_j$ with $(x_{ij})_{i\in I}\in\ZZ^I$.
We use the same notation for fractional allocations.
We say that an integral allocation $x$ is feasible if $x_k\in\dom(v_k)$ for all $k\in I\cup J$.
Since $\dom(v_k)\subseteq\{0,1,\dots,q\}^{A_k}$ for each $k\in I\cup J$, any feasible integral allocation $x$ must belong to $\{0,1,\dots,q\}^{A}$. Hence, there are only finitely many feasible integral allocations.

\begin{definition}
An integral allocation $x\in\ZZ^A$ is called \emph{stable} if it satisfies the following properties:
\begin{itemize}
    \item for every $k\in I\cup J$, it holds that $x_k\in \Chcd_k(x_k)$, and
    \item for every pair $(i,j)\in I\times J$, either $x_i\in\Chcd_i(x_i\vee q\chi_j)$ or $x_j\in\Chcd_j(x_j\vee q\chi_i)$.
\end{itemize}
\end{definition}
This stability notion follows the discrete-concave model; see \citet{MurotaYokoi2015} and \citet{FujishigeTamura2006}.
The first condition captures \emph{individual rationality}, requiring that each agent weakly prefers her current bundle to any feasible alternative given her constraints.
The second condition rules out \emph{blocking pairs}: there is no pair of agents who can jointly deviate, under the relevant capacity expansion, in a way that strictly benefits both sides.
Capacity expansion provides a local test for joint deviations.
If $(i,j)$ can profitably deviate, M$^\natural$-concavity yields a unit move toward $j$ for $i$ (possibly dropping one $j'$), and a symmetric move for $j$.
The expansion $x_i\vee q\chi_j$ and $x_j\vee q\chi_i$ make such a unit trade feasible in a single coordinate.
Lemma~\ref{lem:ex-ante-equiv} gives the fractional counterpart of this local characterization.

\subsection{Fractional Allocations and Lotteries}\label{subsec:fractional-allocations}
A \emph{fractional allocation} is a nonnegative function $\pi\colon A\to\RR_+$ (equivalently, $\pi\in\RR_+^A$).
We say that a fractional allocation $\pi$ is feasible if $\pi_k\in\dom(\overline{v}_k)$ for all $k\in I\cup J$.
\begin{definition}
A fractional allocation $\pi\in\RR_+^A$ is called \emph{ex ante stable} if
\begin{itemize}
    \item for every $k\in I\cup J$, it holds that $\pi_k\in\Chcf_k(\pi_k)$, and
    \item for every pair $(i,j)\in I\times J$, either $\pi_i\in\Chcf_i(\pi_i\vee q\chi_j)$ or $\pi_j\in\Chcf_j(\pi_j\vee q\chi_i)$.
\end{itemize}
\label{def:ex-ante-stable}
\end{definition}
This mirrors integral stability through the replacement of discrete values with concave closures, thereby capturing expected utility under lotteries.
The value $\overline{v}_k(\pi_k)$ is the maximal expected utility among lotteries on integer bundles with mean $\pi_k$.
An arbitrary lottery with mean $\pi_k$ can yield a strictly smaller expected utility.
Therefore our stability definition is a joint implementability concept.
\Cref{thm:ex-ante-to-ex-post} shows that if $\pi$ is ex ante stable, then there exists a feasible lottery with mean $\pi$ that attains $\overline{v}_k(\pi_k)$ for every agent.
The same lottery works for all agents at once, because its support is chosen from the common intersection $D_I\cap D_J$ in the proof of \Cref{thm:ex-ante-to-ex-post}, and each support point satisfies all agents' supporting-hyperplane equalities.
Hence the blocking comparisons in the definition are based on realizable utilities.

The next lemma gives a local characterization of ex ante stability.
Recall that $\chi_i$ and $\chi_j$ are unit vectors in the relevant coordinates, and $\vee$ denotes componentwise maximum.
\begin{lemma}\label{lem:ex-ante-equiv}
For a feasible fractional allocation $\pi\in\RR_+^A$, the following are equivalent.
\begin{itemize}
    \item[(i)] $\pi$ is ex ante stable as in Definition~\ref{def:ex-ante-stable}.
    \item[(ii)] The following conditions hold.
    \begin{itemize}
        \item[(a)] For every $i\in I$, $j\in J$, and $\epsilon>0$, it holds that $\overline{v}_i(\pi_i)\ge \overline{v}_i(\pi_i-\epsilon\chi_j)$.
        \item[(b)] For every $j\in J$, $i\in I$, and $\epsilon>0$, it holds that $\overline{v}_j(\pi_j)\ge \overline{v}_j(\pi_j-\epsilon\chi_i)$.
        \item[(c)] For every pair $(i,j)\in I\times J$, there do not exist $j'\in J\cup\{\varnothing\}$, $i'\in I\cup\{\varnothing\}$, and $\epsilon>0$ such that
        $\overline{v}_i(\pi_i+\epsilon(\chi_j-\chi_{j'}))>\overline{v}_i(\pi_i)$ and
        $\overline{v}_j(\pi_j+\epsilon(\chi_i-\chi_{i'}))>\overline{v}_j(\pi_j)$.
    \end{itemize}
\end{itemize}
\end{lemma}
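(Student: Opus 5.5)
The plan is to reduce both conditions in Definition~\ref{def:ex-ante-stable} to the local optimality criterion of \Cref{thm:optimality}. For each test we apply it to a \emph{truncated} function $h(y)=\overline{v}_k(y)$ if $y\le b$ and $h(y)=-\infty$ otherwise, for a suitable bound $b$. The truncation is the restriction of the polyhedral M$^\natural$-concave function $\overline{v}_k$ (\Cref{thm:local-global-closure}) to a box, so $h$ is again polyhedral M$^\natural$-concave, as the paper already notes after defining $\Chcf_k$. Feasibility of $\pi$ gives $\pi_k\in\dom(h)$, so \Cref{thm:optimality} applies.

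Two facts about concave polyhedral functions will be used repeatedly.
\begin{enumerate}
\item[(F1)] If $h'(x;d)\le 0$, then $\epsilon\mapsto h(x+\epsilon d)$ is nonincreasing on $[0,\infty)$, because it is concave in $\epsilon$.
\item[(F2)] We have $h'(x;d)>0$ if and only if $h(x+\epsilon d)>h(x)$ for some $\epsilon>0$. In that case the strict inequality holds for all sufficiently small $\epsilon>0$, since the function is piecewise linear along the ray and concave.
\end{enumerate}

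\textbf{First bullet versus (a) and (b).} Take $b=\pi_k$. Any direction $-\chi_{a}+\chi_{a'}$ with $a'\neq\varnothing$ leaves the box, so $h'=-\infty$ there. Hence the criterion of \Cref{thm:optimality} reduces to $h'(\pi_k;-\chi_a)\le 0$ for every $a$. By (F1) this is equivalent to $\overline{v}_k(\pi_k)\ge\overline{v}_k(\pi_k-\epsilon\chi_a)$ for all $\epsilon>0$. Conversely, if this inequality holds for every $\epsilon>0$, then the derivative is $\le 0$. Specializing $k=i$ and $k=j$ gives exactly $\pi_k\in\Chcf_k(\pi_k)\iff$ (a) and (b).

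\textbf{Second bullet versus (c), assuming (a) and (b).} Fix $(i,j)$ and take $b=\pi_i\vee q\chi_j$. The admissible directions $-\chi_{j'}+\chi_{j''}$ fall into two cases.
\begin{itemize}
\item If $j''\notin\{j,\varnothing\}$, the direction leaves the box, so $h'=-\infty$.
\item If $j''=\varnothing$, the direction is a pure decrease. I claim $h'\le 0$ here: the first bullet already gives $\pi_i\in\Chcf_i(\pi_i)$, so \Cref{thm:optimality} applied with $b=\pi_i$ yields $\overline{v}_i'(\pi_i;-\chi_{j'})\le 0$. This is the same one-sided derivative as in the larger box.
\end{itemize}
So $\pi_i\in\Chcf_i(\pi_i\vee q\chi_j)$ fails if and only if $\overline{v}_i'(\pi_i;\chi_j-\chi_{j'})>0$ for some $j'\in J\cup\{\varnothing\}$; note $j'=j$ gives the zero direction and is harmless. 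By (F2), this holds if and only if $\overline{v}_i(\pi_i+\epsilon(\chi_j-\chi_{j'}))>\overline{v}_i(\pi_i)$ for some $\epsilon>0$.

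There is a side issue about whether such a point lies in the box, i.e., whether the inequality $\pi_{ij}+\epsilon\le q$ holds. Since $\dom(\overline{v}_i)\subseteq[0,q]^{A_i}$, any strictly improving point automatically satisfies it. Symmetrically, the $j$-side condition fails if and only if $\overline{v}_j(\pi_j+\epsilon'(\chi_i-\chi_{i'}))>\overline{v}_j(\pi_j)$ for some $i'$ and some $\epsilon'>0$.

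The second bullet fails for $(i,j)$ exactly when both sides fail. Condition (c), however, asks for a \emph{common} $\epsilon$. This is handled by (F2): each strict improvement persists for all smaller step sizes, so taking the minimum of the two step sizes gives a common $\epsilon$. Combining the two parts gives (i)$\iff$(ii).

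\textbf{Main obstacle.} I expect the delicate step to be the justification that truncating $\overline{v}_k$ to a box keeps it polyhedral M$^\natural$-concave with $\pi_k$ in its domain, so that \Cref{thm:optimality} applies. The paper asserts this after defining $\Chcf_k$, and I would cite that remark. A second point needing care is checking that the set of exchange directions tested in \Cref{thm:optimality} matches the $j'$ and $i'$ ranges in (c), including the outside option $\varnothing$.
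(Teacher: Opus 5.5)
Your proposal is correct and follows essentially the paper's route. Both arguments reduce each membership $\pi_k\in\Chcf_k(\cdot)$ to the local criterion of \Cref{thm:optimality}, applied to the truncation of $\overline{v}_k$. Both then turn a positive directional derivative into a strict improvement for all small $\epsilon$, which yields the common $\epsilon$ in (c). Your version is somewhat more explicit than the paper's in three places: you dispose of the pure-decrease directions in the expanded box using the first bullet, you check that an improving point automatically satisfies $\pi_{ij}+\epsilon\le q$, and you justify the common $\epsilon$ by concavity.
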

\begin{proof}
Assume (i). For any $i$ and $j$, the individual rationality condition implies $\overline{v}_i(\pi_i)\ge \overline{v}_i(\pi_i-\epsilon\chi_j)$ for every feasible $\epsilon>0$.
The inequality trivially holds for infeasible $\epsilon$. These two observations yield (ii-a). 
The same holds for each $j\in J$, which gives (ii-b). 
For (ii-c), the no blocking pair condition ensures that, for every pair $(i,j)\in I\times J$, at least one of $\overline{v}_i(\pi_i+\epsilon(\chi_j-\chi_{j'}))\le \overline{v}_i(\pi_i)~(\forall \epsilon>0,\,\forall j'\in J\cup\{\varnothing\})$ and $\overline{v}_j(\pi_j+\epsilon(\chi_i-\chi_{i'}))\le\overline{v}_j(\pi_j)~(\forall \epsilon>0,\,\forall i'\in I\cup\{\varnothing\})$ holds.
Therefore, (ii) holds.

	Conversely, assume (ii). 
	By the truncation property in \Cref{subsec:market}, the restriction of $\overline{v}_i$ to $\{y\mid y\le \pi_i\}$ is polyhedral M$^\natural$-concave, and \Cref{thm:optimality} applies.
	If $\pi_i\notin \Chcf_i(\pi_i)$ for some $i\in I$, then there is $j\in J$ such that $\overline{v}_i'(\pi_i;-\chi_j)>0$ by \Cref{thm:optimality} with the restriction of $\overline{v}_i$ to $\{y\mid y\le \pi_i\}$.
This means that there is $\epsilon>0$ with $\overline{v}_i(\pi_i-\epsilon\chi_j)>\overline{v}_i(\pi_i)$, contradicting (ii-a).
	Thus, $\pi_i\in \Chcf_i(\pi_i)$ for all $i\in I$.
	By a symmetric argument, we have $\pi_j\in \Chcf_j(\pi_j)$ for all $j\in J$.
Finally, suppose that there exists a pair $(i,j)\in I\times J$ such that both $\pi_i\notin \Chcf_i(\pi_i\vee q\chi_j)$ and $\pi_j\notin \Chcf_j(\pi_j\vee q\chi_i)$ hold.
Then, by \Cref{thm:optimality} again, there exist $j'\in J\cup\{\varnothing\}$ and $i'\in I\cup\{\varnothing\}$ such that $\overline{v}_i'(\pi_i;\chi_j-\chi_{j'})>0$ and $\overline{v}_j'(\pi_j;\chi_i-\chi_{i'})>0$.
Consequently, there exists $\epsilon>0$ such that $\overline{v}_i(\pi_i+\epsilon(\chi_j-\chi_{j'}))>\overline{v}_i(\pi_i)$ and $\overline{v}_j(\pi_j+\epsilon(\chi_i-\chi_{i'}))>\overline{v}_j(\pi_j)$, contradicting (ii-c).
Therefore, (i) holds.
\end{proof}

Intuitively, (ii) says that no agent can improve by a small local change, and no pair can improve by a small trade. This local condition is exactly what the global maximizer conditions in (i) enforce.

Let $\cM~(\subseteq \{0,1,\dots,q\}^A)$ be the set of all feasible integral allocations.
A \emph{lottery allocation} is a probability distribution over $\cM$.
We denote the set of all lottery allocations by $\Delta(\cM)$.
We call a lottery allocation \emph{ex post stable} if every integral allocation in its support is stable.
A lottery allocation $\lambda$ induces a fractional allocation $\pi$ such that $\pi(i,j)=\sum_{x\in\cM}\lambda_xx(i,j)$ for every $i\in I$ and $j\in J$.
A fractional allocation $\pi$ is called decomposable into a lottery allocation $\lambda$ if $\lambda$ induces $\pi$.
The induced fractional allocation records the expected assignment under the lottery.
The next example shows that ex post stability does not imply ex ante stability.
In \Cref{sec:cardinal}, we prove that ex ante stability does imply decomposability into ex post stable lotteries.

\begin{example}\label{ex:expost-not-exante}
Let $I=\{i_1,i_2,i_3\}$ and $J=\{j_1,j_2,j_3\}$.
Each agent has unit-demand preferences:
\begin{align*}
i_1 &: j_1 \succ_{i_1} j_2 \succ_{i_1} j_3 \succ_{i_1} \varnothing, &
i_2 &: j_2 \succ_{i_2} j_3 \succ_{i_2} j_1 \succ_{i_2} \varnothing, &
i_3 &: j_3 \succ_{i_3} j_1 \succ_{i_3} j_2 \succ_{i_3} \varnothing, \\
j_1 &: i_2 \succ_{j_1} i_3 \succ_{j_1} i_1 \succ_{j_1} \varnothing, &
j_2 &: i_3 \succ_{j_2} i_1 \succ_{j_2} i_2 \succ_{j_2} \varnothing, &
j_3 &: i_1 \succ_{j_3} i_2 \succ_{j_3} i_3 \succ_{j_3} \varnothing.
\end{align*}
This ordinal profile can be represented in the cardinal model as follows.
For each agent $k\in I\cup J$, assign utilities $3,2,1,0$ to its first, second, third, and outside option, respectively, and define
$v_k(x)=\max_{a} u_k(a)x_a$ if $\sum_a x_a\le 1$ and $v_k(x)=-\infty$ otherwise.
This is a unit-demand cardinal valuation consistent with the rankings above.
We display allocations as matrices indexed by $I\times J$ for readability.
Let $x$ and $y$ be the integral allocations with incidence matrices and define $\pi=\tfrac{1}{2}x+\tfrac{1}{2}y$:
\[
x=
\begin{pNiceMatrix}[first-row,first-col]
     & j_1 & j_2 & j_3 \\
i_1  & 1 & 0 & 0 \\
i_2  & 0 & 1 & 0 \\
i_3  & 0 & 0 & 1
\end{pNiceMatrix},\quad
y=
\begin{pNiceMatrix}[first-row,first-col]
     & j_1 & j_2 & j_3 \\
i_1  & 0 & 0 & 1 \\
i_2  & 1 & 0 & 0 \\
i_3  & 0 & 1 & 0
\end{pNiceMatrix},\quad
\pi=
\begin{pNiceMatrix}[first-row,first-col]
     & j_1 & j_2 & j_3 \\
i_1  & 1/2 & 0 & 1/2 \\
i_2  & 1/2 & 1/2 & 0 \\
i_3  & 0 & 1/2 & 1/2
\end{pNiceMatrix}.
\]
The allocations $x$ and $y$ are stable under the cardinal valuations above.
Hence, $\pi$ is ex post stable. 
However, $(i_1,j_2)$ is a blocking pair for $\pi$, and $\pi$ is therefore not ex ante stable.
%




\end{example}

\subsection{Fairness Refinements}\label{subsec:fairness-refinements}
We refine ex ante stability to account for indifferences and symmetry among options.
These refinements matter when lotteries split probability mass among similar options.
Without such refinements, agents who are indifferent among options may be treated asymmetrically, leading to outcomes that appear unfair or arbitrary.

Each refinement is a local equal treatment rule.
If one side is indifferent, the other side should not gain from a small swap that is justified only by that indifference.
The $\epsilon$ shifts are a local way to express this and align with the concave closure that defines ex ante stability.
When preferences are strict, the refinements are vacuous.

We formalize this through three refinements that rule out discriminatory reallocations and preserve stability.
\begin{definition}
A fractional allocation $\pi\in\RR_+^A$ has no \emph{ex ante discrimination among firms} if there are no $i\in I$, $i'\in I\cup\{\varnothing\}$, $j\in J$, $j'\in J\cup\{\varnothing\}$, and $\epsilon>0$ such that 
$\overline{v}_i(\pi_i+\epsilon(\chi_j-\chi_{j'}))= \overline{v}_i(\pi_i)$,
$\overline{v}_j(\pi_j+\epsilon(\chi_i-\chi_{i'}))>\overline{v}_j(\pi_j)$, and
$\pi_{ij}<\pi_{ij'}$ if $j'\in J$.
\end{definition}
This rules out reallocations that keep a worker indifferent while shifting probability toward a firm that already receives less from that worker.
The firm then becomes strictly better off.

\begin{definition}
A fractional allocation $\pi\in\RR_+^A$ has no \emph{ex ante discrimination among workers} if there are no $i\in I$, $i'\in I\cup\{\varnothing\}$, $j\in J$, $j'\in J\cup\{\varnothing\}$, and $\epsilon>0$ such that 
$\overline{v}_i(\pi_i+\epsilon(\chi_j-\chi_{j'}))> \overline{v}_i(\pi_i)$,
$\overline{v}_j(\pi_j+\epsilon(\chi_i-\chi_{i'}))= \overline{v}_j(\pi_j)$, and
$\pi_{ij}<\pi_{i'j}$ if $i'\in I$.
\end{definition}
This is the symmetric condition that rules out reallocations favoring a worker while keeping the firm indifferent.

\begin{definition}
A fractional allocation $\pi\in\RR_+^A$ is \emph{ex ante indifference neutral} if there are no $i\in I$, $i'\in I\cup\{\varnothing\}$, $j\in J$, $j'\in J\cup\{\varnothing\}$, and $\epsilon>0$ such that 
$\overline{v}_i(\pi_i+\epsilon(\chi_j-\chi_{j'}))= \overline{v}_i(\pi_i)$,
$\overline{v}_j(\pi_j+\epsilon(\chi_i-\chi_{i'}))= \overline{v}_j(\pi_j)$, 
$\pi_{ij}<\pi_{i'j}$ if $i'\in I$, and 
$\pi_{ij}<\pi_{ij'}$ if $j'\in J$.
\end{definition}
This prevents reallocations that keep both sides indifferent but break fairness across symmetric options.

Taken together, these three conditions ensure that indifferences do not favor one side.

\begin{definition}\label{def:ds-ex-ante-fair}
A fractional allocation $\pi\in\RR_+^A$ is \emph{doubly-strong ex ante stable} if it satisfies ex ante stability, no ex ante discrimination among firms, no ex ante discrimination among workers, and ex ante indifference neutrality.
\end{definition}

The definition of doubly-strong ex ante stability combines stability with fairness across indifferences on both sides.
In one-to-one matching markets, or in one-to-many matching markets with responsive priorities, a local $\epsilon$-swap only redistributes probability among tied partners.
The three refinements above coincide with equal treatment of equals and no justified envy in \citet{kesten2015theory} and \citet{cookson2025fairly}.

\begin{example}\label{ex:local-equal-treatment}
Let $I=\{i_1\}$ and $J=\{j_1,j_2\}$, with unit demand and $i_1$ indifferent between $j_1$ and $j_2$.
Assume that $i_1$ strictly prefers both $j_1$ and $j_2$ to $\varnothing$, and is indifferent between $j_1$ and $j_2$.
If a fractional allocation assigns $\pi_{i_1 j_1}<\pi_{i_1 j_2}$, then shifting a small $\epsilon$ of probability from $j_2$ to $j_1$ keeps $i_1$ indifferent but makes $j_1$ strictly better off.
The first refinement rules out this asymmetric advantage.
For this instance, the doubly strong ex ante stable allocation is unique and assigns probability $1/2$ to each of the matches $(i_1,j_1)$ and $(i_1,j_2)$.
\end{example}

\section{Characterizing Doubly-Strong Ex Ante Stability}\label{sec:cardinal}
\nosectionappendix
In this section, we characterize doubly-strong ex ante stability in the M$^\natural$-concave model. We first show that ex ante stability implies an ex post stable lottery decomposition.
We then introduce the Alkan--Gale framework, derive induced choice functions from value functions, and prove the equivalence between AG-stability and doubly-strong ex ante stability. This equivalence implies existence in our model via the Alkan--Gale existence results.

The role of the equivalence deserves emphasis.
It is the bridge that turns existence and structural results from the Alkan--Gale framework into results for our model, but it does not make the M$^\natural$-concave machinery dispensable.
The M$^\natural$-concave model is what supplies the modeling power, namely the ability to encode quotas, reserves, soft bounds, and overlapping types as valuations (\Cref{sec:applications}), while the discrete convex analysis is what makes the induced choice functions consistent, persistent, and size-monotone so that the Alkan--Gale machinery applies at all.
In other words, the equivalence is the payoff of the framework, not a substitute for it.

\subsection{From Ex Ante to Ex Post Stability}\label{subsec:ex-ante-to-ex-post}

We begin with the decomposition result that links our fractional notion to integral stability.
This theorem validates the interpretation of ex ante stability by showing that the concave-closure utilities are jointly attainable by a feasible lottery.

\begin{theorem}\label{thm:ex-ante-to-ex-post}
If a fractional allocation $\pi\in\RR_+^{A}$ is ex ante stable, then there exists a lottery allocation $\lambda\in\Delta(\cM)$ that induces $\pi$ and is ex post stable.
\end{theorem}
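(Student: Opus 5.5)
The plan is to build, for each agent, the face of the hypograph of $\overline{v}_k$ that supports $\pi_k$, and to intersect the two sides. For each $k\in I\cup J$, take a supergradient of $\overline{v}_k$ at $\pi_k$, that is, $(p^k,\alpha^k)$ with $\sum_a p^k_a y_a+\alpha^k\ge v_k(y)$ for all integral $y$ and equality in the closure at $\pi_k$. A polyhedral argument (LP duality on the definition of $\overline{v}_k$) guarantees that such a pair exists. Let
\[
T_k=\textstyle\argmax\{v_k(y)-\sum_a p^k_a y_a\}
\]
be the set of integral bundles where the supporting hyperplane is tight. Because $v_k$ is M$^\natural$-concave, $T_k$ is M$^\natural$-convex. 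Moreover $\pi_k\in\conv(T_k)$, because $\overline{v}_k(\pi_k)$ is attained by a convex combination of integer points, and each of those points must lie on the tight hyperplane.

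To make stability go through later, the choice of $p^k$ must respect the stability conditions. I will choose $(p^k,\alpha^k)$ so that the tight set $T_k$ is as small as possible: $p^k$ is taken in the relative interior of the superdifferential $\partial\overline{v}_k(\pi_k)$. Then $\conv(T_k)$ is exactly the minimal face containing $\pi_k$. Equivalently, every direction $d$ with $\pi_k+\epsilon d\in\conv(T_k)$ for small $\epsilon$ gives $\overline{v}_k'(\pi_k;d)=\langle p^k,d\rangle$, and directions leaving the face strictly lose relative to $p^k$.

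Next, set $D_I=\prod_{i\in I}T_i$ and $D_J=\prod_{j\in J}T_j$, both viewed as subsets of $\ZZ^A$. Each is a product of M$^\natural$-convex sets on the disjoint blocks $A_i$ (respectively $A_j$), so each is M$^\natural$-convex. Since $\pi\in\conv(D_I)\cap\conv(D_J)$, \Cref{thm:Mintersection} gives $\pi\in\conv(D_I\cap D_J)$. Let $\lambda$ be any such convex decomposition. Every $x\in\supp(\lambda)$ is feasible and satisfies $x_k\in T_k$ for all $k$. Consequently $\sum_x\lambda_x v_k(x_k)=\overline{v}_k(\pi_k)$ for every agent simultaneously.

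It remains to show that each $x$ in the support is stable, using the supporting prices. I will argue by contrapositive through local moves. Suppose $x_i\notin\Chcd_i(x_i)$. By M$^\natural$-concavity there is a unit removal $x_i-\chi_j$ with $v_i(x_i-\chi_j)>v_i(x_i)$. Since $v_i(x_i)=\langle p^i,x_i\rangle+\alpha^i$ while $v_i(x_i-\chi_j)\le\langle p^i,x_i-\chi_j\rangle+\alpha^i$, this forces $p^i_j<0$. Because $x_{ij}\ge1$ and $x\in\supp(\lambda)$, we have $\pi_{ij}>0$. Moving $\pi_i$ by $-\epsilon\chi_j$ then yields $\overline{v}_i(\pi_i-\epsilon\chi_j)\ge\overline{v}_i(\pi_i)-\epsilon p^i_j>\overline{v}_i(\pi_i)$. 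To justify this, shift the $\lambda$-mass on $x$ toward $x-\chi_j$; the move is valid for agent $i$ alone since only $i$'s closure matters. This contradicts Lemma~\ref{lem:ex-ante-equiv}(ii-a).

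Blocking pairs are handled the same way. If $(i,j)$ blocks $x$, the local exchange property of M$^\natural$-concave functions (the unit-trade remark after the integral stability definition) gives $j'$ with $v_i(x_i+\chi_j-\chi_{j'})>v_i(x_i)$, and similarly $i'$ for firm $j$. This yields $p^i_j-p^i_{j'}<0$ together with the analogous inequality for $j$. Mixing a small mass of $x$ toward these integer neighbours then produces $\epsilon>0$ with strict improvements for both $i$ and $j$ at $\pi$, contradicting (ii-c).

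The main obstacle is this mixing step. The perturbed vector $x_i+\chi_j-\chi_{j'}$ may lie outside the box structure, and the improvement must be certified for $\overline{v}_i$ at $\pi_i$ rather than only at $x_i$. I will handle it by the convexity identity
\[
\overline{v}_i(\pi_i+\lambda_x d)\ge\sum_{y\ne x}\lambda_y v_i(y_i)+\lambda_x v_i(x_i+d),
\]
where $d=\chi_j-\chi_{j'}$, which is valid because $\overline{v}_i$ is the concave closure. This gives a strict improvement with $\epsilon=\lambda_x$, and because $\epsilon$ is equal on both sides, (ii-c) is contradicted directly. If equal $\epsilon$ were ever problematic, I would scale by concavity to obtain the same $\epsilon$ for both agents.
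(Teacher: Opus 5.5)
\textbf{Verdict: correct, but the stability check takes a different route from the paper.} Your construction of the lottery is the paper's. You take supporting pairs $(p^k,\alpha^k)$, the tight sets $T_k$ (the paper's $D_k$), the products $D_I,D_J$, and then \Cref{thm:Mintersection}. This yields $\lambda$ with $\sum_x\lambda_x v_k(x_k)=\overline{v}_k(\pi_k)$ for every $k$ simultaneously. Your direct argument for $\pi_k\in\conv(T_k)$ (every point carrying weight in an optimal convex combination must be tight) is a clean substitute for the paper's appeal to \Cref{thm:Mconcave-g}.

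\textbf{How the stability check differs.} The paper takes an arbitrary better bundle witnessing the violation: $y^*\le x^*_k$ for individual rationality, or $y_i\le x^*_i\vee q\chi_j$ and $y_j\le x^*_j\vee q\chi_i$ for a blocking pair. It substitutes that bundle for $x^*$ in the lottery, one agent at a time. The convex-combination lower bound for $\overline{v}_k$ then contradicts Definition~\ref{def:ex-ante-stable} directly. You instead first reduce the violation to a unit move, using the discrete local optimality criterion for the box-restricted M$^\natural$-concave $v_k$, together with individual rationality of $x$ to exclude pure removals in the blocking case. You then contradict Lemma~\ref{lem:ex-ante-equiv}(ii-a)/(ii-c) with $\epsilon=\lambda_x$, which works because the same mixing weight serves both $i$ and $j$. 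Both arguments are valid. The paper's is shorter and shows the verification needs no M$^\natural$ structure at all: any lottery attaining every $\overline{v}_k(\pi_k)$ at once is ex post stable. Yours matches the unit-trade intuition, but it costs an extra theorem that the paper never states.

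\textbf{Unnecessary machinery and slips.} The relative-interior choice of $p^k$ and the price inequalities are dead weight. Any supporting pair works, and your mixing argument never uses them. They also contain errors. From $v_i(x_i)=\langle p^i,x_i\rangle+\alpha^i$ and $v_i(x_i+\chi_j-\chi_{j'})\le\langle p^i,x_i+\chi_j-\chi_{j'}\rangle+\alpha^i$ one gets $p^i_j-p^i_{j'}>0$, not $<0$. The displayed $\overline{v}_i(\pi_i-\epsilon\chi_j)\ge\overline{v}_i(\pi_i)-\epsilon p^i_j$ has the supergradient inequality backwards, since a supergradient only gives $\le$. What the mixing actually yields is $\overline{v}_i(\pi_i-\lambda_x\chi_j)\ge\overline{v}_i(\pi_i)+\lambda_x\bigl(v_i(x_i-\chi_j)-v_i(x_i)\bigr)>\overline{v}_i(\pi_i)$, and that suffices. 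Drop the price detour and state only this bound and its blocking-pair analogue.
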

\begin{proof}
For each $k\in I\cup J$, define
\begin{align}
(p_k,\alpha_k)&\in\argmin_{(\hat{p},\hat{\alpha})\in\RR^{A_k}\times\RR}\left\{\sum_{a\in A_k}\hat{p}_{a}\pi_{ka}+\hat{\alpha}\mid \sum_{a\in A_k}\hat{p}_{a} x_a+\hat{\alpha}\ge v_k(x)\ \ (\forall x\in\ZZ^{A_k})\right\}.
\end{align}
Moreover, define
\begin{align}
D_k&\coloneqq \left\{x\in\ZZ^{A_k}\mid v_k(x)-\sum_{a\in A_k}p_{ka}x_a=\alpha_k\right\}
=\argmax_{x\in\ZZ^{A_k}}\left\{v_k(x)-\sum_{a\in A_k}p_{ka}x_a\right\}.
\end{align}
By \Cref{thm:local-global-closure}, $\overline{v}_k=\tilde{v}_k$ and $\overline{v}_k$ is polyhedral.
Since $N(\pi_k)$ is finite, the maximization that defines $\tilde{v}_k(\pi_k)$ attains an optimum.
Therefore there exists a supporting hyperplane $(p_k,\alpha_k)$ at $\pi_k$, and $D_k$ is nonempty.
	Since the set of maximizers of an M$^\natural$-concave function forms an M$^\natural$-convex set, each $D_k$ is M$^\natural$-convex.
	Define the direct products $D_I\coloneqq \prod_{i\in I}D_i$ and $D_J\coloneqq \prod_{j\in J}D_j$.
	By the product property of M$^\natural$-convex sets, both $D_I$ and $D_J$ are M$^\natural$-convex.

By the definition of the concave closure and the choice of $(p_k,\alpha_k)$, we have
$\overline{v}_k(x)\le \sum_{a\in A_k}p_{ka}x_a+\alpha_k$ for all $x\in\RR^{A_k}$, with equality at $x=\pi_k$.
Therefore $\pi_k$ maximizes $\overline{v}_k(x)-\sum_{a\in A_k}p_{ka}x_a$.
By \Cref{thm:Mconcave-g}, the maximizer set is an integral g-polymatroid whose integer points are exactly $D_k$, hence it equals $\conv(D_k)$.
This yields $\pi_k\in\conv(D_k)$.
Thus, we have $\pi\in \conv(D_I)$ and $\pi\in\conv(D_J)$.
Since $\conv(D_I)\cap\conv(D_J)=\conv(D_I\cap D_J)$ by \Cref{thm:Mintersection},
there exists a decomposition of $\pi$ into a lottery allocation $\lambda$ such that $\supp(\lambda)\subseteq D_I\cap D_J$.
By optimality of $(p_k,\alpha_k)$ in the definition of $\overline{v}_k$, we have
$\overline{v}_k(\pi_k)=\sum_{a\in A_k}p_{ka}\pi_{ka}+\alpha_k$ and
$v_k(x_k)=\sum_{a\in A_k}p_{ka}x_{ka}+\alpha_k$ for all $x_k\in D_k$.
Since $\supp(\lambda)\subseteq D_I\cap D_J$, it follows that
\begin{align}
\overline{v}_k(\pi_k)
&=\sum_{a\in A_k}p_{ka}\pi_{ka}+\alpha_k
=\sum_{a\in A_k}p_{ka}\left(\sum_{x\in\cM}\lambda_xx_{ka}\right)+\alpha_k\\
&=\sum_{x\in\cM}\lambda_x\left(\sum_{a\in A_k}p_{ka}x_{ka}+\alpha_k\right)
=\sum_{x\in\cM} \lambda_x v_k(x_k),
\end{align}
for all $k\in I\cup J$.

Suppose, for the sake of contradiction, that $\lambda$ is not ex post stable.
Then, there is an unstable deterministic allocation $x^*\in \supp(\lambda)$, i.e., $\lambda_{x^*}>0$.
\begin{itemize}
    \item Suppose that $x^*_k\not\in\Chcd_k(x^*_k)$ for some $k\in I\cup J$. 
    Then, there exists $y^*\in\ZZ^{A_k}$ such that $v_k(y^*)>v_k(x^*_k)$ and $y^*\le x^*_k$.
    Let $\xi\coloneqq\pi_k-\lambda_{x^*}(x_k^*-y^*)$.
    Consider the lottery that replaces $x^*$ with $y^*$ and keeps other outcomes unchanged.
    Its expected allocation is $\xi$.
    By the convex-combination characterization of $\overline{v}_k$, the expected value under this lottery gives a lower bound on $\overline{v}_k(\xi)$.
    Then, we have $\xi\le \pi_k$ and 
    \begin{align}
        \overline{v}_k(\xi)
        &\ge \sum_{x\ne x^*}\lambda_x v_k(x_k)+\lambda_{x^*} v_k(y^*)
        =\overline{v}_k(\pi_k)+\lambda_{x^*}(v_k(y^*)-v_k(x_k^*))
        >\overline{v}_k(\pi_k).
    \end{align}
    This means that $\pi$ is not ex ante stable, a contradiction.

    \item Suppose that there is a pair $(i,j)\in I\times J$ such that $x^*_i\not\in\Chcd_i(x^*_i\vee q\chi_j)$ and $x^*_j\not\in\Chcd_j(x^*_j\vee q\chi_i)$.
    Choose $y_i\le x^*_i\vee q\chi_j$ and $y_j\le x^*_j\vee q\chi_i$ with
    $v_i(y_i)>v_i(x^*_i)$ and $v_j(y_j)>v_j(x^*_j)$.
    Let $\xi_i\coloneqq \pi_i+\lambda_{x^*}(y_i-x^*_i)~(\le \pi_i\vee q\chi_j)$ and
    $\xi_j\coloneqq \pi_j+\lambda_{x^*}(y_j-x^*_j)~(\le \pi_j\vee q\chi_i)$.
    Then, by the same argument as above, we have
    \begin{align}
    \overline{v}_i(\xi_i)&\ge \overline{v}_i(\pi_i)+\lambda_{x^*}(v_i(y_i)-v_i(x^*_i))>\overline{v}_i(\pi_i),\\
    \overline{v}_j(\xi_j)&\ge \overline{v}_j(\pi_j)+\lambda_{x^*}(v_j(y_j)-v_j(x^*_j))>\overline{v}_j(\pi_j).
    \end{align}
    Therefore $\pi_i\not\in\Chcf_i(\pi_i\vee q\chi_j)$ and $\pi_j\not\in\Chcf_j(\pi_j\vee q\chi_i)$.
    Thus, $\pi$ is not ex ante stable, a contradiction.
\end{itemize}
Therefore, $\lambda$ is ex post stable.
\end{proof}

This theorem shows that ex ante stability yields a stable lottery decomposition.
It implies that any ex ante stable fractional allocation can be implemented as a lottery over ex post stable allocations, which turns the fractional solution into an implementable random allocation mechanism.
The statement is existential: it guarantees at least one such decomposition.
The proof constructs one by choosing a decomposition whose support is contained in $D_I \cap D_J$.
Not every decomposition of $\pi$ must be ex post stable.
Any decomposition of $\pi$ with support in $D_I \cap D_J$ is ex post stable, because those support points lie on supporting faces that attain the concave-closure values at $\pi$.
The convex-combination argument in the proof applies to every support point, and an unstable support point would increase the expected value above the concave closure, a contradiction.
Moreover, one can choose this decomposition to be supported within the integral neighborhood
$D_I \cap D_J \cap N(\pi)$, because the intersection of an M$^\natural$-convex set with coordinate-wise box constraints remains M$^\natural$-convex.

From a computational point of view, the decomposition step reduces to decomposing a point in the convex hull of the intersection of two M$^\natural$-convex sets into integer points.
Such decompositions can be obtained in polynomial-time by using a Carath\'eodory-type decomposition algorithm~\citep{GLS1988} combined with a weighted matroid intersection algorithm~(see, e.g., \citep{schrijver2003combinatorial,murota2003}).

The following example shows that without M$^\natural$-concavity, an ex ante stable fractional allocation can fail to be decomposable.
\begin{example}\label{ex:non-decomposable}
Let $I=\{i_1,i_2,i_3\}$ and $J=\{j_1,j_2\}$.
For each $i\in I$, let $v_i(x)=0$ if $x_{j_1}+x_{j_2}\le 1$, and $v_i(x)=-\infty$ otherwise.
For each firm $j\in J$, we identify its bundle with a vector $x\in\ZZ^I$.
Let
\begin{align*}
\mathcal{F}_{j_1}&=\{(0,0,0),(1,0,0),(0,1,0),(0,0,1),(1,1,0)\},\\
\mathcal{F}_{j_2}&=\{(0,0,0),(1,0,0),(0,1,0),(0,0,1),(0,1,1)\},
\end{align*}
and define $v_{j}(x)=0$ if $x\in\mathcal{F}_j$, and $v_j(x)=-\infty$ otherwise.

Consider the fractional allocation $\pi$ with $\pi_{ij}=\tfrac{1}{2}$ for all $(i,j)\in I\times J$.
We observe that $\pi$ is ex ante stable.
Each worker $i\in I$ is unit demand and $v_i$ is constant on the feasible set, thus $\pi_i\in\Chcf_i(\pi_i)$ and also $\pi_i\in\Chcf_i(\pi_i\vee q\chi_j)$ for every $j\in J$.
For the firms,
\begin{align}
\pi_{j_1}=\tfrac{1}{2}(1,1,0)+\tfrac{1}{2}(0,0,1)
\quad\text{and}\quad
\pi_{j_2}=\tfrac{1}{2}(1,0,0)+\tfrac{1}{2}(0,1,1).
\end{align}
Hence $\pi_j\in\Chcf_j(\pi_j)$ and $\pi_j\in\Chcf_j(\pi_j\vee q\chi_i)$ for each $j\in J$ and $i\in I$.
Thus, $\pi$ is ex ante stable.

In any feasible integral allocation where all three workers are assigned, firm $j_2$ cannot take $i_1$ because $j_1$ would then need to take $\{i_2,i_3\}$, which is infeasible. 
Therefore, $\pi$ cannot be decomposed into feasible integral allocations.
\end{example}



\subsection{Alkan--Gale Framework}\label{subsec:alkan-gale}

We now move to the Alkan--Gale framework, which provides existence and a convergent procedure.
In this subsection, consider a market $(I,J,(C_k)_{k\in I\cup J})$, where $C_k$ denotes a fractional choice function on $\RR_+^{A_k}$ for each $k\in I\cup J$.
We assume that $C_k(\pi)_a\le q$ for every $k\in I\cup J$, $\pi\in\RR_+^A$, and $a\in A_k$.

\begin{definition}[AG-preference]
For an agent $k\in I\cup J$ with choice function $C_k\colon\RR_+^{A_k}\to\RR_+^{A_k}$, and for any two vectors $x,x'\in\RR^{A_k}$, we define that $x\succeq_k^{\rm AG} x'$ if $C_k(x\vee x')=x$.
\end{definition}

Let $\pi$ be a fractional allocation.
For $i\in I$ and $j\in J$, we say that $\pi_i$ is \emph{$j$-satiated} if $C_i(x)_j\le \pi_{ij}$ for all $x\ge \pi_i$.
Similarly, we say that $\pi_j$ is \emph{$i$-satiated} if $C_j(x)_i\le \pi_{ij}$ for all $x\ge \pi_j$.

\begin{definition}
A fractional allocation $\pi\in\RR_+^{A}$ is \emph{AG-stable} if the following two conditions hold:
\begin{itemize}
    \item for every $k\in I\cup J$, it holds that $\pi_k=C_k(\pi_k)$, and
    \item for every pair $(i,j)\in I\times J$, either $\pi_i$ is $j$-satiated or $\pi_j$ is $i$-satiated.
\end{itemize}
\end{definition}

The following lemma shows that non-satiation in coordinate $a$ can be detected by increasing only the availability of coordinate $a$ from the current choice.
\begin{lemma}\label{lem:nonsatiation-expansion}
Let $C\colon\mathbb{R}^{A}_{+}\to \mathbb{R}^{A}_{+}$ be a consistent and persistent choice function. 
Fix $\zeta\in\mathbb{R}^{A}_{+}$ and $a\in A$, and suppose that $C(\zeta)=\zeta$. If $\zeta$ is not
$a$-satiated, then $C(\zeta\vee q\chi_a)_a>\zeta_a$.
\end{lemma}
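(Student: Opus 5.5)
We argue by contraposition in spirit: use non-satiation to produce a large available vector with a big $a$-coordinate in the choice, then transfer this to $\zeta\vee q\chi_a$ using persistence and consistency.

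Since $\zeta$ is not $a$-satiated, there is $x\ge\zeta$ with $C(x)_a>\zeta_a$. Set $y\coloneqq \zeta\vee q\chi_a$. Note that $C(x)_a\le q$ by the standing bound on choice functions, so $C(x)_a\le y_a$. Also $C(x)_b\le x_b$ for every $b$.

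The first idea is to apply persistence to the pair $x\vee y\ge y$. However, this controls $C(y)$ only by $C(x\vee y)\wedge y$, and $C(x\vee y)$ is not directly related to $C(x)$. A cleaner route is to use $z\coloneqq x\wedge y$, which satisfies $\zeta\le z\le x$ and $z_a=\min\{x_a,q\}\ge C(x)_a$ (since $C(x)_a\le x_a$ and $C(x)_a\le q$). Persistence applied to $x\ge z$ gives
\[
C(z)\ge C(x)\wedge z,
\]
so $C(z)_a\ge \min\{C(x)_a,z_a\}=C(x)_a>\zeta_a$.

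It remains to pass from $z$ to $y$. We have $z\le y$, and off coordinate $a$, $z_b=\min\{x_b,\zeta_b\}=\zeta_b=y_b$. Hence $y$ and $z$ differ only in coordinate $a$, with $y_a=q\ge z_a$.

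Now suppose for contradiction that $C(y)_a\le\zeta_a$. Since $C(y)\le y$ and $C(y)_a\le \zeta_a\le z_a$, we get $C(y)\le z\le y$. Consistency then gives $C(z)=C(y)$, so $C(z)_a\le\zeta_a$, contradicting $C(z)_a>\zeta_a$. Therefore $C(\zeta\vee q\chi_a)_a>\zeta_a$.

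The hypothesis $C(\zeta)=\zeta$ is not needed in this argument. The main point requiring care is the use of the bound $C(x)_a\le q$, which guarantees $z_a\ge C(x)_a$. The other delicate point is checking that $C(y)\le z$ holds in every coordinate; this works because $y$ and $z$ agree outside $a$.
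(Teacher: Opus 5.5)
Your proof is correct and follows essentially the paper's argument: persistence moves $C(x)_a>\zeta_a$ to an intermediate vector between $\zeta$ and $\zeta\vee q\chi_a$, and consistency then gives the contradiction. The differences are minor. The paper's intermediate vector is $\zeta\vee C(x)_a\chi_a$, while yours is $x\wedge(\zeta\vee q\chi_a)$. You also apply consistency directly to $C(\zeta\vee q\chi_a)\le z\le \zeta\vee q\chi_a$, skipping the paper's detour through $C(\zeta\vee q\chi_a)=\zeta$, which rightly shows that the hypothesis $C(\zeta)=\zeta$ is superfluous. One small correction: $(\zeta\vee q\chi_a)_a=\max\{\zeta_a,q\}$ rather than $q$, but the inequalities $z_a\ge C(x)_a$ and $z_a\ge\zeta_a$ that you need hold either way.
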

\begin{proof}
Since $\zeta$ is not $a$-satiated, there exists $x\ge \zeta$ such that $C(x)_a>\zeta_a$. 
Define $y\coloneqq\zeta\vee C(x)_a\chi_a$. 
Then $x\ge y\ge \zeta$ and $y\le \zeta\vee q\chi_a$, because $C(x)_a\le q$.
By persistence applied to $x\ge y$, we have $C(y)\ge C(x)\wedge y$.
Since $y_a=C(x)_a$, this implies
\begin{align}
  C(y)_a\ge C(x)_a>\zeta_a. \label{eq:nonsatiation-expansion}
\end{align}

Now let $\xi\coloneqq\zeta\vee q\chi_a$. Suppose, toward a contradiction, that $C(\xi)_a\le \zeta_a$.
Because $\xi_b=\zeta_b$ for every $b\neq a$ and $C(\xi)\le \xi$, we have $C(\xi)\le \zeta$.
Together with $C(\zeta)=\zeta\le \xi$, this gives $C(\xi)=\zeta$ by consistency. 
Since $C(\xi)=\zeta \le y \le \xi$, consistency implies $C(y)=C(\xi)=\zeta$, which contradicts \eqref{eq:nonsatiation-expansion}. 
Therefore $C(\zeta\vee q\chi_a)_a>\zeta_a$.
\end{proof}

The next theorem establishes existence of AG-stable allocations under these conditions.
\begin{theorem}[\cite{AlkanGale2003}]\label{thm:alkan-gale}
For any allocation problem where every agent $k\in I\cup J$ has a continuous, persistent, and consistent choice function $C_k$, there exists an AG-stable allocation $\pi^*$ that dominates all other AG-stable allocations in terms of the AG-preferences of the agents in $I$.
Moreover, if every choice function $C_k$ additionally satisfies size-monotonicity, then the set of all AG-stable allocations forms a distributive lattice under the orderings $(\succeq_i^{\rm AG})_{i\in I}$ and $(\succeq_j^{\rm AG})_{j\in J}$.
\end{theorem}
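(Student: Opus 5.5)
Since this is the Alkan--Gale existence and lattice theorem, the natural plan follows their original argument.

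For existence I would run a generalized fractional deferred-acceptance procedure on the complete lattice $\{0\le \pi\le q\bm{1}\}\subseteq\RR_+^A$. Define the map on ``offer vectors'' as follows. Workers propose using the current availability $\alpha\in\RR_+^A$: they choose $\pi_i=C_i(\alpha_i)$. Firms then choose from what is proposed, $\pi_j'=C_j(\pi_j)$. The rejected amounts $\pi_{ij}-\pi'_{ij}$ are removed, so availability is updated by $\alpha_{ij}\leftarrow\alpha_{ij}$ when $(ij)$ is not rejected and $\alpha_{ij}\leftarrow\pi'_{ij}$ otherwise. More cleanly, I would use the Tarski-style formulation: for each firm-side availability vector $\beta$, let $R(\beta)$ be the ``rejection'' operator. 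Then I would show, using persistence (rejected amounts only grow when more is offered) and consistency, that the composite operator $T$ is monotone on the lattice.

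Tarski's fixed-point theorem then yields a greatest fixed point, which corresponds to the worker-optimal allocation. If I am worried about monotonicity for non-binary vectors, I would instead run the iterative descent $\alpha^{t+1}=T(\alpha^t)$ from $\alpha^0=q\bm{1}$. This produces a decreasing sequence, and continuity of the $C_k$ passes the limit to a fixed point.

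Next I would verify that a fixed point is AG-stable. Individual rationality, $\pi_k=C_k(\pi_k)$, follows from consistency, since $\pi_i=C_i(\alpha_i)\le\pi_i\le\alpha_i$. For the no-blocking condition, suppose $\pi_i$ is not $j$-satiated. By Lemma~\ref{lem:nonsatiation-expansion}, the worker wants more of $j$ than $\pi_{ij}$. Hence $\alpha_{ij}=\pi_{ij}$, so the $ij$ offer was cut by $j$'s rejection. Persistence of $C_j$ then shows that $j$ would choose at most $\pi_{ij}$ from any superset, which means $\pi_j$ is $i$-satiated.

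Worker-optimality comes from the standard induction: no AG-stable allocation is ever ``rejected'' in the descent. Concretely, for any stable $\sigma$ one shows $\sigma\le\alpha^t$ for all $t$ by persistence. Then $\pi^*_i=C_i(\alpha_i)$ together with $\sigma_i\le\alpha_i$ gives $C_i(\pi^*_i\vee\sigma_i)=\pi^*_i$ by consistency.

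For the lattice statement, I would define the join and meet of two stable allocations by $C_i(\pi_i\vee\sigma_i)$ on the worker side and $C_j(\pi_j\vee\sigma_j)$ on the firm side. Size-monotonicity gives a ``rural-hospital''-type equal-size property, which is needed to show that the two definitions agree (opposition of interests) and remain stable. Distributivity then follows from the AG-preference being a lattice order induced by path-independent choice.

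The main obstacle is the opposition-of-interests step in the lattice part, where size-monotonicity must force equality of the two candidate joins. I would first try a total-size comparison argument: compare $\|\cdot\|_1$ across both sides and exploit that the sums over all edges coincide.
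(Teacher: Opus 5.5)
The paper does not prove this theorem; it cites Alkan and Gale and records their deferred-acceptance procedure (Algorithm~\ref{alg:AG-DA}). That procedure is exactly your descent from $q\bm{1}$. Your worker-optimality argument is sound: $\sigma\le\alpha^{(t)}$ for every AG-stable $\sigma$ and all $t$, then consistency at the limit.

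Two points in the existence part need repair.

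\textbf{The Tarski remark.} Drop it. The update map $T$ is not monotone, even on $0$--$1$ vectors. Take two workers $i,i'$ who only want $j$, and let $j$ have capacity one and prefer $i$ to $i'$. With $\alpha_{i'j}=1$, raising $\alpha_{ij}$ from $0$ to $1$ makes $j$ reject $i'$, so $T(\alpha)_{i'j}$ falls from $1$ to $0$. A monotone reformulation needs separate worker-side and firm-side availability vectors, as in Fleiner's fixed-point approach.

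\textbf{The no-blocking step.} This is the more serious gap: it does not follow from persistence alone. Use offers $\mu^{(t)}$ and acceptances $\nu^{(t)}$ as in Algorithm~\ref{alg:AG-DA}, with your $\alpha^{(t)}$ as the availability. The rejection of $(i,j)$ happens at some round $t$ from $\mu^{(t)}_j$. Persistence only bounds $C_j$ on vectors dominating $\mu^{(t)}_j$, whereas $i$-satiation of $\pi_j$ concerns all $x\ge\pi_j$. Moreover, if $j$ rejects $(i,j)$ infinitely often, you only know $\nu^{(t)}_{ij}\downarrow\pi_{ij}$.

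The missing lemma is that $j$'s acceptance is its choice from cumulative offers:
\begin{itemize}
\item Persistence of the workers gives $\mu^{(t+1)}_j\ge\nu^{(t)}_j$.
\item Path independence, which follows from consistency and persistence, then gives $\nu^{(t)}_j=C_j(Y^{(t)}_j)$ with $Y^{(t)}_j\coloneqq\bigvee_{s\le t}\mu^{(s)}_j$.
\item Continuity gives $\pi_j=C_j(Y_j)$ for $Y_j\coloneqq\sup_t Y^{(t)}_j$, where $Y_{ij}\ge\mu^{(t)}_{ij}>\nu^{(t)}_{ij}\ge\pi_{ij}$.
\end{itemize}
Consistency then yields $C_j(\pi_j\vee Y_{ij}\chi_i)=\pi_j$. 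Persistence transfers this to $C_j(\pi_j\vee q\chi_i)_i\le\pi_{ij}$, and \Cref{lem:nonsatiation-expansion} gives $i$-satiation. This is the second place where continuity is genuinely used.

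\textbf{The lattice part.} Your total-size idea is the right one and it works. Set $x_i\coloneqq C_i(\pi_i\vee\sigma_i)$ and $y_j\coloneqq C_j(\pi_j\vee\sigma_j)$.
\begin{itemize}
\item On every edge, $x_{ij}+y_{ij}\le\pi_{ij}+\sigma_{ij}$. This is trivial if $x_{ij}\le\min\{\pi_{ij},\sigma_{ij}\}$. Otherwise, say $x_{ij}>\pi_{ij}$; then $\pi_i$ is not $j$-satiated, so $\pi_j$ is $i$-satiated and $y_{ij}\le\pi_{ij}$.
\item Size-monotonicity gives $\|x_i\|_1\ge\|\pi_i\|_1$ and $\|y_j\|_1\ge\|\sigma_j\|_1$.
\item Summing forces equality throughout. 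Hence $\{x_{ij},y_{ij}\}=\{\pi_{ij},\sigma_{ij}\}$, and every agent has the same size in $\pi,\sigma,x,y$.
\end{itemize}

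Your sketch still does not prove that $x$ is AG-stable. On the firm side, $C_j(x_j\vee\pi_j)=\pi_j$ and $C_j(x_j\vee\sigma_j)=\sigma_j$, together with persistence, give $C_j(x_j)=x_j$. Ruling out blocking pairs uses the size equality once more.

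Your justification of distributivity is wrong. The AG order of a path-independent, size-monotone choice function is not distributive in general. Write $\sqcup,\sqcap$ for join and meet in that order. Already for the responsive choice of the best two of $a\succ b\succ c\succ d$,
\[
\{a\}\sqcup(\{b\}\sqcap\{c\})=\{a\}
\quad\text{but}\quad
(\{a\}\sqcup\{b\})\sqcap(\{a\}\sqcup\{c\})=\{a,c\}.
\]
Distributivity of the stable set comes instead from the edgewise identity just derived. First show that $x$ and $y$ are the join and meet of $\pi,\sigma$ in the stable set under the workers' orders; this is polarity. Then $(\pi\sqcup\sigma)+(\pi\sqcap\sigma)=\pi+\sigma$. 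So $\pi\sqcup\sigma=\pi\sqcup\tau$ and $\pi\sqcap\sigma=\pi\sqcap\tau$ imply $\sigma=\tau$, and this cancellation law characterizes distributive lattices.
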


\citet{AlkanGale2003} established this result via a deferred acceptance procedure, summarized in \Cref{alg:AG-DA}.
The procedure guarantees convergence of the sequence $\pi^{(t)}$ to an AG-stable allocation, although finite termination is not guaranteed in general.\footnotemark
\footnotetext{
To assess the practical behavior of the procedure despite this theoretical limitation, we conducted numerical experiments on randomly generated instances.
In these experiments, the iterates approach the fixed point rapidly, with the residual reaching machine precision within a small number of iterations, indicating that the method is practical in many settings.

For one-to-one and many-to-one matching with responsive choice correspondences, polynomial-time algorithms for finding stable fractional allocations are known \citep{karzanov2024mixed,cookson2025fairly}.
Whether a polynomial-time method exists in our more general setting remains open.}

\begin{algorithm}[ht]
\caption{Deferred Acceptance by \citet{AlkanGale2003}}\label{alg:AG-DA}
Let $\pi^{(0)}$ be the fractional allocation such that $\pi_{ij}=q$ for all $i\in I$ and $j\in J$\;
\For{$t\gets 1,2,\dots$}{
    Let $\mu^{(t)}$ be the fractional allocation such that $\mu^{(t)}_i=C_i(\pi_i^{(t-1)})$ for each $i\in I$\;
    Let $\nu^{(t)}$ be the fractional allocation such that $\nu^{(t)}_j=C_j(\mu_j^{(t)})$ for each $j\in J$\;
    \lIf{$\mu^{(t)}=\nu^{(t)}$}{\Return $\mu^{(t)}$}
    Let $\pi^{(t)}$ be the fractional allocation such that $\pi^{(t)}_{ij}=\pi^{(t-1)}_{ij}$ if $\mu^{(t)}_{ij}=\nu^{(t)}_{ij}$ and     $\pi^{(t)}_{ij}=\nu^{(t)}_{ij}$ otherwise for each $(i,j)\in I\times J$\;
}
\end{algorithm}

We now specialize this framework to our setting.
We derive choice functions from M$^\natural$-concave value functions and verify the axioms needed below.

\subsection{Deriving Choice Functions from Value Functions}\label{subsec:derive-choice-functions}

Let $g\colon \ZZ_+^A\to \RRbar$ be an M$^\natural$-concave function such that $\dom(g)\subseteq\{0,1,\dots,q\}^A$.
Let $\bar{g}$ denote the concave closure of $g$. 
For $x\in\RR_+^A$, define the choice correspondence
\begin{align}
\Chcf(x)\coloneqq \argmax\{\overline{g}(y) \mid y\le x,\ y\in\dom(\overline{g})\}.
\end{align}
We select a canonical maximizer as follows. Let
\begin{align}
\textstyle 
S(x)\coloneqq\argmax\left\{\|y\|_1\mid y\in\Chcf(x)\right\},
\end{align}
and note that the tie-breaking step can be described in terms of lexicographic optimality.
For $x\in S(\xi)$, let $x^{\uparrow}$ denote the vector obtained by sorting the components of $x$ in nondecreasing order.
We say that $x$ is \emph{lexicographically optimal} in $S(\xi)$ if for every $y\in S(\xi)$, the vector $x^{\uparrow}$ is lexicographically no less than $y^{\uparrow}$.
Equivalently, at the first coordinate where $x^{\uparrow}$ and $y^{\uparrow}$ differ, the entry of $x^{\uparrow}$ is larger.
The set $\Chcf(\xi)$ is a g-polymatroid, and therefore $S(\xi)$ is a base polyhedron~\citep[Proposition~1.11]{FrankTardos1988}.
The lexicographic optimum in $S(\xi)$ is unique and coincides with the minimizer of any symmetric strictly convex function on $S(\xi)$~\citep{Fujishige1980lex}.
We fix the symmetric strictly convex rule
\begin{align}
\textstyle 
d(y)\coloneqq \sum_{a\in A}(q-y_a)^2.
\end{align}
We define
\begin{align}
\textstyle 
\{\Chf(x)\}=\argmin\{d(y)\mid y\in\Chcf(x)\}.
\end{align}
We now justify that the minimizer of $d$ over $\Chcf(x)$ lies in $S(x)$.
Define $h(y)=\|y\|_1$ for $y\in\Chcf(x)$ and $h(y)=-\infty$ otherwise.
The function $h$ is polyhedral M$^\natural$-concave because it is modular on a g-polymatroid domain.
If $y\in\Chcf(x)\setminus S(x)$, then $y$ is not a maximizer of $h$.
By \Cref{thm:optimality}, there exist $a\in A$ and $\epsilon>0$ such that
$y+\epsilon\chi_a\in\Chcf(x)$ and $h(y+\epsilon\chi_a)>h(y)$.
Since $0\le y_a\le q$, increasing $y_a$ weakly decreases $(q-y_a)^2$,
and therefore $d$ is componentwise nonincreasing on $[0,q]^A$.
Hence $d(y+\epsilon\chi_a)<d(y)$.
Therefore any minimizer of $d$ over $\Chcf(x)$ lies in $S(x)$.
In particular, $\Chf(x)$ is the lexicographic optimum in $S(x)$, and the induced choice function does not depend on which symmetric strictly convex tie-breaking rule we use.
The minimizer is unique because $\Chcf(x)$ is a polytope and $d$ is strictly convex.

Computing $\Chf(\xi)$ reduces to maximizing an M$^\natural$-concave function and then selecting the unique lexicographic optimal point on the resulting base polyhedron.
It is known that such a computation can be done in polynomial time~\citep{Fujishige1980lex,Fujishige2005}.

For each agent $k\in I\cup J$, we apply this construction to $g=v_k$ and write $\Chf_k$.
This choice function selects a canonical maximizer when the agent is indifferent.
With this formulation in place, we verify that the induced choice function satisfies the desired conditions, i.e., consistency, persistency, and size-monotonicity.\footnotemark
\footnotetext{\citet{MurotaYokoi2015} proved these conditions for the choice function $C\colon\ZZ_+^A\to\ZZ_+^A$ induced by a \emph{unique-selecting} M$^\natural$-concave function on $\ZZ_+^A$.
Our result can be viewed as a continuous counterpart, applied to a unique-selecting M$^\natural$-concave function of the form $\overline{g}(x)-\epsilon d(x)$, where $\epsilon$ is a positive infinitesimal.}
\begin{lemma}\label{lem:induced-choice-regularity}
For an M$^\natural$-concave function $g\colon \ZZ^A\to \RRbar$, the choice function $\Chf\colon\RR_+^A\to\RR_+^A$ induced by $g$ is continuous, consistent, persistent, and size-monotone.
\end{lemma}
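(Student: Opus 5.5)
Our plan is to derive all four properties from one perturbed-objective description of $\Chf$, as the footnote suggests: $\Chf(x)$ is the unique maximizer of $\overline{g}(y)-\epsilon d(y)$ over $\{y\le x\}$ for all sufficiently small $\epsilon>0$. Equivalently, it is the lexicographic optimum, first in $\overline{g}$ and then in $-d$. Consistency and continuity will follow from general optimization facts. Persistence and size-monotonicity will use the polyhedral M$^\natural$-concavity of $\overline{g}$ (\Cref{thm:local-global-closure}) through an exchange argument. We assume throughout that $x$ is such that $\{y\le x\}\cap\dom(\overline g)\neq\emptyset$, which holds on $\RR_+^A$ because $\bm 0\in\dom(g)$ is implicit in the model.

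\emph{Consistency.} Suppose $\Chf(\xi)\le\xi'\le\xi$. Every feasible point for $\xi'$ is feasible for $\xi$, so $\max_{y\le\xi'}\overline g=\max_{y\le\xi}\overline g$. Hence $\Chcf(\xi')=\Chcf(\xi)\cap\{y\le\xi'\}$, and this set still contains $\Chf(\xi)$. The point $\Chf(\xi)$ minimizes $d$ over the larger set $\Chcf(\xi)$, so it minimizes $d$ over the smaller one. By uniqueness, $\Chf(\xi')=\Chf(\xi)$.

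\emph{Continuity.} The feasible region $\{y\le x\}\cap\dom(\overline g)$ is a polytope varying continuously in $x$, and $\overline g$ is polyhedral concave. By Berge's maximum theorem the value $x\mapsto\max_{y\le x}\overline g(y)$ is continuous. I would argue upper hemicontinuity of $\Chcf$ directly, which needs no perturbation, but lower hemicontinuity of $\Chcf$ is what we need to pass to the $d$-minimizer. I would obtain it from polyhedrality: locally in $x$, the face structure is fixed, so $\Chcf(x)$ is cut out by linear constraints whose right-hand sides are affine in $x$. A limit point of $\Chf(x^m)$ is then in $\Chcf(x)$, and uniqueness of the $d$-minimizer gives $\Chf(x^m)\to\Chf(x)$. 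This is routine but fiddly; an alternative is to note $\Chf(x)=\argmax\{\overline g-\epsilon d\}$ with $\epsilon$ uniform near $x$ and apply Berge to a strictly concave objective.

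\emph{Persistence and size-monotonicity.} This is the main obstacle. For persistence, let $\xi\ge\xi'$, $y=\Chf(\xi)$ and $y'=\Chf(\xi')$, and suppose some $a$ has $y'_a<\min(y_a,\xi'_a)$. Then $a\in\supp^+(y-y')$, and I would apply the M$^\natural$ exchange for the polyhedral concave function $\overline g$ restricted to the box $\{\cdot\le\xi\}$, which is still M$^\natural$-concave. This gives $b\in\supp^-(y-y')\cup\{\varnothing\}$ and small $\eta>0$ with
\begin{align}
\overline g(y)+\overline g(y')\le\overline g(y-\eta(\chi_a-\chi_b))+\overline g(y'+\eta(\chi_a-\chi_b)).
\end{align}
Since $y'_a<\xi'_a$ and $y'_b>y_b\ge0$, the point $y'+\eta(\chi_a-\chi_b)$ is feasible for $\xi'$, and $y-\eta(\chi_a-\chi_b)\le\xi$ because $y_b<y'_b\le\xi'_b\le\xi_b$. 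Both $y$ and $y'$ maximize $\overline g$ on their regions, so both inequalities are tight. Both moved points are therefore in the respective $\Chcf$.

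Next, the $d$-term must strictly improve for one of them. Since $d$ is separable, symmetric and strictly convex, the two moves change $d$ by amounts of opposite sign to first order: we move mass from $y_a>y'_a$ toward $y'_a$ and symmetrically for $b$. Strict convexity gives $d(y-\eta(\chi_a-\chi_b))+d(y'+\eta(\chi_a-\chi_b))<d(y)+d(y')$, where the case $b=\varnothing$ is handled by the same computation with one coordinate. One of the two points therefore strictly beats its optimum, contradicting uniqueness. Here the choice of separable strictly convex $d$ is essential.

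For size-monotonicity we need $\|y\|_1\ge\|y'\|_1$, and I would use the same exchange. If $\|y'\|_1>\|y\|_1$, pick $a\in\supp^+(y'-y)$ and exchange in the reverse direction. The only dangerous case is $b=\varnothing$ or $y_a=\xi_a$. In the latter case, persistence already forces $y'_a\ge\min(y_a,\xi'_a)$, so a counting argument over coordinates where $y'$ exceeds $y$ shows these have slack in $\xi$. The exchange with $b=\varnothing$ then raises $\|y\|_1$ within $\Chcf(\xi)$, contradicting $y\in S(\xi)$ as established before the lemma. I expect the careful handling of the boundary cases ($b=\varnothing$, and coordinates at the box bounds) to be the real work.
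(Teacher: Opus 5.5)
Your consistency and persistence arguments are correct. For persistence, summing the two $d$-changes and noting that both exchange moves bring the $a$-entries, and likewise the $b$-entries, closer together treats $b=\varnothing$ and $b\in A$ at once. The paper instead derives $x'_a\le x'_{a'}$ and $y'_a\ge y'_{a'}$ separately and chains them. Your continuity sketch is about as informal as the paper's. Note, though, that the face structure of the optimal set is \emph{not} fixed near a breakpoint in $x$, which is exactly where continuity is delicate. Lower hemicontinuity of $\Chcf$ can be recovered by writing $\max_{y\le x}\overline g(y)$ as a right-hand-side-parametric LP via the hypograph and using Hoffman's bound.

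The genuine gap is size-monotonicity. Take $y=\Chf(\xi)$, $y'=\Chf(\xi')$, $\xi\ge\xi'$, $\|y'\|_1>\|y\|_1$, and apply the plain M$^\natural$ exchange to $(y',y)$ at some $a\in\supp^+(y'-y)$. The partner $b$ lies in $\supp^-(y'-y)\cup\{\varnothing\}$, and nothing prevents $b\in A$. In that case persistence gives $y'_b\ge\min\{y_b,\xi'_b\}$ with $y'_b<y_b$, so $y'_b=\xi'_b$. Then $y'-\eta(\chi_a-\chi_b)$ violates the bound $\xi'$, and you lose the inequality $\overline g(y'-\eta(\chi_a-\chi_b))\le\overline g(y')$. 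The remaining inequality $\overline g(y+\eta(\chi_a-\chi_b))\le\overline g(y)$ only yields a point outside the $\xi'$-box that is at least as good as $y'$, which is no contradiction.

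Your list of dangerous cases is also misidentified. The case $y_a=\xi_a$ cannot occur, since $y_a<y'_a\le\xi'_a\le\xi_a$, so the ``counting argument'' only establishes slack that is automatic. And $b=\varnothing$ is the \emph{favorable} case, which you then use, but the exchange axiom does not let you choose it.

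The missing ingredient is a cardinality-sensitive exchange: if $\|y\|_1<\|y'\|_1$, there is $a$ with $y_a<y'_a$ and $\overline g(y)+\overline g(y')\le\overline g(y+\eta\chi_a)+\overline g(y'-\eta\chi_a)$ for all small $\eta>0$. The paper obtains this by lifting to $\hat g(z_0,z)=\overline g(z)$ if $z_0=-\|z\|_1$ (the associated M-concave function, with no $\varnothing$ option in its exchange). It then exchanges at the new index $0$, which lies in $\supp^+$ of the difference of the lifted points, so the partner is forced to be some $a\in A$. With this, $y+\eta\chi_a\le\xi$ and $y'-\eta\chi_a\le\xi'$ are both feasible, so both inequalities are tight. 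Hence $y+\eta\chi_a\in\Chcf(\xi)$ has strictly larger $\ell_1$-norm than $y$, contradicting $y\in S(\xi)$.
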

\begin{proof}

Continuity follows from the continuity properties for parametric strictly convex optimization, since $\Chf(x)$ is the lexicographic limit, as $\epsilon\to +0$, of the unique minimizer of the strictly convex function $-\bar{g}(y)+\epsilon d(y)$, and the feasible polytope depends continuously on $x$.

Consistency of $\Chf$ is immediate from the definition.

\medskip
Next, we prove persistence. Suppose, for contradiction, that there exist $x, y \in \RR_+^A$ with $x \ge y$ but $\Chf(y) \not\ge \Chf(x) \wedge y$.
Let $x' = \Chf(x)$ and $y' = \Chf(y)$.
Since $y' \not\ge x' \wedge y$, there exists some $a \in A$ with $y'_a < \min\{x'_a, y_a\}$.
Therefore, $a \in \supp^+(x' - y')$.

By the exchange axiom, there exist $a' \in \supp^-(x' - y') \cup \{\varnothing\}$ and $\eta_0 > 0$ such that
\begin{align}
\overline{g}(x') + \overline{g}(y') \leq \overline{g}(x' - \eta(\chi_a - \chi_{a'})) + \overline{g}(y' + \eta(\chi_a - \chi_{a'})) \qquad (\forall \eta \in [0, \eta_0]). \label{eq:persist_exchange}
\end{align}
If $a'=\varnothing$, set $\eta'=\min\{\eta_0,x'_a,y_a-y'_a\}$.
If $a'\in A$, set $\eta'=\min\{\eta_0,x'_a,y_a-y'_a,x_{a'}-x'_{a'},y'_{a'}\}$.
The minimums are positive because $y'_a<y_a$, $0\le y'_a<x'_a$ and, when $a'\in A$, we have
$0\le x'_{a'}<y'_{a'}\le y_{a'}\le x_{a'}$.
Then $\bm{0}\le x' - \eta'(\chi_a - \chi_{a'}) \le x$ and $\bm{0}\le y' + \eta'(\chi_a - \chi_{a'}) \le y$.
These points remain feasible for the same argmax problems, therefore comparing $d$ at these points is valid.
Since $x' \in \argmax_{z \le x} \overline{g}(z)$, we have $\overline{g}(x' - \eta'(\chi_a - \chi_{a'})) \le \overline{g}(x')$.
Similarly, $y' \in \argmax_{z \le y} \overline{g}(z)$ implies $\overline{g}(y' + \eta'(\chi_a - \chi_{a'})) \le \overline{g}(y')$.
These inequalities, together with \eqref{eq:persist_exchange}, yield $\overline{g}(x' - \eta'(\chi_a - \chi_{a'})) = \overline{g}(x')$ and $\overline{g}(y' + \eta'(\chi_a - \chi_{a'})) = \overline{g}(y')$.
Moreover, concavity of $\overline{g}$ ensures $\overline{g}(x' - \eta(\chi_a - \chi_{a'})) = \overline{g}(x')$ and $\overline{g}(y' + \eta(\chi_a - \chi_{a'})) = \overline{g}(y')$ for all $\eta \in [0, \eta']$.

If $a'=\varnothing$, then $d(y' + \eta' \chi_a) < d(y')$, contradicting $y' = \Chf(y)$.
Thus, assume $a' \in A$.
The exchange move preserves $\sum_a y_a$, and $d(y)=\sum_{a\in A}y_a^2-2q\sum_{a\in A}y_a+|A|q^2$, hence comparing $d$ reduces to comparing $\sum_{a\in A}y_a^2$.
Since $d(x') \le d(x' - \eta(\chi_a - \chi_{a'}))$ for any $\eta \in [0, \eta']$, it follows that $x'_a \le x'_{a'}$.
In the same way, since $d(y') \le d(y' + \eta(\chi_a - \chi_{a'}))$ for any $\eta \in [0,\eta']$, we must have $y'_a \ge y'_{a'}$.
But this leads to $x'_a > y'_a \ge y'_{a'} > x'_{a'} \ge x'_a$, which is a contradiction.

Thus, $\Chf$ is persistent.

\medskip
Finally, we prove size-monotonicity.
Suppose for contradiction that there exist $x, y \in \RR_+^A$ with $x \ge y$ but $\|\Chf(x)\|_1<\|\Chf(y)\|_1$.
Let $x'=\Chf(x)$ and $y'=\Chf(y)$. Then, $\|x'\|_1<\|y'\|_1$.
Introduce a new index $0$ and define $\hat{A}\coloneqq A\cup\{0\}$ and
\[
\hat{g}(z_0,z)\coloneqq
\begin{cases}
\overline{g}(z) & \text{if } z_0=-\|z\|_1,\\
-\infty & \text{otherwise}.
\end{cases}
\]
The extension argument for this construction shows that $\hat{g}$ satisfies the exchange axiom on its effective domain.
Let $\hat{x}=(-\|x'\|_1,x')$ and $\hat{y}=(-\|y'\|_1,y')$.
Since $\|x'\|_1<\|y'\|_1$, we have $\hat{x}_0>\hat{y}_0$ and $0\in\supp^+(\hat{x}-\hat{y})$.
Applying the exchange axiom with index $0$, there exist $a\in A$ with $x'_a<y'_a~(\le y_a\le x_a)$ and $\eta_0>0$ such that
for every $\eta\in[0,\eta_0]$,
\begin{align}
\hat{g}(\hat{x})+\hat{g}(\hat{y})
&\le \hat{g}(\hat{x}-\eta(\chi_0-\chi_a))+\hat{g}(\hat{y}+\eta(\chi_0-\chi_a)). \label{eq:sizemonotone_exchange}
\end{align}
This inequality is equivalent to
\[
\overline{g}(x')+\overline{g}(y')\le \overline{g}(x'+\eta\chi_a)+\overline{g}(y'-\eta\chi_a).
\]
Choose $\eta=\min\{\eta_0,\ x_a-x'_a\}~(>0)$. Then $x'+\eta\chi_a\le x$ and $y'-\eta\chi_a\le y$.
Since $y'$ maximizes $\overline{g}$ over $\{z\mid z\le y\}$, we have $\overline{g}(y'-\eta\chi_a)\le \overline{g}(y')$.
Therefore $\overline{g}(x'+\eta\chi_a)\ge \overline{g}(x')$.
Because $x'$ maximizes $\overline{g}$ over $\{z\mid z\le x\}$, this implies $x'+\eta\chi_a\in\Chcf(x)$.
Recall that $x'\in S(x)$ by definition of $\Chf(x)$ and $S(x)$ maximizes $\|z\|_1$ over $\Chcf(x)$.
This contradicts $\|x'+\eta\chi_a\|_1>\|x'\|_1$.
\end{proof}

This lemma verifies the Alkan--Gale assumptions for the induced choice function.
The next lemma links non-satiation to a local improvement in the concave closure.

\begin{lemma}\label{lem:non-satiated}
Let $g\colon \ZZ_+^A\to \RRbar$ be an M$^\natural$-concave function and $\Chf\colon\RR_+^A\to\RR_+^A$ be the choice function 
induced by $g$.
For $\zeta\in\RR_+^A$ and $a\in A$, let $\zeta'=\Chf(\zeta\vee q\chi_a)$, and suppose that $\Chf(\zeta)=\zeta$ and $\zeta\ne\zeta'$.
Then $\zeta'_a>\zeta_a$.
Moreover, there exist $a'\in A\cup\{\varnothing\}$ and $\epsilon_0>0$ such that
(i) $\overline{g}(\zeta+\epsilon(\chi_a-\chi_{a'}))>\overline{g}(\zeta)~(\forall\epsilon\in(0,\epsilon_0])$, or
(ii) $\overline{g}(\zeta+\epsilon(\chi_a-\chi_{a'}))=\overline{g}(\zeta)~(\forall \epsilon\in(0,\epsilon_0])$ and $\zeta_{a'}>\zeta_a$ if $a'\in A$.
\end{lemma}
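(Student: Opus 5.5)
The plan is to prove the two claims separately. The first claim ($\zeta'_a>\zeta_a$) follows from consistency and the structure of $\xi\coloneqq\zeta\vee q\chi_a$. The second claim follows from the exchange axiom for $\overline{g}$ applied to the pair $(\zeta',\zeta)$, combined with the tie-breaking rule $d$.

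For the first claim, suppose $\zeta'_a\le\zeta_a$. Since $\xi_b=\zeta_b$ for $b\ne a$ and $\zeta'\le\xi$, we get $\zeta'\le\zeta\le\xi$. Consistency of $\Chf$ (\Cref{lem:induced-choice-regularity}) then gives $\Chf(\zeta)=\Chf(\xi)=\zeta'$. Combined with $\Chf(\zeta)=\zeta$, this yields $\zeta=\zeta'$, which contradicts the hypothesis. Hence $\zeta'_a>\zeta_a$, so $a\in\supp^+(\zeta'-\zeta)$.

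For the second claim, I apply the polyhedral M$^\natural$ exchange property of $\overline{g}$ (\Cref{thm:local-global-closure}) to $x=\zeta'$, $y=\zeta$ and index $a$. This gives $a'\in\supp^-(\zeta'-\zeta)\cup\{\varnothing\}$ and $\eta_0>0$ with
\begin{align}
\overline{g}(\zeta')+\overline{g}(\zeta)\le\overline{g}(\zeta'-\eta(\chi_a-\chi_{a'}))+\overline{g}(\zeta+\eta(\chi_a-\chi_{a'}))\qquad(\forall\eta\in[0,\eta_0]).
\end{align}
If $a'\in A$ then $\zeta'_{a'}<\zeta_{a'}\le\xi_{a'}$. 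We shrink $\eta_0$ so that $\zeta'-\eta(\chi_a-\chi_{a'})$ stays in $[0,\xi]$. This uses $\zeta'_a>\zeta_a\ge0$, and $\zeta'_{a'}<\xi_{a'}$ when $a'\in A$. Since $\zeta'$ maximizes $\overline{g}$ over $\{z\le\xi\}$, we get $\overline{g}(\zeta'-\eta(\chi_a-\chi_{a'}))\le\overline{g}(\zeta')$. Therefore $\overline{g}(\zeta+\eta(\chi_a-\chi_{a'}))\ge\overline{g}(\zeta)$ for all small $\eta>0$. By concavity along the segment, the function $\eta\mapsto\overline{g}(\zeta+\eta(\chi_a-\chi_{a'}))$ is either strictly above $\overline{g}(\zeta)$ for all $\eta\in(0,\epsilon_0]$, which is case (i), or identically equal to it on some $(0,\epsilon_0]$. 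This dichotomy holds because a polyhedral concave function is affine near $0$ along a ray, so it is strictly increasing, constant, or decreasing there, and decreasing is excluded.

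In the constant case it remains to show $\zeta_{a'}>\zeta_a$ when $a'\in A$, and to rule out $a'=\varnothing$. From the exchange inequality, equality also holds on the $\zeta'$ side: $\zeta'-\eta(\chi_a-\chi_{a'})\in\Chcf(\xi)$. If $a'=\varnothing$, this moves $\zeta'$ down in coordinate $a$ and loses $\ell_1$-mass. That is not itself a contradiction, so I instead use the $\zeta$ side: $\zeta+\eta\chi_a$ attains the same value. I then compare $\zeta$ with the constraint $z\le\zeta$ in $\Chcf(\zeta)$. Because $\zeta+\eta\chi_a\not\le\zeta$, this comparison is unavailable, so in that subcase we simply output (ii) with $a'=\varnothing$, where the condition on $\zeta_{a'}$ is vacuous. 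If $a'\in A$, both perturbed points are $\overline{g}$-optimal in their respective boxes and the $\ell_1$ norms are preserved. By the $d$-tie-breaking, $\zeta'$ minimizes $\sum z_b^2$ in $S(\xi)$, which forces $\zeta'_a\le\zeta'_{a'}$, exactly as in the persistence proof. Combining gives $\zeta_a<\zeta'_a\le\zeta'_{a'}<\zeta_{a'}$, which is the required inequality.

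The main obstacle is this last step, namely certifying that the move $\zeta'\mapsto\zeta'-\eta(\chi_a-\chi_{a'})$ stays inside $S(\xi)$ so that the tie-breaking comparison is legitimate. The optimality of $\zeta'$ gives the $\overline{g}$-equality, and the $\ell_1$-norm is unchanged when $a'\in A$, so the move does stay in $S(\xi)$. The remaining care is in choosing $\epsilon_0\le\eta_0$ small enough for every box constraint.
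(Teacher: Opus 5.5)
Your proof is correct and follows the paper's argument essentially step for step. The shared steps are:
\begin{itemize}
\item applying the exchange axiom for $\overline{g}$ to $(\zeta',\zeta)$ at index $a$;
\item using optimality of $\zeta'$ over $\{z\le\zeta\vee q\chi_a\}$ to get $\overline{g}(\zeta+\eta(\chi_a-\chi_{a'}))\ge\overline{g}(\zeta)$;
\item the concavity dichotomy between cases (i) and (ii);
\item in the constant case, the $d$-tie-breaking comparison giving $\zeta'_a\le\zeta'_{a'}$.
\end{itemize}
When $a'=\varnothing$, case (ii) holds vacuously, so your remark about needing to rule out $a'=\varnothing$ is unnecessary. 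Your use of consistency to exclude $\zeta'\le\zeta$ makes explicit a step the paper only sketches when it asserts $\zeta'_a>\zeta_a$.
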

\begin{proof}
    Since $\zeta=\Chf(\zeta)$, $\zeta'=\Chf(\zeta\vee q\chi_a)$, and $\zeta\ne\zeta'$, we have $\overline{g}(\zeta)\le\overline{g}(\zeta')$.
    Because $\zeta'\le \zeta\vee q\chi_a$, the only coordinate that can increase is $a$. Therefore $\zeta'_a>\zeta_a$ and $\zeta'_{a'}\le\zeta_{a'}$ for all $a'\in A\setminus\{a\}$.
By M$^\natural$-concavity of $\overline{g}$, there exist $a'\in\supp^-(\zeta'-\zeta)\cup\{\varnothing\}$ and a positive number $\eta_0>0$ such that 
\begin{align}
    \overline{g}(\zeta')+\overline{g}(\zeta)
    \le \overline{g}(\zeta'-\eta(\chi_a-\chi_{a'}))+\overline{g}(\zeta+\eta(\chi_a-\chi_{a'}))
    \quad(\forall \eta\in[0,\eta_0]).
\end{align}
Let $\eta_0'=\min\{\eta_0,\zeta_{a'}-\zeta'_{a'}\}$ if $a'\in A$ and $\eta_0'=\eta_0$ if $a'=\varnothing$.
For $\eta\in[0,\eta_0']$, we have $\zeta'-\eta(\chi_a-\chi_{a'})\le\zeta\vee q\chi_a$ and $\zeta+\eta(\chi_a-\chi_{a'})\le\zeta\vee q\chi_a$.
Since $\zeta'$ maximizes $\overline{g}$ over $\{y\mid y\le \zeta\vee q\chi_a\}$, we have $\overline{g}(\zeta')\ge\overline{g}(\zeta'-\eta(\chi_a-\chi_{a'}))$.
In particular, we have
\begin{align}
\overline{g}(\zeta')+\overline{g}(\zeta)
\le \overline{g}(\zeta'-\eta(\chi_a-\chi_{a'}))+\overline{g}(\zeta+\eta(\chi_a-\chi_{a'}))
\le \overline{g}(\zeta')+\overline{g}(\zeta+\eta(\chi_a-\chi_{a'})),
\end{align}
and hence $\overline{g}(\zeta)\le \overline{g}(\zeta+\eta(\chi_a-\chi_{a'}))$.

Suppose that $\overline{g}(\zeta)<\overline{g}(\zeta+\eta(\chi_a-\chi_{a'}))$ for some $\eta\in(0,\eta_0']$.
Then, by the concavity of $\overline{g}$, we have $\overline{g}(\zeta)<\overline{g}(\zeta+\eta'(\chi_a-\chi_{a'}))$ for all $\eta'\in(0,\eta]$.

Otherwise, suppose that $\overline{g}(\zeta)=\overline{g}(\zeta+\eta(\chi_a-\chi_{a'}))$ for all $\eta\in(0,\eta_0']$.
Then, the exchange inequality implies $\overline{g}(\zeta')\le \overline{g}(\zeta'-\eta(\chi_a-\chi_{a'}))$ for all $\eta\in[0,\eta_0']$.
Since $\zeta'$ is a maximizer over the expanded region and $\zeta'-\eta(\chi_a-\chi_{a'})$ is feasible for that region, the reverse inequality also holds.
Thus, we have $\overline{g}(\zeta')=\overline{g}(\zeta'-\eta(\chi_a-\chi_{a'}))$ for all $\eta\in[0,\eta_0']$.
If $a'\in A$ and $\zeta'_{a'}< \zeta'_a$, we have $d(\zeta'-\eta(\chi_a-\chi_{a'}))<d(\zeta')$ for any $\eta\in (0,\min\{\eta_0',(\zeta'_a-\zeta'_{a'})/2\}$, which contradicts $\zeta'=\Chf(\zeta\vee q\chi_a)$.
Thus, $\zeta_{a'}\ge \zeta'_{a'}\ge \zeta'_a>\zeta_a$ if $a'\in A$.
\end{proof}

\subsection{Equivalence with Doubly-Strong Ex Ante Stability}\label{subsec:ds-ex-ante-equivalence}

Given a market $(I,J,(v_k)_{k\in I\cup J})$, we define the AG-market $(I,J,(\Chf_k)_{k\in I\cup J})$ using the induced choice functions.
We show that AG-stability and doubly-strong ex ante stability coincide.

\begin{theorem}\label{thm:ag-ds-equivalence}
A fractional allocation $\pi$ is AG-stable in $(I,J,(\Chf_k)_{k\in I\cup J})$
if and only if it is doubly-strong ex ante stable in $(I,J,(v_k)_{k\in I\cup J})$.
\end{theorem}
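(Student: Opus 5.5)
The plan is to prove the two implications separately. Both directions go through \Cref{lem:nonsatiation-expansion} and \Cref{lem:non-satiated}, which translate between non-satiation of $\Chf_k$ in a coordinate and local $\epsilon$-moves of $\overline{v}_k$. Throughout, I use that $d$ is strictly decreasing in each coordinate on $[0,q]^A$. I also use that, for an exchange $\chi_a-\chi_{a'}$ with $a'\in A$, $d$ decreases exactly when mass moves from the larger coordinate to the smaller one.

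\emph{Doubly-strong ex ante stable $\Rightarrow$ AG-stable.}
First I show $\pi_k=\Chf_k(\pi_k)$. By ex ante stability, $\pi_k\in\Chcf_k(\pi_k)$. Every $y\le\pi_k$ with $y\ne\pi_k$ has $d(y)>d(\pi_k)$, because $d$ is strictly decreasing coordinatewise. Hence $\pi_k$ is the unique $d$-minimizer over $\Chcf_k(\pi_k)$.

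For the satiation condition, suppose neither $\pi_i$ is $j$-satiated nor $\pi_j$ is $i$-satiated. \Cref{lem:nonsatiation-expansion}, which applies by \Cref{lem:induced-choice-regularity}, gives $\Chf_i(\pi_i\vee q\chi_j)\ne\pi_i$ and $\Chf_j(\pi_j\vee q\chi_i)\ne\pi_j$. \Cref{lem:non-satiated} then yields $j'$ and $\epsilon_0$ for $i$, and $i'$ and $\epsilon_0'$ for $j$, each falling into case (i) (strict improvement) or case (ii) (indifference with $\pi_{ij}<\pi_{ij'}$, respectively $\pi_{ij}<\pi_{i'j}$). I take $\epsilon\le\min\{\epsilon_0,\epsilon_0'\}$ and split into four cases:
\begin{itemize}
\item (i)/(i) is a blocking pair, contradicting \Cref{lem:ex-ante-equiv}(ii-c);
\item (ii) for $i$ with (i) for $j$ is ex ante discrimination among firms;
\item (i) for $i$ with (ii) for $j$ is ex ante discrimination among workers;
\item (ii)/(ii) violates ex ante indifference neutrality.
\end{itemize}
Each case contradicts \Cref{def:ds-ex-ante-fair}.

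\emph{AG-stable $\Rightarrow$ doubly-strong ex ante stable.}
Individual rationality is immediate, since $\pi_k=\Chf_k(\pi_k)\in\Chcf_k(\pi_k)$. For each of the four remaining requirements (no blocking pair and the three fairness refinements), I show the following claim, which contradicts AG-stability.

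\emph{Claim.} If the violating configuration exists, then $\pi_i$ is not $j$-satiated and $\pi_j$ is not $i$-satiated.

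By symmetry it suffices to handle the worker side. There are two situations.

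If the worker strictly improves, so that $\overline{v}_i(\pi_i+\epsilon(\chi_j-\chi_{j'}))>\overline{v}_i(\pi_i)$, then $\pi_i\notin\Chcf_i(\pi_i\vee q\chi_j)$. Hence $\Chf_i(\pi_i\vee q\chi_j)\ne\pi_i$, and \Cref{lem:non-satiated} gives $\Chf_i(\pi_i\vee q\chi_j)_j>\pi_{ij}$, i.e.\ non-satiation.

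If the worker is indifferent, with $\pi_{ij}<\pi_{ij'}$ when $j'\in J$, I distinguish two subcases. If $\pi_i\notin\Chcf_i(\pi_i\vee q\chi_j)$, I conclude exactly as before. Otherwise the point $z=\pi_i+\epsilon(\chi_j-\chi_{j'})$ lies in $\Chcf_i(\pi_i\vee q\chi_j)$, where I shrink $\epsilon$ if needed so that $z\ge\bm{0}$ and $z\le\pi_i\vee q\chi_j$; by concavity the value stays equal on the whole segment. Moreover $d(z)<d(\pi_i)$: if $j'=\varnothing$ this holds because the norm increases, and if $j'\in J$ it holds because mass moves from the larger coordinate $\pi_{ij'}$ to the smaller $\pi_{ij}$ (with $\epsilon<(\pi_{ij'}-\pi_{ij})/2$). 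Therefore $\Chf_i(\pi_i\vee q\chi_j)\ne\pi_i$, and \Cref{lem:non-satiated} again gives non-satiation in coordinate $j$. The firm side is identical.

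The main obstacle is this indifference case of the second direction. It requires verifying that the equal-value move stays feasible, meaning it is nonnegative and below the expanded bound, and that it strictly lowers $d$. Both steps need the small-$\epsilon$ bookkeeping and the strict inequality $\pi_{ij}<\pi_{ij'}$. Once non-satiation on both sides is established in every case, it contradicts the second condition of AG-stability, which completes the equivalence.
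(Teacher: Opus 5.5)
Your proposal is correct and follows essentially the same route as the paper. Both directions match the paper's two halves: one direction uses \Cref{lem:nonsatiation-expansion} and \Cref{lem:non-satiated} to get the four-case split matching the four requirements of \Cref{def:ds-ex-ante-fair}, and the other shows that each violating configuration makes $\pi_i$ non-$j$-satiated and $\pi_j$ non-$i$-satiated, via either a strict gain or an equal-value move that lowers $d$. The differences are cosmetic: you get $\Chf_i(\pi_i\vee q\chi_j)_j>\pi_{ij}$ from \Cref{lem:non-satiated} where the paper argues directly about every maximizer, and you prove $\Chf_k(\pi_k)=\pi_k$ by uniqueness of the $d$-minimizer rather than by contradiction.
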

\begin{proof}
We first prove the forward direction. Assume that $\pi$ is not doubly-strong ex ante stable. We show that $\pi$ is not AG-stable.

First, suppose that $\pi$ is not ex ante stable. If there is an agent $k\in I\cup J$ such that $\pi_k\not\in\argmax_{\xi\le \pi_k}\overline{v}_k(\xi)$, then $\pi_k\ne \Chf_k(\pi_k)$ and $\pi$ is not AG-stable. Otherwise, $\pi_k\in\argmax_{\xi\le \pi_k}\overline{v}_k(\xi)$ for all $k\in I\cup J$, and there is a pair $(i,j)\in I\times J$ such that $\pi_i\not\in\argmax\{\overline{v}_i(\xi)\mid \xi\le(\pi_i\vee q\chi_j)\}$ and $\pi_j\not\in\argmax\{\overline{v}_j(\xi)\mid \xi\le(\pi_j\vee q\chi_i)\}$. Since $\pi_i\in\argmax_{\xi\le \pi_i}\overline{v}_i(\xi)$ and $\pi_i\not\in\argmax\{\overline{v}_i(\xi)\mid \xi\le(\pi_i\vee q\chi_j)\}$, every vector in $\argmax\{\overline{v}_i(\xi)\mid \xi\le(\pi_i\vee q\chi_j)\}$ has its $j$th component greater than $\pi_{ij}$. Hence $\Chf_i(\pi_i\vee q\chi_j)_j>\pi_{ij}$, which means $\pi_i$ is not $j$-satiated. By symmetry, $\pi_j$ is not $i$-satiated. Therefore, $\pi$ is not AG-stable.

Now assume that $\pi$ is ex ante stable. Suppose that $\pi$ has ex ante discrimination among firms. Let $i\in I$, $i'\in (I\cup\{\varnothing\})\setminus\{i\}$, $j\in J$, $j'\in (J\cup\{\varnothing\})\setminus\{j\}$, and $\epsilon>0$ satisfy $\overline{v}_i(\pi_i+\epsilon(\chi_j-\chi_{j'}))=\overline{v}_i(\pi_i)$, $\overline{v}_j(\pi_j+\epsilon(\chi_i-\chi_{i'}))>\overline{v}_j(\pi_j)$, and $\pi_{ij}<\pi_{ij'}$ if $j'\in J$. Since $\overline{v}_j(\pi_j+\epsilon(\chi_i-\chi_{i'}))>\overline{v}_j(\pi_j)=\max_{\xi\le\pi_j}\overline{v}_j(\xi)$, we have $\Chf_j(\pi_j\vee q\chi_i)_i>\pi_{ij}$, therefore $\pi_j$ is not $i$-satiated. If $j'\in J$, let $\epsilon'=\min\{\epsilon, (\pi_{ij'}-\pi_{ij})/2\}$. If $j'=\varnothing$, set $\epsilon'=\epsilon$. In either case, $\epsilon'>0$. Since $\overline{v}_i(\pi_i+\epsilon(\chi_j-\chi_{j'}))=\overline{v}_i(\pi_i)=\max_{\xi\le\pi_i}\overline{v}_i(\xi)$ and $\overline{v}_i$ is concave, we have $\overline{v}_i(\pi_i+\epsilon'(\chi_j-\chi_{j'}))\ge \max_{\xi\le\pi_i}\overline{v}_i(\xi)$. Moreover, $d(\pi_i+\epsilon'(\chi_j-\chi_{j'}))<d(\pi_i)$. Hence the choice $\Chf_i(\pi_i\vee q\chi_j)$ differs from $\pi_i$, and $\Chf_i(\pi_i\vee q\chi_j)_j>\pi_{ij}$. Thus, $\pi_i$ is not $j$-satiated. Therefore, $\pi$ is not AG-stable. The case of ex ante discrimination among workers follows by symmetry between workers and firms. If $\pi$ is not ex ante indifference neutral, let $i,i'\in I$, $j,j'\in J$, and $\epsilon>0$ satisfy $\overline{v}_i(\pi_i+\epsilon(\chi_j-\chi_{j'}))= \overline{v}_i(\pi_i)$, $\overline{v}_j(\pi_j+\epsilon(\chi_i-\chi_{i'}))=\overline{v}_j(\pi_j)$, and $\pi_{ij}<\min\{\pi_{ij'},\pi_{i'j}\}$. By the same argument, we have $\Chf_i(\pi_i\vee q\chi_j)_j>\pi_{ij}$ and $\Chf_j(\pi_j\vee q\chi_i)_i>\pi_{ij}$. Thus, $\pi_i$ is not $j$-satiated and $\pi_j$ is not $i$-satiated, therefore $\pi$ is not AG-stable. This proves the forward direction.

\medskip
We now prove the reverse direction. Assume that $\pi$ is not AG-stable. If there is an agent $k\in I\cup J$ such that $\pi_k\ne \Chf_k(\pi_k)$, let $\xi^*=\Chf_k(\pi_k)$. If $\overline{v}_k(\xi^*)>\overline{v}_k(\pi_k)$, then $\pi_k\notin\argmax_{\xi\le\pi_k}\overline{v}_k(\xi)$ and $\pi$ is not ex ante stable. Otherwise, we have $\overline{v}_k(\xi^*)=\overline{v}_k(\pi_k)$, and since $\pi_k\in\Chcf_k(\pi_k)$ and $\xi^*\ne\pi_k$, the tie-breaking rule in the definition of $\Chf$ implies $d(\xi^*)<d(\pi_k)$. Since $\xi^*\le\pi_k\le q\bm{1}$ and $d$ is componentwise nonincreasing on $[0,q]^A$, we have $d(\xi^*)\ge d(\pi_k)$. This is a contradiction. Hence we assume that $\pi_k=\Chf_k(\pi_k)$ for all $k\in I\cup J$.

Since $\pi$ is not AG-stable, there is a pair $(i,j)\in I\times J$ such that $\pi_i$ is not $j$-satiated and $\pi_j$ is not $i$-satiated.
By \Cref{lem:nonsatiation-expansion}, we have 
$\pi_{ij}<\Chf_i(\pi_i\vee q\chi_j)_j$ and $\pi_{ij}<\Chf_j(\pi_j\vee q\chi_i)_i$.
Apply \Cref{lem:non-satiated} to both sides and let $\epsilon$ be the minimum of the two step sizes.
Then there exist $i'\in I\cup\{\varnothing\}$ and $j'\in J\cup\{\varnothing\}$ such that one of the following four conditions holds:
\begin{itemize}
\item[(i)] $\overline{v}_i(\pi_i)<\overline{v}_i(\pi_i+\epsilon(\chi_j-\chi_{j'}))$ and
$\overline{v}_j(\pi_j)<\overline{v}_j(\pi_j+\epsilon(\chi_i-\chi_{i'}))$.
\item[(ii)] $\overline{v}_i(\pi_i)<\overline{v}_i(\pi_i+\epsilon(\chi_j-\chi_{j'}))$,
$\overline{v}_j(\pi_j)=\overline{v}_j(\pi_j+\epsilon(\chi_i-\chi_{i'}))$, and
$\pi_{ij}<\pi_{i'j}$ if $i'\in I$.
\item[(iii)] $\overline{v}_i(\pi_i)=\overline{v}_i(\pi_i+\epsilon(\chi_j-\chi_{j'}))$,
$\overline{v}_j(\pi_j)<\overline{v}_j(\pi_j+\epsilon(\chi_i-\chi_{i'}))$, and
$\pi_{ij}<\pi_{ij'}$ if $j'\in J$.
\item[(iv)] $\overline{v}_i(\pi_i)=\overline{v}_i(\pi_i+\epsilon(\chi_j-\chi_{j'}))$,
$\overline{v}_j(\pi_j)=\overline{v}_j(\pi_j+\epsilon(\chi_i-\chi_{i'}))$,
$\pi_{ij}<\pi_{i'j}$ if $i'\in I$, and
$\pi_{ij}<\pi_{ij'}$ if $j'\in J$.
\end{itemize}
If (i) holds, then $\pi$ is not ex ante stable. If (ii) holds, then $\pi$ has ex ante discrimination among workers. If (iii) holds, then $\pi$ has ex ante discrimination among firms. If (iv) holds, then $\pi$ is not ex ante indifference neutral. Therefore, $\pi$ is not doubly-strong ex ante stable. This completes the proof.
\end{proof}

By combining \Cref{thm:alkan-gale}, \Cref{lem:induced-choice-regularity}, and \Cref{thm:ag-ds-equivalence}, we obtain the following theorem.
\begin{theorem}\label{thm:cardinal-main}
A doubly-strong ex ante stable allocation always exists in an M$^\natural$-concave market.
Moreover, the set of doubly-strong ex ante stable allocations forms a distributive lattice.
\end{theorem}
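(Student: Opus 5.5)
The plan is to combine the three results named just before the statement. First, fix the market $(I,J,(v_k)_{k\in I\cup J})$ and form the AG-market $(I,J,(\Chf_k)_{k\in I\cup J})$ with the induced choice functions of \Cref{subsec:derive-choice-functions}. By \Cref{lem:induced-choice-regularity}, applied to $g=v_k$ for each $k$, every $\Chf_k$ is continuous, consistent, persistent, and size-monotone. We also check the standing assumption of \Cref{subsec:alkan-gale} that $\Chf_k(\pi)_a\le q$. This holds because $\Chf_k(\pi)\in\dom(\overline{v}_k)\subseteq\conv(\dom(v_k))\subseteq[0,q]^{A_k}$.

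For existence, we invoke \Cref{thm:alkan-gale} to obtain an AG-stable allocation $\pi^*$ in the AG-market. By \Cref{thm:ag-ds-equivalence}, $\pi^*$ is doubly-strong ex ante stable in the original market. We should also confirm that $\pi^*$ is feasible, i.e.\ $\pi^*_k\in\dom(\overline{v}_k)$. This follows from $\pi^*_k=\Chf_k(\pi^*_k)$, since the choice always lies in $\dom(\overline{v}_k)$; the latter is nonempty because $\bm{0}\in\dom(v_k)$ by $v_k(\bm{0})=0$.

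For the lattice statement, size-monotonicity lets us apply the second part of \Cref{thm:alkan-gale}. The set of AG-stable allocations is then a distributive lattice under the orderings $(\succeq^{\rm AG}_i)_{i\in I}$ and $(\succeq^{\rm AG}_j)_{j\in J}$. Since \Cref{thm:ag-ds-equivalence} says the two sets coincide as subsets of $\RR_+^A$, the set of doubly-strong ex ante stable allocations inherits this distributive lattice structure verbatim, with the orderings induced by $\Chf_k$.

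The only step requiring care is this transfer of order. The lattice order is defined through the induced choice functions (AG-preferences), not directly through $\overline{v}_k$. So the claim should be read, and stated in the proof, as a lattice under the AG-preferences induced by $\Chf_k$. I would remark that $x\succeq^{\rm AG}_k x'$ implies $\overline{v}_k(x)\ge\overline{v}_k(x')$, since $x$ is then a maximizer of $\overline{v}_k$ below $x\vee x'$. This gives the order a natural utility interpretation, but no further argument is needed beyond citing the equivalence.
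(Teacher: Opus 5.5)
Your proposal is correct and matches the paper's argument: the paper obtains the theorem directly by combining \Cref{lem:induced-choice-regularity}, \Cref{thm:alkan-gale}, and \Cref{thm:ag-ds-equivalence}, and reads the lattice order as the AG-preference order induced by the $\Chf_k$. Your extra checks (the bound $\Chf_k(\pi)_a\le q$, nonemptiness of the choice via $\bm{0}\in\dom(v_k)$, and feasibility of $\pi^*$) are correct, and the paper leaves them implicit.
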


\section{Results without Cardinal Utilities}\label{sec:ordinal}
This section develops an ordinal version of our model using contracts and matroid constraints.
We first introduce the ordinal contract model and show that responsive preferences admit consistent cardinalizations that connect the ordinal and cardinal analyses.
We then discuss ambiguity under non-responsive preferences.

\subsection{Ordinal Model and Stability}\label{subsec:ordinal-model-stability}
We consider a bipartite market with workers $I$, firms $J$, and a finite set of contracts $A$.
We use contracts because modeling preferences only by responsive rankings over partners is too restrictive.
Contracts allow multiple options for the same pair and support richer constraints (see Section \ref{sec:applications}).
Each contract $a\in A$ is associated with a pair $(a_I,a_J)\in I\times J$, and there may be multiple contracts for the same pair.
For each agent $k\in I\cup J$, let $A_k$ denote the set of incident contracts.
Each agent $k$ has a matroid constraint given by a rank function $r_k\colon 2^{A_k}\to\ZZ_+$ and a weak order $\succsim_k$ over $A_k\cup\{\varnothing\}$, where $\varnothing$ denotes the outside option. We assume $a\succsim_k \varnothing$ for every $a\in A_k$, which means $A_k$ contains only acceptable contracts.
For $a,a'\in A_k\cup\{\varnothing\}$, we write $a\succ_k a'$ if $a\succsim_k a'$ and $a'\not\succsim_k a$, and we write $a\sim_k a'$ if $a\succsim_k a'$ and $a'\succsim_k a$.
We write $P(r_k)=\{x\in\RR_+^{A_k}\mid \sum_{a\in S}x_a\le r_k(S)\ \ (\forall S\subseteq A_k)\}$.
Let $\mathcal{F}_k\coloneqq\{X\subseteq A_k\mid \chi_X\in P(r_k)\}$ denote the family of feasible contract sets.
For a set $X\subseteq A$, write $X_k\coloneqq X\cap A_k$.
An allocation $X\subseteq A$ is feasible if each $X_k$ lies in $\mathcal{F}_k$.
We identify $2^{A_k}$ with $\{0,1\}^{A_k}$ via incidence vectors and treat choice correspondences as maps $2^{A_k}\rightrightarrows 2^{A_k}$.

We induce a strict partial order on the feasible sets in $\mathcal{F}_k$ as follows. 
For a feasible set $X\in\mathcal{F}_k$ and contracts $a\in X\cup{\varnothing}$ and $a'\in A_k\setminus X$ such that $(X\setminus{a})\cup{a'}\in\mathcal{F}_k$,
we declare$(X\setminus{a})\cup{a'}\succ_k X$ whenever $a'\succ_k a$. 
The \emph{matroid-responsive order} on $\mathcal{F}_k$ is defined as the transitive closure of these elementary comparisons. 
We use $\succsim_k$, $\succ_k$, and $\sim_k$ for the resulting weak, strict, and indifference relations on feasible sets. 
Note that matroid-responsiveness can be viewed as the usual responsive set extension restricted to feasible exchanges within $\mathcal{F}_k$.

The following two examples illustrate the two extremes.

\begin{example}[Uniform matroid: ordinary responsiveness]\label{ex:matroid-resp-uniform}
Let $A_k=\{a,b,c\}$ with strict priority $a\succ_k b\succ_k c\succ_k\varnothing$, and let $r_k$ be the uniform matroid of rank $2$, giving $\mathcal{F}_k=\{X\subseteq A_k\mid |X|\le 2\}$.
Then $\succsim_k$ is matroid-responsive, and for any $X$ the choice $\Chcd_k(X)$ simply takes the two highest-priority contracts in $X$ (or all of $X$ if $|X|\le 2$). For example, $\Chcd_k(\{a,b,c\})=\{a,b\}$. This is exactly the usual responsive ``admit the top $q$'' rule.
\end{example}

\begin{example}[Partition matroid: a reserve-type rule]\label{ex:matroid-resp-partition}
Let $A_k=\{a_1,a_2,b_1,b_2\}$ partitioned into two groups $G_1=\{a_1,a_2\}$ and $G_2=\{b_1,b_2\}$ (think of two demographic types, each holding one reserved seat), and let the matroid allow at most one contract from each group, giving $\mathcal{F}_k=\{X\mid |X\cap G_1|\le 1,\ |X\cap G_2|\le 1\}$.
Suppose the priority is $a_1\succ_k a_2\succ_k b_1\succ_k b_2\succ_k\varnothing$.
This $\succsim_k$ is matroid-responsive (it is the priority-maximal independent set in a partition matroid), yet $\Chcd_k(\{a_1,a_2,b_1,b_2\})=\{a_1,b_1\}$: the choice rejects the globally second-ranked contract $a_2$ and accepts the globally third-ranked contract $b_1$, because $a_2$ and $a_1$ compete for the same reserved seat.
Ordinary responsiveness would instead select the top two, $\{a_1,a_2\}$, violating the per-group constraint.
This is the behavior that type-specific reserves, soft bounds, and overlapping types exhibit, and that we exploit in \Cref{sec:applications}.
\end{example}

A weight vector $w$ on $A_k\cup\{\varnothing\}$ is \emph{consistent} with $\succsim_k$ if $w_\varnothing=0$, $w_a>w_{a'}$ whenever $a\succ_k a'$, and $w_a=w_{a'}$ whenever $a\sim_k a'$.
We define the choice correspondence $\Chcd_k\colon 2^{A_k}\rightrightarrows 2^{A_k}$ by letting $\Chcd_k(X)$ be the set of $\succsim_k$-maximal feasible sets $Y$ such that $Y\subseteq X$, i.e.,
\begin{align}
\Chcd_k(X)\coloneqq\left\{Y\subseteq A_k \mid
Y\subseteq X,\ Y\in \mathcal{F}_k,
\nexists Z\in \mathcal{F}_k\ \text{with } Z\subseteq X,\ Z\succ_k Y
\right\}\qquad(\forall X\subseteq A_k).
\end{align}
Equivalently, we have $Y\in\Chcd_k(X)$ if $Y\subseteq X$, $Y\in \mathcal{F}_k$, and there do not exist $a\in Y\cup\{\varnothing\}$ and $a'\in X\setminus Y$ with $a'\succ_k a$ and $(Y\setminus\{a\})\cup\{a'\}\in \mathcal{F}_k$.

Given consistent weights $w$ on $A_k\cup\{\varnothing\}$, define the weight-maximizing choice correspondence:
\begin{align}
\textstyle
\Chcd_k^{w}(X)\coloneqq\argmax\left\{\sum_{a\in Y} w_a \mid Y\subseteq X,\ Y\in \mathcal{F}_k\right\}\qquad(\forall X\subseteq A_k).
\end{align}


We can handle the matroid-responsive setting with cardinal preferences.
Fix any consistent weights $w$.
Define a weighted matroid valuation function $v_k\colon 2^{A_k} \to \RRbar$ and its concave closure $\overline{v}_k\colon [0,1]^{A_k}\to \RRbar$ by
\begin{align}
    v_k(X) &= 
    \begin{cases}
    \sum_{a \in X} w_a & \text{if }X\in \mathcal{F}_k,\\
    -\infty                  & \text{otherwise}
    \end{cases}
    \quad\text{and}\quad
    \overline{v}_k(x)= 
    \begin{cases}
    \sum_{a \in A_k} w_a x_a & \text{if }x\in P(r_k),\\
    -\infty                  & \text{otherwise}
    \end{cases}
\end{align}
for all $X\subseteq A_k$ and $x\in[0,1]^{A_k}$.
It is not difficult to verify that such a weighted matroid function is M$^\natural$-concave.
Moreover, $\overline{v}_k$ is the concave closure of $v_k$ by \Cref{lem:matroid-polytope-integral}.
Note that this fact does not depend on which consistent weights $w$ are used.

Fix any weight vector $w$ that is consistent with $\succsim_k$.
For $x\in[0,1]^{A_k}$, define the fractional choice correspondence by
\begin{align}
\Chcf_k(x)
&=\textstyle\argmax\left\{\sum_{a\in A_k} w_a y_a \mid y\le x,\ y\in P(r_k)\right\}\\
&=\left\{y\in[0,1]^{A_k} \mid
\begin{array}{l}
y\le x,\ y\in P(r_k),\\[2pt]
\nexists a\in A_k\cup\{\varnothing\},\ a'\in A_k,\ \epsilon>0 \ \text{with } a'\succ_k a,\\
\qquad y_{a'}\le x_{a'}-\epsilon,\ y+\epsilon(\chi_{a'}-\chi_{a})\in P(r_k)
\end{array}\right\}.
\end{align}

This is a polymatroid linear program, and Edmonds' greedy algorithm applies~\citep{edmonds1971matroids}.
The maximizers depend only on the weak order of weights.
Therefore, any two weight vectors that are consistent with $\succsim_k$ induce the same correspondence $\Chcf_k(x)$.

Although the fractional choice correspondence is defined using a weight vector consistent with ordinal preferences, the resulting choice function depends only on the underlying ordinal structure.
Moreover, the notion of doubly-strong ex ante stability can be defined purely in ordinal terms, without reference to particular weights.
\begin{definition}\label{def:ordinal-ex-ante-equal-treatment}
We call a fractional allocation $\pi$ \emph{doubly-strong ex ante stable} in the ordinal model if it satisfies the following conditions:
\begin{itemize}
\item \emph{ex ante stable}. The allocation $\pi$ is ex ante stable if, for every $k\in I\cup J$, it holds that $\pi_k\in\Chcf_k(\pi_k)$, and for every contract $a\in A$ with $i=a_I$ and $j=a_J$, either $\pi_{i}\in\Chcf_{i}(\pi_{i}\vee \chi_a)$ or $\pi_{j}\in\Chcf_{j}(\pi_{j}\vee \chi_a)$.
\item \emph{no ex ante discrimination among firms}. The allocation $\pi$ has no ex ante discrimination among firms if there do not exist $a\in A$, $a'\in A_{a_I}\cup\{\varnothing\}$,
$a''\in A_{a_J}\cup\{\varnothing\}$, and $\epsilon>0$ such that
$a\sim_{a_I} a'$, $a\succ_{a_J} a''$,
$\pi_{a_I}+\epsilon(\chi_a-\chi_{a'})\in P(r_{a_I})$, $\pi_{a_J}+\epsilon(\chi_a-\chi_{a''})\in P(r_{a_J})$,
and $\pi_a<\pi_{a'}$ when $a'\in A_{a_I}$.
\item \emph{no ex ante discrimination among workers}. The allocation $\pi$ has no ex ante discrimination among workers if there do not exist $a\in A$, $a'\in A_{a_I}\cup\{\varnothing\}$,
$a''\in A_{a_J}\cup\{\varnothing\}$, and $\epsilon>0$ such that
$a\succ_{a_I} a'$, $a\sim_{a_J} a''$,
$\pi_{a_I}+\epsilon(\chi_a-\chi_{a'})\in P(r_{a_I})$, $\pi_{a_J}+\epsilon(\chi_a-\chi_{a''})\in P(r_{a_J})$,
and $\pi_a<\pi_{a''}$ when $a''\in A_{a_J}$.
\item \emph{ex ante indifference neutral}. The allocation $\pi$ is ex ante indifference neutral if there do not exist $a\in A$, $a'\in A_{a_I}\cup\{\varnothing\}$,
$a''\in A_{a_J}\cup\{\varnothing\}$, and $\epsilon>0$ such that
$a\sim_{a_I} a'$, $a\sim_{a_J} a''$,
$\pi_{a_I}+\epsilon(\chi_a-\chi_{a'})\in P(r_{a_I})$, $\pi_{a_J}+\epsilon(\chi_a-\chi_{a''})\in P(r_{a_J})$,
and $\pi_a<\pi_{a'}$ when $a'\in A_{a_I}$ and $\pi_a<\pi_{a''}$ when $a''\in A_{a_J}$.
\end{itemize}
\end{definition}

The existence result from \Cref{sec:cardinal} therefore implies the existence of a doubly-strong ex ante stable matching in a specific ordinal subclass.
This subclass uses matroid-responsive preferences and linear weight cardinalizations over feasible sets.
It does not claim existence for all ordinal models that may admit some diminishing-returns cardinal representation.
We state this implication as a theorem.
\begin{theorem}[Ordinal existence under matroid-responsiveness]\label{thm:ordinal-existence}
Suppose each agent $k$ has a matroid constraint $r_k$ and matroid-responsive preferences $\succsim_k$ over contracts.
Then the ordinal market admits a doubly-strong ex ante stable fractional allocation.
The set of such allocations depends only on the ordinal preferences and matroid constraints.
Moreover, for any consistent weights, each such allocation can be decomposed into a lottery over deterministic allocations, and this decomposition preserves each agent's expected utility under those weights.
\end{theorem}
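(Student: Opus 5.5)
The plan is to reduce the ordinal statement to the cardinal results of \Cref{sec:cardinal}. First I fix, for each agent $k$, a weight vector $w^k$ consistent with $\succsim_k$. I then form the weighted matroid valuation $v_k$, viewed as a function on $\ZZ^{A_k}$ with domain $P(r_k)\cap\{0,1\}^{A_k}$, and take $q=1$. As noted before the statement, $v_k$ is M$^\natural$-concave. By \Cref{lem:matroid-polytope-integral} its concave closure is the linear function $w^k\cdot x$ on $P(r_k)$. This gives a cardinal market in the contract setting. Here the edge set $I\times J$ of \Cref{sec:model} is replaced by the contract set $A$, with $A_i$ and $A_j$ the incident contracts. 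Nothing in \Cref{sec:cardinal} used that $A_i\cap A_j$ is a single coordinate, so all of its results transfer verbatim, with pairs $(i,j)$ replaced by contracts $a$. By \Cref{thm:cardinal-main} this market has a cardinally doubly-strong ex ante stable $\pi$.

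The main step is to show that, for this linear-on-polymatroid $\overline v_k$, cardinal doubly-strong ex ante stability coincides with \Cref{def:ordinal-ex-ante-equal-treatment}. Ex ante stability matches immediately, since the two use the same $\Chcf_k$ with $q\chi_a=\chi_a$. For the fairness conditions, note that $\overline v_k$ is linear on $P(r_k)$. Hence for $\epsilon>0$ with $\pi_k+\epsilon(\chi_a-\chi_{a'})\in P(r_k)$ we have
\[
\overline v_k(\pi_k+\epsilon(\chi_a-\chi_{a'}))-\overline v_k(\pi_k)=\epsilon(w^k_a-w^k_{a'}).
\]
The value is $-\infty$ if the point is infeasible. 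So ``$=$'' corresponds exactly to $a\sim_k a'$ together with feasibility, and ``$>$'' to $a\succ_k a'$ together with feasibility. Here $w_\varnothing=0$ and acceptability give $a\succ_k\varnothing$ or $a\sim_k\varnothing$ appropriately. Substituting into the three cardinal refinements yields the three ordinal ones term by term. The side conditions $\pi_a<\pi_{a'}$ are identical. The one care point is that \Cref{def:ordinal-ex-ante-equal-treatment} should be read with $a'\neq a$, matching the cardinal definitions.

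This equivalence shows the ordinal solution set is the cardinal one for any consistent weights. It therefore gives existence, and it also shows the set depends only on $(\succsim_k,r_k)$. That independence is also visible directly: the conditions above mention only the weak orders and the polymatroids, and $\Chcf_k$ is weight-independent by the greedy-algorithm remark.

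For the decomposition claim, I fix arbitrary consistent weights and apply \Cref{thm:ex-ante-to-ex-post} to the resulting cardinal market. The ordinally stable $\pi$ is ex ante stable there, so there is a lottery $\lambda$ over feasible deterministic allocations inducing $\pi$ and supported in $D_I\cap D_J$. The displayed computation in that proof gives $\sum_x\lambda_x v_k(x_k)=\overline v_k(\pi_k)=w^k\cdot\pi_k$ for every $k$. This is the expected utility the fractional allocation promises, so utility is preserved; ex post stability comes as a bonus. Here linearity makes $\overline v_k(\pi_k)=w^k\cdot\pi_k$ automatic, and the key content is that every support point lies in $P(r_k)$ for both sides, which the intersection theorem \Cref{thm:Mintersection} provides. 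The step I expect to need the most care is the transfer of \Cref{sec:cardinal} to multi-contract pairs, specifically \Cref{lem:non-satiated} and \Cref{thm:ag-ds-equivalence}, where $\chi_j$ becomes $\chi_a$. I would check that each argument only perturbs a single coordinate $a$ shared by $a_I$ and $a_J$, which it does.
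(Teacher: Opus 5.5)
Your proposal is correct and takes the same route as the paper. You fix consistent weights and use that the weighted matroid valuation is M$^\natural$-concave, with concave closure $w^k\cdot x$ on $P(r_k)$ and a weight-independent $\Chcf_k$; you then invoke \Cref{thm:cardinal-main} for existence and \Cref{thm:ex-ante-to-ex-post} for the utility-preserving decomposition. Two steps are left implicit in the paper and made explicit by you: the term-by-term matching of the cardinal and ordinal fairness conditions via your linearity computation, and the transfer of \Cref{sec:cardinal}, including the cited Alkan--Gale theorem, from $I\times J$ to a contract multigraph.
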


We will discuss applications of this result, specifically to school choice, in \Cref{sec:applications}.
\subsection{Non-responsive Preferences and Ambiguity}\label{subsec:nonresponsive-ambiguity}

We now turn to ambiguity under non-responsive preferences.
Without responsiveness, ordinal information alone does not determine the induced choice correspondence.
The next example illustrates this point.

\begin{example}\label{ex:ordinal-ambiguous-contract-order}
Let $I=\{i_1,i_2\}$, $J=\{s\}$, and $A=I\times J$.
Each worker ranks the single contract above the outside option.
The only ordinal information for the firm is that the two contracts are tied; that is, $i_1\sim_{s} i_2\succ_{s}\varnothing$.
Consider two M$^\natural$-concave valuations on bundles $X\subseteq A$: 
\begin{align}
v(X)=|X| \quad\text{and}\quad v'(X)=|X|\bigl(2-|X|\bigr).
\end{align}

Let $\pi$ be the allocation with $\pi_{i_1s}=\pi_{i_2s}=1$.
Under $v$, the firm always prefers to keep both contracts, and $\pi$ is ex ante stable.
Under $v'$, the firm strictly prefers one contract to two, and $\pi$ is not ex ante stable under $v'$.
Therefore, even the set of ex ante stable allocations can depend on the chosen cardinalization when only the order on $A$ is given.
\end{example}

The previous example uses only the order on $A$.
Even if we strengthen the information and fix the entire order on $2^A$ so that it admits a consistent M$^\natural$-concave representation, different representations can still lead to different behavior.
The next example shows this ambiguity through optimal decompositions of the same fractional allocation.

\begin{example}\label{ex:ordinal-ambiguous-order}
Let $I=\{i_1,i_2,i_3,i_4\}$, $J=\{s\}$, and $A=I\times J$.
For each $i\in I$, the worker ranks the single contract above the outside option, and $\{s\}\succ_i \varnothing$.
For the firm $s$, identify bundles with subsets of $I$ and let the matroid constraint be the uniform matroid of rank $2$, so every bundle of size at most $2$ is feasible.
Suppose the firm's weak order over feasible bundles is
\[
\{i_1,i_2\}\succ_{s} \{i_1,i_3\}\succ_{s} \{i_2,i_3\}\succ_{s} \{i_1,i_4\}\succ_{s} \{i_2,i_4\}\succ_{s} \{i_3,i_4\}
\succ_{s} \{i_1\}\succ_{s} \{i_2\}\succ_{s} \{i_3\}\succ_{s} \{i_4\}\succ_{s} \varnothing.
\]
This ordinal ranking on $2^A$ is consistent with more than one M$^\natural$-concave valuation for $s$.
To see this, let $s$ have two positions $s_1$ and $s_2$, each compatible with all workers, and define a valuation by maximum weight matching as follows.
For any bundle $X$, let the value be the maximum weight matching between $X$ and $\{s_1,s_2\}$ that matches all elements of $X$, and set the value to $-\infty$ if no such matching exists.
Consider the following three weight matrices:
\[
\begin{array}{c|cccc}
 & i_1 & i_2 & i_3 & i_4\\ \hline
s_1 & 8 & 0 & 5 & 3\\
s_2 & 0 & 7 & 6 & 3
\end{array}
\qquad
\begin{array}{c|cccc}
 & i_1 & i_2 & i_3 & i_4\\ \hline
s_1 & 9 & 7 & 6 & 0\\
s_2 & 0 & 8 & 6 & 4
\end{array}
\qquad
\begin{array}{c|cccc}
 & i_1 & i_2 & i_3 & i_4\\ \hline
s_1 & 8 & 7 & 5 & 0\\
s_2 & 0 & 7 & 6 & 4
\end{array}
\]
Each matrix induces the same weak order on feasible bundles listed above. Let $v$, $v'$, and $v''$ denote the valuations obtained from the first, second, and third matrices, respectively.
%
%
%
%

For the symmetric fractional allocation $\pi=(1/2,1/2,1/2,1/2)$, the set of optimal decompositions for $v$ is $\conv\bigl(\tfrac{1}{2}\delta_{\{i_1,i_3\}}+\tfrac{1}{2}\delta_{\{i_2,i_4\}},\tfrac{1}{2}\delta_{\{i_1,i_2\}}+\tfrac{1}{2}\delta_{\{i_3,i_4\}}\bigr)$,
while for $v'$ the set of optimal decompositions is
$\conv\bigl(\tfrac{1}{2}\delta_{\{i_1,i_4\}}+\tfrac{1}{2}\delta_{\{i_2,i_3\}},\tfrac{1}{2}\delta_{\{i_1,i_2\}}+\tfrac{1}{2}\delta_{\{i_3,i_4\}}\bigr)$.
For $v''$, the set of optimal decompositions are
$\conv\bigl(\tfrac{1}{2}\delta_{\{i_1,i_3\}}+\tfrac{1}{2}\delta_{\{i_2,i_4\}}$ or $\tfrac{1}{2}\delta_{\{i_1,i_4\}}+\tfrac{1}{2}\delta_{\{i_2,i_3\}}\bigr)$.
No decomposition outside the indicated convex hull attains the corresponding optimum.

The three sets of optimal decompositions have empty intersection.
Therefore no single decomposition is optimal for all consistent valuations.
In particular, for any fixed decomposition of $\pi$, at least one of $v$, $v'$, and $v''$ makes it suboptimal.
All three valuations are M$^\natural$-concave because they are maximum weight matching valuations. They provide distinct consistent cardinalizations of the same ordinal order.
\end{example}
This example shows that even when the ordinal ranking over $2^A$ is fixed and admits M$^\natural$-concave representations, the induced fractional choice behavior depends on which representation is chosen.
Thus ordinal information alone is not enough to identify which decompositions are optimal, even for the same fractional allocation.
This ambiguity concerns optimal decompositions for a fixed fractional allocation and does not show that ordinally equivalent valuations yield different fractional choice correspondences for all $x$.
The ambiguity does not arise if a specific M$^\natural$-concave valuation is fixed, in which case the cardinal analysis applies directly.
In applications with only ordinal information, one needs either responsive structure or an explicit contract model to avoid this dependence.

\section{Applications}\label{sec:applications}
Results from the previous sections apply directly when both sides' preferences are represented by M$^\natural$-concave functions, as in \citet{kojima2018designing} and \citet{BIK2025}.
In stochastic allocation, outcomes can depend on how the same constraints are represented, as \Cref{ex:ordinal-ambiguous-order} illustrates.
To avoid this dependence, we state results that do not rely on a particular M$^\natural$-concave representation.
We illustrate the point with standard controlled school choice constraints, where priorities and constraints are explicit while cardinal valuations are often implicit.

We now specify the school choice setting used throughout this section. Students have unit demand.
We treat the outside option as $\varnothing$.
Each student $i$ has a weak preference $\succsim_i$ over $J\cup\{\varnothing\}$.
Each school $s$ has a weak priority order $\succsim_s$ over students, with ties allowed.
We interpret $\succsim_s$ as an admissions priority and evaluate feasible sets using the induced choice correspondence.
Weak student preferences allow indifference between schools and avoid arbitrary tie breaking at the student side.
Weak school priorities capture policy categories, coarse information, or equal test scores and support equal treatment within a priority class.

Section~\ref{sec:ordinal} defines the ordinal model with matroid and g-polymatroid choice correspondences and introduces ordinal ex ante stability with equal treatment.
Here we translate common school choice constraints into those correspondences and state the implied stability conditions in priority language.
When the constraints admit a matroid-responsive representation, we apply \Cref{thm:ordinal-existence} to obtain a doubly-strong ex ante stable allocation.
Here matroid-responsiveness is as in \Cref{sec:ordinal} and coincides with the usual responsiveness under quota-only constraints.

Each application below follows the same three-step recipe, which we make explicit once to avoid repeating it.
First, we describe the policy as a feasibility family $\mathcal{F}_s$ on a suitable contract set, introducing labeled contracts (such as reserved, soft, and open contracts) when a single student--school pair can be admitted under different rules.
Second, we verify that $\mathcal{F}_s$ is a matroid (typically a partition or laminar matroid) and that the school's class-then-priority ranking is matroid-responsive. 
Third, we invoke \Cref{thm:ordinal-existence} to conclude that a doubly-strong ex ante stable allocation exists, and we restate what blocking means in the policy's own priority language.
The only part that varies across applications is the first step, namely how the policy is encoded as a matroid on labeled contracts; the second and third steps are then automatic.
We begin with the simplest constraints and then add type structure, reserves, and overlapping types, and each subsection points to a worked numerical example in \Cref{app:applications-examples}.

\subsection{Quotas and Type-Specific Quotas}\label{subsec:quotas}
We begin with quotas without types.
Each school $s$ has capacity $q_s$, and the feasible sets are $\mathcal{F}_s=\{X\subseteq I\mid |X|\le q_s\}$.
The induced choice correspondence is a weighted uniform matroid with weak priorities.
Ex ante stability is defined as in \Cref{sec:model}: a pair blocks only if the student strictly prefers the school and the school can replace lower-priority assignees while remaining in $\Chcf_s$.
By \Cref{thm:ordinal-existence}, quotas admit a doubly-strong ex ante stable allocation under matroid-responsive priorities.

We now introduce type-specific quotas.
In applications, a type represents a policy-relevant group such as a demographic category, a transportation zone, or a program track.
Types form a partition $(I_t)_{t\in T}$, and for any $X\subseteq I$ we write $X^t=X\cap I_t$.
Type-specific quota constraints are standard in controlled school choice.
See \citet{abdulkadirouglu2003school}.
Let $\overline{q}_s^t$ be the type-$t$ ceiling at school $s$, and assume $\sum_{t\in T}\overline{q}_s^t\ge q_s$.
Define the feasible family
\[
\mathcal{F}_s=\{X\subseteq I\mid |X|\le q_s,\ |X^t|\le \overline{q}_s^t\ \ (\forall t\in T)\},
\]
and let $\Chcf_s(X)$ be the set of priority-maximizing subsets of $X$ in $\mathcal{F}_s$ with weak priorities.
This is a weighted truncated partition-matroid rank choice correspondence, since the feasible family is obtained by truncating a partition matroid by the overall capacity $q_s$.
It fits the ordinal framework in \Cref{sec:ordinal}.
Concretely, a fractional allocation $\pi\in\RR_+^{I\times J}$ is \emph{ex ante stable} under type-specific quotas if it is individually rational and,
for any $i\in I$, $s\in J$, and $s'\in J\cup\{\varnothing\}$ with $\pi_{is'}>0$ and $s\succ_i s'$,
\begin{itemize}
\item if $\pi_s^{t(i)}<\overline{q}_s^{t(i)}$, then $\sum_{i'\in I}\pi_{i's}=q_s$, and  
$i'\succsim_s i\ \ (\forall i'\in I \text{ with }\pi_{i's}>0)$, and
\item if $\pi_s^{t(i)}=\overline{q}_s^{t(i)}$, then $i'\succsim_s i$ $(\forall i'\in I \text{ with }t(i')=t(i)\text{ and }\pi_{i's}>0)$.
\end{itemize}
These conditions exactly rule out blocking pairs under the quota-feasible choice correspondence.
If the first condition fails, then the student can be added while respecting capacity.
If the second condition fails, then the student can replace a lower-priority student within the same type ceiling.
Conversely, any blocking pair must either join a school that is not full or replace a lower-priority student within the relevant type ceiling.
That violates one of the conditions.
By \Cref{thm:ordinal-existence}, type-specific quotas admit a doubly-strong ex ante stable allocation under matroid-responsive priorities.
Appendix~\ref{app:applications-examples} provides Example~\ref{ex:type-specific-quotas} for this setting.

\subsection{Controlled School Choice}\label{subsec:reserves-soft-bounds}
Controlled school choice with priorities and diversity constraints is studied in \citet{abdulkadirouglu2003school}.
Subsequent work analyzes reserves and affirmative action policies in \citet{hafalir2013effective} and \citet{ehlers2014school}.
More recent work studies evenly distributed and constrained responsive correspondences and overlapping types in \citet{erdil2019efficiency,sonmez2022affirmative,aygun2021college}.

Throughout this subsection, types form a partition $(I_t)_{t\in T}$ as in \Cref{subsec:quotas}. Preferences, priorities, and unit-demand assumptions are the same.
We now consider reserves and soft bounds in place of type-specific ceilings and apply the ordinal ex ante stability definition in \Cref{def:ordinal-ex-ante-equal-treatment}.
Let $r_s\in \ZZ_+^T$ be type-specific reserves with $\sum_{t\in T}r_s^t\le q_s$ and recall $\pi_s^t=\sum_{i:t(i)=t}\pi_{is}$.
Reserves specify minimum targets for each type and protect access for those groups.
School preferences are derived from priorities and reserve status.
We model this by labeling contracts according to whether they count toward a reserve.
For each student $i$, introduce a reserved-type contract $(i,s,\text{res})$ and an open contract $(i,s,\text{open})$.
Student preferences depend only on the matched school: each student is indifferent between its reserved and open contracts at the same school, and students remain unit demand.
For each $t\in T$, let $R_{s,t}=\{(i,s,\text{res})\mid t(i)=t\}$ and let $O_s=\{(i,s,\text{open})\mid i\in I\}$.
A set $X$ of contracts is acceptable for $s$ if it selects at most $q_s$ contracts in total and satisfies $|X\cap R_{s,t}|\le r_s^t$ for every type $t$.
Open contracts are available for every student and can be used whenever capacity allows, even if some reserves are unfilled.
This feasible family is a laminar matroid on the contract set.
The school ranks all reserved-type contracts above open contracts and, within each class, follows its weak priority $\succsim_s$.
Responsive preferences over these contracts represent the reserve rule.
We use the ex ante stability definition from the previous section in the contract model.
By \Cref{thm:ordinal-existence}, reserves admit a doubly-strong ex ante stable allocation under matroid-responsive priorities.
Reserves are the lower-bound special case of the soft-bounds construction below.
Appendix~\ref{app:applications-examples} provides Example~\ref{ex:reserves} for this setting.

To allow both lower and upper type targets, we use soft bounds.
Each school $s$ has capacity $q_s$ and soft lower and upper bounds $\underline{q}_s^t$ and $\overline{q}_s^t$ for each type $t$, with $\sum_{t\in T}\underline{q}_s^t\le q_s$ and $\underline{q}_s^t\le \overline{q}_s^t$.
We model soft bounds with contracts and matroid-responsive preferences.
For each student $i$ of type $t$, introduce contracts $(i,s,\text{res})$, $(i,s,\text{soft})$, and $(i,s,\text{open})$.
The school can accept at most $q_s$ contracts in total, at most $\underline{q}_s^t$ reserved contracts of type $t$, and at most $\overline{q}_s^t$ reserved-or-soft contracts of type $t$.
Open contracts are available for every student and allow the school to fill remaining capacity.
School $s$ ranks contracts by class (reserved, then soft, then open) and, within each class, by the weak priority $\succsim_s$ with ties allowed.
These constraints form a laminar matroid.
The induced choice correspondence is the soft-bounds rule of \citet{ehlers2014school} and \citet{hafalir2013effective}.
By \Cref{thm:ordinal-existence}, soft bounds admit a doubly-strong ex ante stable allocation under matroid-responsive priorities.
This choice correspondence is the matroid-responsive contract representation of the soft-bounds model in \citet{ehlers2014school} and \citet{hafalir2013effective}.
Blocking in the contract model is equivalent to blocking under the induced correspondence.

\subsection{Evenly Distributed Correspondences}\label{subsec:edcr-correspondences}
This subsection places evenly distributed and constrained responsive (EDCR) correspondences within the ordinal framework of \Cref{sec:ordinal}.
\citet{erdil2019efficiency} introduce EDCR correspondences, which distribute surplus seats evenly across types while respecting reserves.
Let $(I_t)_{t\in T}$ be a partition and let $r_s^t$ be type-specific reserves with $\sum_{t\in T}r_s^t\le q_s$.
Fix a weak priority $\succsim_s$.
The EDCR correspondence chooses size-$q_s$ subsets when feasible that satisfy reserves and balance type shortfalls as evenly as possible, and then uses $\succsim_s$ as a secondary priority.
We give a contract representation.
For each student $i$ of type $t$, introduce a reserved contract $(i,s,\text{res})$ and surplus contracts $(i,s,\text{sur}_k)$ for $k\in\{1,\dots,q_s\}$.
Students are unit demand and care only about the matched school. They are indifferent among reserved and surplus contracts at the same school.
School $s$ ranks all reserved contracts above all surplus contracts.
Among surplus contracts it ranks smaller $k$ higher.
Within a fixed class it uses its weak priority $\succsim_s$ with ties allowed. Type matters only through feasibility, not through the within-class ranking.
Let $R_{s,t}=\{(i,s,\text{res})\mid t(i)=t\}$ and $E_{s,t,k}=\{(i,s,\text{sur}_k)\mid t(i)=t\}$.
A set $X$ of contracts is acceptable for $s$ if it satisfies $|X|\le q_s$, $|X\cap R_{s,t}|\le r_s^t$ for each $t\in T$, and $|X\cap E_{s,t,k}|\le 1$ for each $t\in T$ and $k$.
The last constraint means that type $t$ has at most one student in its $k$th surplus seat.
The priority by $k$ then fills surplus seats level by level, which spreads type counts as evenly as possible given availability.
This level-by-level priority implements the EDCR rule of minimizing the maximal shortfall across types subject to feasibility.
These constraints are laminar. This yields a matroid-responsive representation in the contract model.
By \Cref{thm:ordinal-existence}, EDCR admits a doubly-strong ex ante stable allocation under matroid-responsive priorities.
We can also handle a weighted variant by ordering surplus contracts by $k/w_t$ instead of $k$, where $w_t>0$ is the weight of type $t$.

\subsection{Overlapping Reserves and Meritorious Horizontal Choice}\label{subsec:overlapping-reserves}
We allow overlapping types. Each student may belong to multiple type sets $I_t$, and we write $T(i)\coloneqq\{t\in T\mid i\in I_t\}$.
Each school $s$ has reserves $r_s^t$ for every $t\in T$.
Students are unit demand and have weak preferences as in the earlier subsections.
School $s$ has a weak priority order $\succsim_s$ and capacity $q_s$.
Assume $\sum_{t\in T}r_s^t\le q_s$ and let $q_s^0\coloneqq q_s-\sum_{t\in T}r_s^t$ denote the number of baseline open seats.
We interpret reserves as \emph{soft}: unused reserved seats can be filled as open seats.
We use one-to-one counting, where each student can count toward at most one reserved type.
This model includes important applications such as affirmative action in India \citep{sonmez2022affirmative} and Brazil \citep{aygun2021college}.
The meritorious horizontal choice rule \citep{sonmez2022affirmative} can be interpreted as a two-step procedure.
It iteratively admits the highest-priority student who increases the maximum number of reserved seats that can be filled under one-to-one counting.
It then fills any remaining seats by priority.

We model overlapping reserves using contracts and ordinal preferences.
For each student $i$ and each type $t\in T(i)$, introduce a reserved contract $(i,s,\text{res},t)$, and introduce an open contract $(i,s,\text{open})$.
Students are unit demand and care only about the matched school. They are indifferent among reserved-type labels and open contracts at the same school.
School $s$ ranks all reserved contracts above open contracts and, within each class, uses its weak priority $\succsim_s$ with ties allowed.
Its preference over feasible sets is the matroid-responsive extension of this contract ranking.
Let $S_{s,t}$ be a set of $r_s^t$ reserved slots for each $t\in T$, and let $S_{s,0}$ be a set of $q_s^0$ open slots.
Each reserved contract $(i,s,\text{res},t)$ is compatible only with the slots $S_{s,t}$.
Each open contract $(i,s,\text{open})$ is compatible with any slot in $S_{s,0}\cup\bigcup_{t\in T}S_{s,t}$, which implements the soft-reserve interpretation.
A set $X$ of contracts is acceptable for $s$ if the compatibility graph admits a matching that assigns each contract in $X$ to a distinct slot.
This defines a transversal matroid on contracts.
The induced choice correspondence selects a feasible set that is maximal under the matroid-responsive ranking, which yields a matroid-responsive representation in the contract model.
Projecting to students forgets which contracts were reserved or open.
By \Cref{thm:ordinal-existence}, this matroid-responsive representation guarantees the existence of a doubly-strong ex ante stable allocation.

\section{Conclusion}\label{sec:conclusion}
\nosectionappendix
This paper studies ex ante stable and fair random allocations in two-sided markets with ties and institutional constraints.
In the cardinal model with M$^\natural$-concave valuations, we proved existence of doubly-strong ex ante stable allocations.
The key step was to derive induced choice functions from concave closures with symmetric tie-breaking, which gave an equivalence between AG-stability and doubly-strong ex ante stability.
We also proved that every ex ante stable fractional allocation can be decomposed into a lottery over ex post stable integral allocations.
This provides an explicit implementation of the fractional solution as a stable lottery.

We then developed an ordinal extension using contracts and matroid constraints.
Under matroid-responsive preferences, existence carries over, and the induced fractional choice correspondence is pinned down by ordinal information.
This framework captures controlled school choice constraints, including quotas, reserves, soft bounds, evenly distributed and constrained responsive correspondences, and overlapping reserves.

We briefly comment on incentive properties. Ex ante stability and fairness are known to be incompatible with strategy-proofness, even under responsive choice correspondences \citep{kesten2015theory,bando2025impossibility}. Our framework subsumes these environments, and therefore the same impossibility carries over; Appendix \ref{app:incentives} provides a detailed discussion, including weak strategy-proofness and cardinal incentives.

The current analysis has clear limits.
Our arguments rely on M$^\natural$-concavity in the cardinal model and matroid-responsiveness in the ordinal model.
Under non-responsive ordinal preferences, different consistent M$^\natural$-concave cardinalizations can induce different fractional behavior.
In addition, finite termination and polynomial-time computation of the Alkan--Gale iteration in this general domain remain open.

Another open question concerns ordinal equivalence of M$^\natural$-concave valuations.
Example~\ref{ex:ordinal-ambiguous-order} concerns optimal decompositions for a fixed fractional allocation and does not resolve this issue.
Let $v$ and $u$ be M$^\natural$-concave functions that induce the same weak order on $2^A$, in the sense that $v(X)\ge v(Y)$ if and only if $u(X)\ge u(Y)$ for all $X,Y\subseteq A$.
It is open whether, for every $x\in\RR_+^{A}$,
\[
\argmax\{\overline{v}(y)\mid y\le x,\ y\in\dom(\overline{v})\}
=
\argmax\{\overline{u}(y)\mid y\le x,\ y\in\dom(\overline{u})\}.
\]
We leave this question for future work.

\section*{Acknowledgment}
This work was partially supported by JSPS KAKENHI Grant Numbers JP21H04979, JP23K12443, and JP25K00137; and JST ERATO Grant Number JPMJER2301.

\begin{toappendix}

\section{Examples for Applications}\label{app:applications-examples}
This appendix provides two examples from the applications section.

\begin{example}\label{ex:type-specific-quotas}
Let $I=\{i_1,i_2,i_3\}$ and $J=\{s\}$ with $q_s=2$.
There are two types $t_1$ and $t_2$, and $t(i_1)=t(i_2)=t_1$ and $t(i_3)=t_2$.
Let the type-$t_1$ and type-$t_2$ ceilings be $\overline{q}_s^{t_1}=\overline{q}_s^{t_2}=1$.
Suppose $s$ ranks $i_1$ and $i_2$ above $i_3$ and is indifferent between $i_1$ and $i_2$.
All students strictly prefer $s$ to the outside option.
Under type-specific quotas, any ex ante stable allocation must assign $i_3$ to $s$ with probability $1$ and split the remaining seat between $i_1$ and $i_2$.
Thus the ex ante stable allocations are
\[
(\pi_{i_1s},\pi_{i_2s},\pi_{i_3s})=(\lambda,1-\lambda,1)
\quad\text{for}\quad \lambda\in[0,1].
\]
These are exactly the lotteries over the two stable integral allocations $\{i_1,i_3\}$ and $\{i_2,i_3\}$.
The doubly-strong ex ante stable allocation is the symmetric one with $\lambda=\tfrac{1}{2}$.
Compared with deterministic tie breaking, this outcome treats the tied students symmetrically while respecting the quota.
\end{example}

\begin{example}\label{ex:reserves}
This example shows that an allocation can be ex ante stable under reserves and still admit a stable decomposition in the contract model.
At the same time, the projection does not fix how reserves are used.
Let $\widehat{\pi}$ be a fractional allocation over contracts and define its projection by
\[
\pi_{is}=\widehat{\pi}_{(i,s,\text{res})}+\widehat{\pi}_{(i,s,\text{open})}.
\]
A fractional allocation $\pi\in\RR_+^{I\times J}$ is \emph{ex ante stable} under reserves $r=(r_s)_{s\in J}$
if and only if it is individually rational and the following two conditions hold.
\begin{itemize}
\item For any $i\in I$, $s\in J$, and $s'\in J\cup\{\varnothing\}$ with $\pi_{is'}>0$ and $s\succ_i s'$, we require
\[
\sum_{i'\in I}\pi_{i's}=q_s
\quad\text{and}\quad
\pi_s^{t(i)}\ge r_s^{t(i)},
\]
which means $s$ is full and already meets the reserve of $i$'s type.
\item For any $i,i'\in I$, $s\in J$, and $s'\in J\cup\{\varnothing\}$ with $\pi_{is'}>0$, $s\succ_i s'$, $i\succ_s i'$, and $\pi_{i's}>0$,
then a displacement of $i'$ by $i$ is admissible only if it does not push any type below its reserve. We rule out that possibility by requiring
\[
t(i)\neq t(i')
\quad\text{and}\quad
\pi_s^{t(i')}\le r_s^{t(i')}.
\]
\end{itemize}
Consider any blocking contract.
If it is a reserved contract, then either the reserve for its type has slack, or it can replace a lower-priority reserved contract of the same type without violating the reserve.
If it is an open contract, then either capacity has slack, or it can replace a lower-priority open contract without pushing any type below its reserve.
Each case corresponds to a violation of one of the two conditions.
Conversely, if either condition fails, the corresponding reserved or open contract forms a blocking pair.
Therefore, the two conditions are equivalent to contract-based ex ante stability.
This yields the intended stability notion in the original model.
In particular, stability depends only on who is matched to which school and not on the reserve or open label of the contract.
Let $I=\{i_1,i_2,i_3,i_4,i_5,i_6\}$ and $J=\{s_1,s_2,s_3\}$. Set $q_{s_1}=q_{s_2}=2$ and $q_{s_3}=1$.
Student preferences satisfy
\[
s_1\succ_{i_1} s_2\succ_{i_1} \varnothing,\quad
s_2\succ_{i_k} s_1\succ_{i_k} \varnothing \quad (k=2,3,4),\quad
s_3\succ_{i_5} s_1\succ_{i_5} \varnothing,\quad 
s_3\succ_{i_6} \varnothing.
\]
Any school not listed is unacceptable.
School priorities satisfy 
\[
i_2\sim_{s_1}i_5\succ_{s_1}i_1\succ_{s_1}i_3\sim_{s_1}i_4\succ_{s_1}\varnothing,\quad
i_1\sim_{s_2}i_2\sim_{s_2}i_3\sim_{s_2}i_4\succ_{s_2}\varnothing,\quad
i_5\sim_{s_3}i_6\succ_{s_3}\varnothing.
\]
Any student not listed is unacceptable.
There are two types $t_1$ and $t_2$, and $t(i_1)=t(i_2)=t(i_5)=t(i_6)=t_1$ and $t(i_3)=t(i_4)=t_2$.
Only school $s_1$ has a reserve $r_{s_1}=(0,1)$.
Consider the following fractional allocation and a contract allocation that projects to it:
\begin{align}
\widetilde{\pi}=
\begin{pNiceMatrix}[first-row,first-col]
     & \substack{s_1\\\text{res}} & \substack{s_1\\\text{open}} & \substack{s_2\\\text{res}} & \substack{s_2\\\text{open}} & \substack{s_3\\\text{res}} & \substack{s_3\\\text{open}} \\
i_1  & 0   & 0   & 0   & 1/2 & 0 & 0\\
i_2  & 0   & 1/2 & 0   & 1/2 & 0 & 0\\
i_3  & 1/2 & 0   & 0   & 1/2 & 0 & 0\\
i_4  & 1/2 & 0   & 0   & 1/2 & 0 & 0\\
i_5  & 0   & 1/2 & 0   & 0   & 0 & 1/2\\
i_6  & 0   & 0   & 0   & 0   & 0 & 1/2
\end{pNiceMatrix},\quad
\pi = 
\begin{pNiceMatrix}[first-row,first-col]
     & s_1 & s_2 & s_3 \\
i_1  & 0   & 1/2 & 0\\
i_2  & 1/2 & 1/2 & 0\\
i_3  & 1/2 & 1/2 & 0\\
i_4  & 1/2 & 1/2 & 0\\
i_5  & 1/2 & 0   & 1/2\\
i_6  & 0   & 0   & 1/2
\end{pNiceMatrix}
\end{align}
Here $\widetilde{\pi}$ assigns the reserve at $s_1$ to type-$t_2$ students and uses open contracts for the remaining matches.
This $\pi$ is ex ante stable under reserves. The reserve at $s_1$ for type $t_2$ is met, and any attempt by a type-$t_1$ student to enter $s_1$ would require displacing type-$t_2$ mass and violate the reserve.

A contract-feasible decomposition is $\widetilde{\pi}=\tfrac{1}{2}\tilde{x}+\tfrac{1}{2}\tilde{y}$, where
\begin{align}
\tilde{x}=
\begin{pNiceMatrix}[first-row,first-col]
     & \substack{s_1\\\text{res}} & \substack{s_1\\\text{open}} & \substack{s_2\\\text{res}} & \substack{s_2\\\text{open}} & \substack{s_3\\\text{res}} & \substack{s_3\\\text{open}} \\
i_1  & 0 & 0 & 0 & 1 & 0 & 0\\
i_2  & 0 & 1 & 0 & 0 & 0 & 0\\
i_3  & 1 & 0 & 0 & 0 & 0 & 0\\
i_4  & 0 & 0 & 0 & 1 & 0 & 0\\
i_5  & 0 & 0 & 0 & 0 & 0 & 1\\
i_6  & 0 & 0 & 0 & 0 & 0 & 0
\end{pNiceMatrix},\quad
\tilde{y}=
\begin{pNiceMatrix}[first-row,first-col]
     & \substack{s_1\\\text{res}} & \substack{s_1\\\text{open}} & \substack{s_2\\\text{res}} & \substack{s_2\\\text{open}} & \substack{s_3\\\text{res}} & \substack{s_3\\\text{open}} \\
i_1  & 0 & 0 & 0 & 0 & 0 & 0\\
i_2  & 0 & 0 & 0 & 1 & 0 & 0\\
i_3  & 0 & 0 & 0 & 1 & 0 & 0\\
i_4  & 1 & 0 & 0 & 0 & 0 & 0\\
i_5  & 0 & 1 & 0 & 0 & 0 & 0\\
i_6  & 0 & 0 & 0 & 0 & 0 & 1
\end{pNiceMatrix}.
\end{align}
Both $\tilde{x}$ and $\tilde{y}$ are stable allocations in the contract model.
If we ignore contract labels and only use $\pi$, we can write $\pi=\tfrac{1}{2}x'+\tfrac{1}{2}y'$, where
the allocations $x'$ and $y'$ need not be the projections of $\tilde{x}$ and $\tilde{y}$:
\begin{align}
x'=
\begin{pNiceMatrix}[first-row,first-col]
     & s_1 & s_2 & s_3 \\
i_1  & 0   & 1   & 0\\
i_2  & 0   & 1   & 0\\
i_3  & 1   & 0   & 0\\
i_4  & 1   & 0   & 0\\
i_5  & 0   & 0   & 1\\
i_6  & 0   & 0   & 0
\end{pNiceMatrix},\quad
y'=
\begin{pNiceMatrix}[first-row,first-col]
     & s_1 & s_2 & s_3 \\
i_1  & 0   & 0   & 0\\
i_2  & 1   & 0   & 0\\
i_3  & 0   & 1   & 0\\
i_4  & 0   & 1   & 0\\
i_5  & 1   & 0   & 0\\
i_6  & 0   & 0   & 1
\end{pNiceMatrix}.
\end{align}
This decomposition is not ex post stable under reserves because $x'$ is blocked.
In $x'$, school $s_1$ is matched with $i_3$ and $i_4$, both of type $t_2$, hence the type-$t_2$ count at $s_1$ is $2>r_{s_1}^{t_2}=1$.
Student $i_1$ prefers $s_1$ to $s_2$ and is ranked above $i_3$ and $i_4$ at $s_1$.
Replacing $i_4$ with $i_1$ keeps one type $t_2$ student at $s_1$ and respects the reserve.
Therefore, $(i_1,s_1)$ blocks $x'$.
The projection $\pi$ alone does not determine this decomposition or the reserve usage behind it.
\end{example}

\section{Incentive Properties}\label{app:incentives}
We discuss strategy-proofness, an incentive requirement. As is standard in the literature, we focus on the incentives of unit-demand agents (e.g., students) in one-to-many matching markets (e.g., school choice). Strategy-proofness in random allocations has primarily been studied under ordinal information. There are two standard notions of ordinal strategy-proofness. The first is the usual notion of strategy-proofness (SP), which requires that, for any misreport, the allocation obtained under truthful reporting first-order stochastically dominates the allocation obtained under the misreport. For example, random serial dictatorship satisfies this notion of strategy-proofness. The second notion is weak strategy-proofness (WSP). Under WSP, it is required that there exists no misreport such that the allocation induced by the misreport first-order stochastically dominates the allocation induced by truthful reporting. For example, probabilistic serial is not SP, but it satisfies WSP.

 \citet{kesten2015theory} show that, in one-to-many matching markets with responsive choice correspondences, ex ante stable and fair allocations are incompatible with SP. In particular, the fractional deferred acceptance (FDA) mechanism may violate SP. Furthermore, \citet{bando2025impossibility} show that ex ante stability and weak fairness are incompatible with WSP. As a consequence, the FDA cannot satisfy WSP either. Our ordinal model encompasses the framework of \citet{kesten2015theory}, and thus the same impossibility result applies.
 
Since we also study a model with cardinal valuations, we can additionally consider cardinal strategy-proofness, which requires that no misreport can improve an agent’s expected utility. The impossibility result of \citet{bando2025impossibility} extends to our model with cardinal valuations. They consider responsive choice correspondences, and for the instance in which their impossibility result holds, we can construct a corresponding instance in our model with cardinal valuations. Failure of WSP implies the existence of a misreport that yields a first-order stochastically dominating allocation, which in turn implies an improvement in expected utility under cardinal valuations. Therefore, no ex ante stable and fair allocation mechanism can satisfy cardinal strategy-proofness, and thus the Algorithm~\ref{alg:AG-DA} cannot satisfy it either.

\end{toappendix}

\bibliographystyle{abbrvnat}
\bibliography{bibliography}


\begin{thebibliography}{57}


\ifx \showCODEN    \undefined \def \showCODEN     #1{\unskip}     \fi
\ifx \showDOI      \undefined \def \showDOI       #1{#1}\fi
\ifx \showISBNx    \undefined \def \showISBNx     #1{\unskip}     \fi
\ifx \showISBNxiii \undefined \def \showISBNxiii  #1{\unskip}     \fi
\ifx \showISSN     \undefined \def \showISSN      #1{\unskip}     \fi
\ifx \showLCCN     \undefined \def \showLCCN      #1{\unskip}     \fi
\ifx \shownote     \undefined \def \shownote      #1{#1}          \fi
\ifx \showarticletitle \undefined \def \showarticletitle #1{#1}   \fi
\ifx \showURL      \undefined \def \showURL       {\relax}        \fi
\providecommand\bibfield[2]{#2}
\providecommand\bibinfo[2]{#2}
\providecommand\natexlab[1]{#1}
\providecommand\showeprint[2][]{arXiv:#2}

\bibitem[\protect\citeauthoryear{Abdulkadiro{\u{g}}lu, Pathak, and
  Roth}{Abdulkadiro{\u{g}}lu et~al\mbox{.}}{2009}]%
        {abdulkadirouglu2009strategy}
\bibfield{author}{\bibinfo{person}{Atila Abdulkadiro{\u{g}}lu},
  \bibinfo{person}{Parag~A. Pathak}, {and} \bibinfo{person}{Alvin~E. Roth}.}
  \bibinfo{year}{2009}\natexlab{}.
\newblock \showarticletitle{Strategy-proofness versus efficiency in matching
  with indifferences: Redesigning the NYC high school match}.
\newblock \bibinfo{journal}{\emph{American Economic Review}}
  \bibinfo{volume}{99}, \bibinfo{number}{5} (\bibinfo{year}{2009}),
  \bibinfo{pages}{1954--1978}.
\newblock


\bibitem[\protect\citeauthoryear{Abdulkadiro{\u{g}}lu and
  S{\"o}nmez}{Abdulkadiro{\u{g}}lu and S{\"o}nmez}{2003}]%
        {abdulkadirouglu2003school}
\bibfield{author}{\bibinfo{person}{Atila Abdulkadiro{\u{g}}lu} {and}
  \bibinfo{person}{Tayfun S{\"o}nmez}.} \bibinfo{year}{2003}\natexlab{}.
\newblock \showarticletitle{School choice: A mechanism design approach}.
\newblock \bibinfo{journal}{\emph{American Economic Review}}
  \bibinfo{volume}{93}, \bibinfo{number}{3} (\bibinfo{year}{2003}),
  \bibinfo{pages}{729--747}.
\newblock


\bibitem[\protect\citeauthoryear{Aizerman and Malishevski}{Aizerman and
  Malishevski}{1981}]%
        {aizerman1981general}
\bibfield{author}{\bibinfo{person}{Mark Aizerman} {and} \bibinfo{person}{Andrew
  Malishevski}.} \bibinfo{year}{1981}\natexlab{}.
\newblock \showarticletitle{General theory of best variants choice: Some
  aspects}.
\newblock \bibinfo{journal}{\emph{IEEE Trans. Automat. Control}}
  \bibinfo{volume}{26}, \bibinfo{number}{5} (\bibinfo{year}{1981}),
  \bibinfo{pages}{1030--1040}.
\newblock


\bibitem[\protect\citeauthoryear{Alkan and Gale}{Alkan and Gale}{2003}]%
        {AlkanGale2003}
\bibfield{author}{\bibinfo{person}{Ahmet Alkan} {and} \bibinfo{person}{David
  Gale}.} \bibinfo{year}{2003}\natexlab{}.
\newblock \showarticletitle{Stable schedule matching under revealed
  preference}.
\newblock \bibinfo{journal}{\emph{Journal of Economic Theory}}
  \bibinfo{volume}{112}, \bibinfo{number}{2} (\bibinfo{year}{2003}),
  \bibinfo{pages}{289--306}.
\newblock


\bibitem[\protect\citeauthoryear{Ayg{\"u}n and B{\'o}}{Ayg{\"u}n and
  B{\'o}}{2021}]%
        {aygun2021college}
\bibfield{author}{\bibinfo{person}{Orhan Ayg{\"u}n} {and}
  \bibinfo{person}{In{\'a}cio B{\'o}}.} \bibinfo{year}{2021}\natexlab{}.
\newblock \showarticletitle{College admission with multidimensional privileges:
  The Brazilian affirmative action case}.
\newblock \bibinfo{journal}{\emph{American Economic Journal: Microeconomics}}
  \bibinfo{volume}{13}, \bibinfo{number}{3} (\bibinfo{year}{2021}),
  \bibinfo{pages}{1--28}.
\newblock


\bibitem[\protect\citeauthoryear{Aziz and Brandl}{Aziz and Brandl}{2022}]%
        {aziz2022vigilant}
\bibfield{author}{\bibinfo{person}{Haris Aziz} {and} \bibinfo{person}{Florian
  Brandl}.} \bibinfo{year}{2022}\natexlab{}.
\newblock \showarticletitle{The vigilant eating rule: A general approach for
  probabilistic economic design with constraints}.
\newblock \bibinfo{journal}{\emph{Games and Economic Behavior}}
  \bibinfo{volume}{135} (\bibinfo{year}{2022}), \bibinfo{pages}{168--187}.
\newblock


\bibitem[\protect\citeauthoryear{Bando, Imamura, and Kawase}{Bando
  et~al\mbox{.}}{2025}]%
        {BIK2025}
\bibfield{author}{\bibinfo{person}{Keisuke Bando}, \bibinfo{person}{Kenzo
  Imamura}, {and} \bibinfo{person}{Yasushi Kawase}.}
  \bibinfo{year}{2025}\natexlab{}.
\newblock \showarticletitle{Properties of Path-Independent Choice
  Correspondences and Their Applications to Efficient and Stable Matchings}. In
  \bibinfo{booktitle}{\emph{Proceedings of the 26th ACM Conference on Economics
  and Computation}}. \bibinfo{publisher}{Association for Computing Machinery},
  \bibinfo{address}{New York, NY, USA}, \bibinfo{pages}{4}.
\newblock


\bibitem[\protect\citeauthoryear{Bando and Takase}{Bando and Takase}{2025}]%
        {bando2025impossibility}
\bibfield{author}{\bibinfo{person}{Keisuke Bando} {and} \bibinfo{person}{Souta
  Takase}.} \bibinfo{year}{2025}\natexlab{}.
\newblock \showarticletitle{Impossibility results for weak strategy-proofness
  and respect for improvements in random assignment with priorities}.
\newblock \bibinfo{journal}{\emph{Economics Letters}}  \bibinfo{volume}{257}
  (\bibinfo{year}{2025}), \bibinfo{pages}{112700}.
\newblock
\urldef\tempurl%
\url{https://doi.org/10.1016/j.econlet.2025.112700}
\showDOI{\tempurl}


\bibitem[\protect\citeauthoryear{Birkhoff}{Birkhoff}{1946}]%
        {birkhoff1946tres}
\bibfield{author}{\bibinfo{person}{Garrett Birkhoff}.}
  \bibinfo{year}{1946}\natexlab{}.
\newblock \showarticletitle{Three observations on linear algebra}.
\newblock \bibinfo{journal}{\emph{Revista de la Universidad Nacional de
  Tucum{\'a}n. Serie A}}  \bibinfo{volume}{5} (\bibinfo{year}{1946}),
  \bibinfo{pages}{147--151}.
\newblock


\bibitem[\protect\citeauthoryear{Bogomolnaia and Moulin}{Bogomolnaia and
  Moulin}{2001}]%
        {bogomolnaia2001new}
\bibfield{author}{\bibinfo{person}{Anna Bogomolnaia} {and}
  \bibinfo{person}{Herv{\'e} Moulin}.} \bibinfo{year}{2001}\natexlab{}.
\newblock \showarticletitle{A new solution to the random assignment problem}.
\newblock \bibinfo{journal}{\emph{Journal of Economic Theory}}
  \bibinfo{volume}{100}, \bibinfo{number}{2} (\bibinfo{year}{2001}),
  \bibinfo{pages}{295--328}.
\newblock


\bibitem[\protect\citeauthoryear{Budish, Che, Kojima, and Milgrom}{Budish
  et~al\mbox{.}}{2013}]%
        {budish2013designing}
\bibfield{author}{\bibinfo{person}{Eric Budish}, \bibinfo{person}{Yeon-Koo
  Che}, \bibinfo{person}{Fuhito Kojima}, {and} \bibinfo{person}{Paul Milgrom}.}
  \bibinfo{year}{2013}\natexlab{}.
\newblock \showarticletitle{Designing random allocation mechanisms: Theory and
  applications}.
\newblock \bibinfo{journal}{\emph{American Economic Review}}
  \bibinfo{volume}{103}, \bibinfo{number}{2} (\bibinfo{year}{2013}),
  \bibinfo{pages}{585--623}.
\newblock


\bibitem[\protect\citeauthoryear{Caragiannis, Filos-Ratsikas, Kanellopoulos,
  and Vaish}{Caragiannis et~al\mbox{.}}{2019}]%
        {caragiannis2019stable}
\bibfield{author}{\bibinfo{person}{Ioannis Caragiannis}, \bibinfo{person}{Aris
  Filos-Ratsikas}, \bibinfo{person}{Panagiotis Kanellopoulos}, {and}
  \bibinfo{person}{Rohit Vaish}.} \bibinfo{year}{2019}\natexlab{}.
\newblock \showarticletitle{Stable fractional matchings}. In
  \bibinfo{booktitle}{\emph{Proceedings of the 2019 ACM Conference on Economics
  and Computation}}. \bibinfo{publisher}{Association for Computing Machinery},
  \bibinfo{address}{New York, NY, USA}, \bibinfo{pages}{21--39}.
\newblock


\bibitem[\protect\citeauthoryear{Cookson and Shah}{Cookson and Shah}{2025}]%
        {cookson2025fairly}
\bibfield{author}{\bibinfo{person}{Benjamin Cookson} {and}
  \bibinfo{person}{Nisarg Shah}.} \bibinfo{year}{2025}\natexlab{}.
\newblock \showarticletitle{Fairly Stable Two-Sided Matching with
  Indifferences}. In \bibinfo{booktitle}{\emph{Proceedings of the 26th ACM
  Conference on Economics and Computation}} \emph{(\bibinfo{series}{EC '25})}.
  \bibinfo{publisher}{Association for Computing Machinery},
  \bibinfo{address}{New York, NY, USA}, \bibinfo{pages}{1130--1130}.
\newblock
\urldef\tempurl%
\url{https://doi.org/10.1145/3736252.3742675}
\showDOI{\tempurl}


\bibitem[\protect\citeauthoryear{Cowgill and Koning}{Cowgill and
  Koning}{2018}]%
        {cowgill2018matching}
\bibfield{author}{\bibinfo{person}{Bo Cowgill} {and} \bibinfo{person}{Rembrand
  Koning}.} \bibinfo{year}{2018}\natexlab{}.
\newblock \bibinfo{booktitle}{\emph{Matching Markets for Googlers}}.
\newblock \bibinfo{type}{Harvard Business School Case} 718-487.
  \bibinfo{institution}{Harvard Business School}.
\newblock
\newblock
\shownote{March 2018 (Revised August 2018).}


\bibitem[\protect\citeauthoryear{Echenique, Miralles, and Zhang}{Echenique
  et~al\mbox{.}}{2021}]%
        {echenique2021constrained}
\bibfield{author}{\bibinfo{person}{Federico Echenique},
  \bibinfo{person}{Antonio Miralles}, {and} \bibinfo{person}{Jun Zhang}.}
  \bibinfo{year}{2021}\natexlab{}.
\newblock \showarticletitle{Constrained pseudo-market equilibrium}.
\newblock \bibinfo{journal}{\emph{American Economic Review}}
  \bibinfo{volume}{111}, \bibinfo{number}{11} (\bibinfo{year}{2021}),
  \bibinfo{pages}{3699--3732}.
\newblock


\bibitem[\protect\citeauthoryear{Edmonds}{Edmonds}{1965}]%
        {edmonds1965matching}
\bibfield{author}{\bibinfo{person}{Jack Edmonds}.}
  \bibinfo{year}{1965}\natexlab{}.
\newblock \showarticletitle{Maximum matching and a polyhedron with
  {0,1}-vertices}.
\newblock \bibinfo{journal}{\emph{Journal of Research of the National Bureau of
  Standards Section B}} \bibinfo{volume}{69B}, \bibinfo{number}{1--2}
  (\bibinfo{year}{1965}), \bibinfo{pages}{125--130}.
\newblock


\bibitem[\protect\citeauthoryear{Edmonds}{Edmonds}{1970}]%
        {edmonds1970submodular}
\bibfield{author}{\bibinfo{person}{Jack Edmonds}.}
  \bibinfo{year}{1970}\natexlab{}.
\newblock \showarticletitle{Submodular Functions, Matroids, and Certain
  Polyhedra}.
\newblock In \bibinfo{booktitle}{\emph{Combinatorial Structures and Their
  Applications}}, \bibfield{editor}{\bibinfo{person}{Richard Guy},
  \bibinfo{person}{Haim Hanani}, \bibinfo{person}{Norbert Sauer}, {and}
  \bibinfo{person}{Jonathan Schonheim}} (Eds.). \bibinfo{publisher}{Gordon and
  Breach}, \bibinfo{address}{New York}, \bibinfo{pages}{69--87}.
\newblock


\bibitem[\protect\citeauthoryear{Edmonds}{Edmonds}{1971}]%
        {edmonds1971matroids}
\bibfield{author}{\bibinfo{person}{Jack Edmonds}.}
  \bibinfo{year}{1971}\natexlab{}.
\newblock \showarticletitle{Matroids and the Greedy Algorithm}.
\newblock \bibinfo{journal}{\emph{Mathematical Programming}}
  \bibinfo{volume}{1}, \bibinfo{number}{1} (\bibinfo{year}{1971}),
  \bibinfo{pages}{127--136}.
\newblock
\urldef\tempurl%
\url{https://doi.org/10.1007/BF01584082}
\showDOI{\tempurl}


\bibitem[\protect\citeauthoryear{Eguchi, Fujishige, and Tamura}{Eguchi
  et~al\mbox{.}}{2003}]%
        {EguchiA2003Generalized}
\bibfield{author}{\bibinfo{person}{Akinobu Eguchi}, \bibinfo{person}{Satoru
  Fujishige}, {and} \bibinfo{person}{Akihisa Tamura}.}
  \bibinfo{year}{2003}\natexlab{}.
\newblock \showarticletitle{A Generalized Gale--Shapley Algorithm for a
  Discrete--Concave Stable--Marriage Model}. In
  \bibinfo{booktitle}{\emph{Proceedings of the 14th International Symposium on
  Algorithms and Computation (ISAAC 2003)}} \emph{(\bibinfo{series}{Lecture
  Notes in Computer Science}, Vol.~\bibinfo{volume}{2906})}.
  \bibinfo{publisher}{Springer-Verlag}, \bibinfo{address}{Berlin},
  \bibinfo{pages}{495--504}.
\newblock


\bibitem[\protect\citeauthoryear{Ehlers, Hafalir, Yenmez, and Yildirim}{Ehlers
  et~al\mbox{.}}{2014}]%
        {ehlers2014school}
\bibfield{author}{\bibinfo{person}{Lars Ehlers}, \bibinfo{person}{Isa~E.
  Hafalir}, \bibinfo{person}{M.~Bumin Yenmez}, {and}
  \bibinfo{person}{Muhammed~A. Yildirim}.} \bibinfo{year}{2014}\natexlab{}.
\newblock \showarticletitle{School choice with controlled choice constraints:
  Hard bounds versus soft bounds}.
\newblock \bibinfo{journal}{\emph{Journal of Economic Theory}}
  \bibinfo{volume}{153} (\bibinfo{year}{2014}), \bibinfo{pages}{648--683}.
\newblock


\bibitem[\protect\citeauthoryear{Erdil}{Erdil}{2014}]%
        {erdil2014strategy}
\bibfield{author}{\bibinfo{person}{Aytek Erdil}.}
  \bibinfo{year}{2014}\natexlab{}.
\newblock \showarticletitle{Strategy-proof stochastic assignment}.
\newblock \bibinfo{journal}{\emph{Journal of Economic Theory}}
  \bibinfo{volume}{151} (\bibinfo{year}{2014}), \bibinfo{pages}{146--162}.
\newblock


\bibitem[\protect\citeauthoryear{Erdil and Ergin}{Erdil and Ergin}{2008}]%
        {erdil2008whats}
\bibfield{author}{\bibinfo{person}{Aytek Erdil} {and} \bibinfo{person}{Haluk
  Ergin}.} \bibinfo{year}{2008}\natexlab{}.
\newblock \showarticletitle{What's the matter with tie-breaking? Improving
  efficiency in school choice}.
\newblock \bibinfo{journal}{\emph{American Economic Review}}
  \bibinfo{volume}{98}, \bibinfo{number}{3} (\bibinfo{year}{2008}),
  \bibinfo{pages}{669--689}.
\newblock


\bibitem[\protect\citeauthoryear{Erdil and Kumano}{Erdil and Kumano}{2019}]%
        {erdil2019efficiency}
\bibfield{author}{\bibinfo{person}{Aytek Erdil} {and} \bibinfo{person}{Taro
  Kumano}.} \bibinfo{year}{2019}\natexlab{}.
\newblock \showarticletitle{Efficiency and stability under substitutable
  priorities with ties}.
\newblock \bibinfo{journal}{\emph{Journal of Economic Theory}}
  \bibinfo{volume}{184} (\bibinfo{year}{2019}), \bibinfo{pages}{104950}.
\newblock


\bibitem[\protect\citeauthoryear{Fleiner}{Fleiner}{2001}]%
        {fleiner2001}
\bibfield{author}{\bibinfo{person}{Tam{\'a}s Fleiner}.}
  \bibinfo{year}{2001}\natexlab{}.
\newblock \showarticletitle{A matroid generalization of the stable matching
  polytope}. In \bibinfo{booktitle}{\emph{International Conference on Integer
  Programming and Combinatorial Optimization}} \emph{(\bibinfo{series}{Lecture
  Notes in Computer Science}, Vol.~\bibinfo{volume}{2081})}. Springer,
  \bibinfo{publisher}{Springer-Verlag}, \bibinfo{address}{Berlin},
  \bibinfo{pages}{105--114}.
\newblock


\bibitem[\protect\citeauthoryear{Frank}{Frank}{1984}]%
        {Frank1984c}
\bibfield{author}{\bibinfo{person}{Andr{\'a}s Frank}.}
  \bibinfo{year}{1984}\natexlab{}.
\newblock \showarticletitle{Generalized polymatroids}.
\newblock In \bibinfo{booktitle}{\emph{Finite and Infinite Sets}}.
  \bibinfo{publisher}{North-Holland}, \bibinfo{address}{Amsterdam, New York},
  \bibinfo{pages}{285--294}.
\newblock


\bibitem[\protect\citeauthoryear{Frank and Tardos}{Frank and Tardos}{1988}]%
        {FrankTardos1988}
\bibfield{author}{\bibinfo{person}{Andr{\'a}s Frank} {and}
  \bibinfo{person}{{\'E}va Tardos}.} \bibinfo{year}{1988}\natexlab{}.
\newblock \showarticletitle{Generalized polymatroids and submodular flows}.
\newblock \bibinfo{journal}{\emph{Mathematical Programming}}
  \bibinfo{volume}{42}, \bibinfo{number}{1} (\bibinfo{year}{1988}),
  \bibinfo{pages}{489--563}.
\newblock


\bibitem[\protect\citeauthoryear{Fujishige}{Fujishige}{1980}]%
        {Fujishige1980lex}
\bibfield{author}{\bibinfo{person}{Satoru Fujishige}.}
  \bibinfo{year}{1980}\natexlab{}.
\newblock \showarticletitle{Lexicographically optimal base of a polymatroid
  with respect to a weight vector}.
\newblock \bibinfo{journal}{\emph{Mathematics of Operations Research}}
  \bibinfo{volume}{5}, \bibinfo{number}{2} (\bibinfo{year}{1980}),
  \bibinfo{pages}{186--196}.
\newblock


\bibitem[\protect\citeauthoryear{Fujishige}{Fujishige}{2005}]%
        {Fujishige2005}
\bibfield{author}{\bibinfo{person}{Satoru Fujishige}.}
  \bibinfo{year}{2005}\natexlab{}.
\newblock \bibinfo{booktitle}{\emph{Submodular Functions and Optimization}
  (\bibinfo{edition}{2} ed.)}. \bibinfo{series}{Annals of Discrete
  Mathematics}, Vol.~\bibinfo{volume}{58}.
\newblock \bibinfo{publisher}{Elsevier}, \bibinfo{address}{Amsterdam}.
\newblock


\bibitem[\protect\citeauthoryear{Fujishige, Sano, and Zhan}{Fujishige
  et~al\mbox{.}}{2018}]%
        {FujishigeSanoZhan2018}
\bibfield{author}{\bibinfo{person}{Satoru Fujishige}, \bibinfo{person}{Yoshio
  Sano}, {and} \bibinfo{person}{Ping Zhan}.} \bibinfo{year}{2018}\natexlab{}.
\newblock \showarticletitle{The Random Assignment Problem with Submodular
  Constraints on Goods}.
\newblock \bibinfo{journal}{\emph{ACM Transactions on Economics and
  Computation}} \bibinfo{volume}{6}, \bibinfo{number}{1}
  (\bibinfo{year}{2018}), \bibinfo{pages}{1--28}.
\newblock
\urldef\tempurl%
\url{https://doi.org/10.1145/3175496}
\showDOI{\tempurl}


\bibitem[\protect\citeauthoryear{Fujishige and Tamura}{Fujishige and
  Tamura}{2006}]%
        {FujishigeTamura2006}
\bibfield{author}{\bibinfo{person}{Satoru Fujishige} {and}
  \bibinfo{person}{Akihisa Tamura}.} \bibinfo{year}{2006}\natexlab{}.
\newblock \showarticletitle{A general two-sided matching market with discrete
  concave utility functions}.
\newblock \bibinfo{journal}{\emph{Discrete Applied Mathematics}}
  \bibinfo{volume}{154}, \bibinfo{number}{6} (\bibinfo{year}{2006}),
  \bibinfo{pages}{950--970}.
\newblock


\bibitem[\protect\citeauthoryear{Fujishige and Tamura}{Fujishige and
  Tamura}{2007}]%
        {FujishigeTamura2007}
\bibfield{author}{\bibinfo{person}{Satoru Fujishige} {and}
  \bibinfo{person}{Akihisa Tamura}.} \bibinfo{year}{2007}\natexlab{}.
\newblock \showarticletitle{A Two-Sided Discrete-Concave Market with Possibly
  Bounded Side Payments: An Approach by Discrete Convex Analysis}.
\newblock \bibinfo{journal}{\emph{Mathematics of Operations Research}}
  \bibinfo{volume}{32}, \bibinfo{number}{1} (\bibinfo{year}{2007}),
  \bibinfo{pages}{136--155}.
\newblock
\urldef\tempurl%
\url{https://doi.org/10.1287/moor.1070.0227}
\showDOI{\tempurl}


\bibitem[\protect\citeauthoryear{Fujishige and Yang}{Fujishige and
  Yang}{2003}]%
        {fujishige2003note}
\bibfield{author}{\bibinfo{person}{Satoru Fujishige} {and}
  \bibinfo{person}{Zaifu Yang}.} \bibinfo{year}{2003}\natexlab{}.
\newblock \showarticletitle{A note on Kelso and Crawford's gross substitutes
  condition}.
\newblock \bibinfo{journal}{\emph{Mathematics of Operations Research}}
  \bibinfo{volume}{28}, \bibinfo{number}{3} (\bibinfo{year}{2003}),
  \bibinfo{pages}{463--469}.
\newblock


\bibitem[\protect\citeauthoryear{Grotschel, Lovasz, and Schrijver}{Grotschel
  et~al\mbox{.}}{1988}]%
        {GLS1988}
\bibfield{author}{\bibinfo{person}{Martin Grotschel}, \bibinfo{person}{Laszlo
  Lovasz}, {and} \bibinfo{person}{Alexander Schrijver}.}
  \bibinfo{year}{1988}\natexlab{}.
\newblock \bibinfo{booktitle}{\emph{Geometric Algorithms and Combinatorial
  Optimization}}.
\newblock \bibinfo{publisher}{Springer}, \bibinfo{address}{Berlin, Heidelberg}.
\newblock


\bibitem[\protect\citeauthoryear{Hafalir, Yenmez, and Yildirim}{Hafalir
  et~al\mbox{.}}{2013}]%
        {hafalir2013effective}
\bibfield{author}{\bibinfo{person}{Isa~E. Hafalir}, \bibinfo{person}{M.~Bumin
  Yenmez}, {and} \bibinfo{person}{Muhammed~A. Yildirim}.}
  \bibinfo{year}{2013}\natexlab{}.
\newblock \showarticletitle{Effective affirmative action in school choice}.
\newblock \bibinfo{journal}{\emph{Theoretical Economics}} \bibinfo{volume}{8},
  \bibinfo{number}{2} (\bibinfo{year}{2013}), \bibinfo{pages}{325--363}.
\newblock


\bibitem[\protect\citeauthoryear{Han}{Han}{2024}]%
        {han2024theory}
\bibfield{author}{\bibinfo{person}{Xiang Han}.}
  \bibinfo{year}{2024}\natexlab{}.
\newblock \showarticletitle{A theory of fair random allocation under
  priorities}.
\newblock \bibinfo{journal}{\emph{Theoretical Economics}} \bibinfo{volume}{19},
  \bibinfo{number}{3} (\bibinfo{year}{2024}), \bibinfo{pages}{1185--1221}.
\newblock


\bibitem[\protect\citeauthoryear{He, Miralles, Pycia, and Yan}{He
  et~al\mbox{.}}{2018}]%
        {he2018pseudo}
\bibfield{author}{\bibinfo{person}{Yinghua He}, \bibinfo{person}{Antonio
  Miralles}, \bibinfo{person}{Marek Pycia}, {and} \bibinfo{person}{Jianye
  Yan}.} \bibinfo{year}{2018}\natexlab{}.
\newblock \showarticletitle{A pseudo-market approach to allocation with
  priorities}.
\newblock \bibinfo{journal}{\emph{American Economic Journal: Microeconomics}}
  \bibinfo{volume}{10}, \bibinfo{number}{3} (\bibinfo{year}{2018}),
  \bibinfo{pages}{272--314}.
\newblock


\bibitem[\protect\citeauthoryear{Hylland and Zeckhauser}{Hylland and
  Zeckhauser}{1979}]%
        {hylland1979efficient}
\bibfield{author}{\bibinfo{person}{Aanund Hylland} {and}
  \bibinfo{person}{Richard Zeckhauser}.} \bibinfo{year}{1979}\natexlab{}.
\newblock \showarticletitle{The efficient allocation of individuals to
  positions}.
\newblock \bibinfo{journal}{\emph{Journal of Political Economy}}
  \bibinfo{volume}{87}, \bibinfo{number}{2} (\bibinfo{year}{1979}),
  \bibinfo{pages}{293--314}.
\newblock


\bibitem[\protect\citeauthoryear{Imamura and Kawase}{Imamura and
  Kawase}{2025}]%
        {IK2025}
\bibfield{author}{\bibinfo{person}{Kenzo Imamura} {and}
  \bibinfo{person}{Yasushi Kawase}.} \bibinfo{year}{2025}\natexlab{}.
\newblock \showarticletitle{Efficient and Strategy-proof Mechanism under
  General Constraints}.
\newblock \bibinfo{journal}{\emph{Theoretical Economics}} \bibinfo{volume}{20},
  \bibinfo{number}{2} (\bibinfo{year}{2025}), \bibinfo{pages}{481--509}.
\newblock
\urldef\tempurl%
\url{https://doi.org/10.3982/TE6039}
\showDOI{\tempurl}


\bibitem[\protect\citeauthoryear{Karzanov}{Karzanov}{2024}]%
        {karzanov2024mixed}
\bibfield{author}{\bibinfo{person}{Alexander~V. Karzanov}.}
  \bibinfo{year}{2024}\natexlab{}.
\newblock \showarticletitle{On stable assignments generated by choice functions
  of mixed type}.
\newblock \bibinfo{journal}{\emph{Discrete Applied Mathematics}}
  \bibinfo{volume}{358} (\bibinfo{year}{2024}), \bibinfo{pages}{112--135}.
\newblock
\urldef\tempurl%
\url{https://doi.org/10.1016/j.dam.2024.06.037}
\showDOI{\tempurl}


\bibitem[\protect\citeauthoryear{Katta and Sethuraman}{Katta and
  Sethuraman}{2006}]%
        {katta2006solution}
\bibfield{author}{\bibinfo{person}{Akshay-Kumar Katta} {and}
  \bibinfo{person}{Jay Sethuraman}.} \bibinfo{year}{2006}\natexlab{}.
\newblock \showarticletitle{A solution to the random assignment problem on the
  full preference domain}.
\newblock \bibinfo{journal}{\emph{Journal of Economic Theory}}
  \bibinfo{volume}{131}, \bibinfo{number}{1} (\bibinfo{year}{2006}),
  \bibinfo{pages}{231--250}.
\newblock


\bibitem[\protect\citeauthoryear{Kawase, Sumita, and Yokoi}{Kawase
  et~al\mbox{.}}{2023}]%
        {KawaseSumitaYokoi}
\bibfield{author}{\bibinfo{person}{Yasushi Kawase}, \bibinfo{person}{Hanna
  Sumita}, {and} \bibinfo{person}{Yu Yokoi}.} \bibinfo{year}{2023}\natexlab{}.
\newblock \showarticletitle{Random Assignment of Indivisible Goods under
  Constraints}. In \bibinfo{booktitle}{\emph{Proceedings of the Thirty-Second
  International Joint Conference on Artificial Intelligence}}.
  \bibinfo{publisher}{International Joint Conferences on Artificial
  Intelligence}, \bibinfo{address}{Macao, SAR, China},
  \bibinfo{pages}{2792--2799}.
\newblock


\bibitem[\protect\citeauthoryear{Kelso and Crawford}{Kelso and
  Crawford}{1982}]%
        {kelso1982job}
\bibfield{author}{\bibinfo{person}{Alexander~S. Kelso} {and}
  \bibinfo{person}{Vincent~P. Crawford}.} \bibinfo{year}{1982}\natexlab{}.
\newblock \showarticletitle{Job matching, coalition formation, and gross
  substitutes}.
\newblock \bibinfo{journal}{\emph{Econometrica}} \bibinfo{volume}{50},
  \bibinfo{number}{6} (\bibinfo{year}{1982}), \bibinfo{pages}{1483--1504}.
\newblock


\bibitem[\protect\citeauthoryear{Kesten and {\"U}nver}{Kesten and
  {\"U}nver}{2015}]%
        {kesten2015theory}
\bibfield{author}{\bibinfo{person}{Onur Kesten} {and} \bibinfo{person}{M.~Utku
  {\"U}nver}.} \bibinfo{year}{2015}\natexlab{}.
\newblock \showarticletitle{A theory of school-choice lotteries}.
\newblock \bibinfo{journal}{\emph{Theoretical Economics}} \bibinfo{volume}{10},
  \bibinfo{number}{2} (\bibinfo{year}{2015}), \bibinfo{pages}{543--595}.
\newblock


\bibitem[\protect\citeauthoryear{Kojima, Tamura, and Yokoo}{Kojima
  et~al\mbox{.}}{2018}]%
        {kojima2018designing}
\bibfield{author}{\bibinfo{person}{Fuhito Kojima}, \bibinfo{person}{Akihisa
  Tamura}, {and} \bibinfo{person}{Makoto Yokoo}.}
  \bibinfo{year}{2018}\natexlab{}.
\newblock \showarticletitle{Designing matching mechanisms under constraints: An
  approach from discrete convex analysis}.
\newblock \bibinfo{journal}{\emph{Journal of Economic Theory}}
  \bibinfo{volume}{176} (\bibinfo{year}{2018}), \bibinfo{pages}{803--833}.
\newblock


\bibitem[\protect\citeauthoryear{Murota}{Murota}{2003}]%
        {murota2003}
\bibfield{author}{\bibinfo{person}{Kazuo Murota}.}
  \bibinfo{year}{2003}\natexlab{}.
\newblock \bibinfo{booktitle}{\emph{Discrete Convex Analysis}}.
\newblock \bibinfo{publisher}{SIAM}, \bibinfo{address}{Philadelphia}.
\newblock


\bibitem[\protect\citeauthoryear{Murota}{Murota}{2016}]%
        {murota2016}
\bibfield{author}{\bibinfo{person}{Kazuo Murota}.}
  \bibinfo{year}{2016}\natexlab{}.
\newblock \showarticletitle{Discrete Convex Analysis: A Tool for Economics and
  Game Theory}.
\newblock \bibinfo{journal}{\emph{Journal of Mechanism and Institution Design}}
  \bibinfo{volume}{1}, \bibinfo{number}{1} (\bibinfo{year}{2016}),
  \bibinfo{pages}{151--273}.
\newblock


\bibitem[\protect\citeauthoryear{Murota and Shioura}{Murota and
  Shioura}{1999}]%
        {murota1999m}
\bibfield{author}{\bibinfo{person}{Kazuo Murota} {and}
  \bibinfo{person}{Akiyoshi Shioura}.} \bibinfo{year}{1999}\natexlab{}.
\newblock \showarticletitle{M-convex function on generalized polymatroid}.
\newblock \bibinfo{journal}{\emph{Mathematics of Operations Research}}
  \bibinfo{volume}{24}, \bibinfo{number}{1} (\bibinfo{year}{1999}),
  \bibinfo{pages}{95--105}.
\newblock


\bibitem[\protect\citeauthoryear{Murota and Shioura}{Murota and
  Shioura}{2000}]%
        {MurotaShioura2000}
\bibfield{author}{\bibinfo{person}{Kazuo Murota} {and}
  \bibinfo{person}{Akiyoshi Shioura}.} \bibinfo{year}{2000}\natexlab{}.
\newblock \showarticletitle{Extension of {{M-Convexity}} and {{L-Convexity}} to
  {{Polyhedral Convex Functions}}}.
\newblock \bibinfo{journal}{\emph{Advances in Applied Mathematics}}
  \bibinfo{volume}{25}, \bibinfo{number}{4} (\bibinfo{year}{2000}),
  \bibinfo{pages}{352--427}.
\newblock


\bibitem[\protect\citeauthoryear{Murota and Yokoi}{Murota and Yokoi}{2015}]%
        {MurotaYokoi2015}
\bibfield{author}{\bibinfo{person}{Kazuo Murota} {and} \bibinfo{person}{Yu
  Yokoi}.} \bibinfo{year}{2015}\natexlab{}.
\newblock \showarticletitle{On the {{Lattice Structure}} of {{Stable
  Allocations}} in a {{Two-Sided Discrete-Concave Market}}}.
\newblock \bibinfo{journal}{\emph{Mathematics of Operations Research}}
  \bibinfo{volume}{40}, \bibinfo{number}{2} (\bibinfo{year}{2015}),
  \bibinfo{pages}{460--473}.
\newblock


\bibitem[\protect\citeauthoryear{Nguyen, Teytelboym, and Vardi}{Nguyen
  et~al\mbox{.}}{2025}]%
        {nguyen2025efficiency}
\bibfield{author}{\bibinfo{person}{Th{\`a}nh Nguyen},
  \bibinfo{person}{Alexander Teytelboym}, {and} \bibinfo{person}{Shai Vardi}.}
  \bibinfo{year}{2025}\natexlab{}.
\newblock \bibinfo{title}{Efficiency, Envy, and Incentives in Combinatorial
  Assignment}.
\newblock
\newblock
\showeprint[arxiv]{2509.13198}~[econ.TH]


\bibitem[\protect\citeauthoryear{Plott}{Plott}{1973}]%
        {plott1973path}
\bibfield{author}{\bibinfo{person}{Charles~R. Plott}.}
  \bibinfo{year}{1973}\natexlab{}.
\newblock \showarticletitle{Path Independence, Rationality, and Social Choice}.
\newblock \bibinfo{journal}{\emph{Econometrica}}  \bibinfo{volume}{41}
  (\bibinfo{year}{1973}), \bibinfo{pages}{1075--1091}.
\newblock


\bibitem[\protect\citeauthoryear{Roth}{Roth}{1984}]%
        {roth1984evolution}
\bibfield{author}{\bibinfo{person}{Alvin~E. Roth}.}
  \bibinfo{year}{1984}\natexlab{}.
\newblock \showarticletitle{The evolution of the labor market for medical
  interns and residents: a case study in game theory}.
\newblock \bibinfo{journal}{\emph{Journal of Political Economy}}
  \bibinfo{volume}{92}, \bibinfo{number}{6} (\bibinfo{year}{1984}),
  \bibinfo{pages}{991--1016}.
\newblock


\bibitem[\protect\citeauthoryear{Roth, Rothblum, and Vande~Vate}{Roth
  et~al\mbox{.}}{1993}]%
        {roth1993stable}
\bibfield{author}{\bibinfo{person}{Alvin~E. Roth}, \bibinfo{person}{Uriel~G.
  Rothblum}, {and} \bibinfo{person}{John~H. Vande~Vate}.}
  \bibinfo{year}{1993}\natexlab{}.
\newblock \showarticletitle{Stable matchings, optimal assignments, and linear
  programming}.
\newblock \bibinfo{journal}{\emph{Mathematics of Operations Research}}
  \bibinfo{volume}{18}, \bibinfo{number}{4} (\bibinfo{year}{1993}),
  \bibinfo{pages}{803--828}.
\newblock


\bibitem[\protect\citeauthoryear{Schrijver}{Schrijver}{2003}]%
        {schrijver2003combinatorial}
\bibfield{author}{\bibinfo{person}{Alexander Schrijver}.}
  \bibinfo{year}{2003}\natexlab{}.
\newblock \bibinfo{booktitle}{\emph{Combinatorial Optimization: Polyhedra and
  Efficiency}}. \bibinfo{series}{Algorithms and Combinatorics},
  Vol.~\bibinfo{volume}{24}.
\newblock \bibinfo{publisher}{Springer}, \bibinfo{address}{Berlin, Heidelberg}.
\newblock


\bibitem[\protect\citeauthoryear{S{\"o}nmez and Yenmez}{S{\"o}nmez and
  Yenmez}{2022}]%
        {sonmez2022affirmative}
\bibfield{author}{\bibinfo{person}{Tayfun S{\"o}nmez} {and}
  \bibinfo{person}{M.~Bumin Yenmez}.} \bibinfo{year}{2022}\natexlab{}.
\newblock \showarticletitle{Affirmative action in {I}ndia via vertical,
  horizontal, and overlapping reservations}.
\newblock \bibinfo{journal}{\emph{Econometrica}} \bibinfo{volume}{90},
  \bibinfo{number}{3} (\bibinfo{year}{2022}), \bibinfo{pages}{1143--1176}.
\newblock


\bibitem[\protect\citeauthoryear{von Neumann}{von Neumann}{1953}]%
        {vonneumann1953assignment}
\bibfield{author}{\bibinfo{person}{John von Neumann}.}
  \bibinfo{year}{1953}\natexlab{}.
\newblock \showarticletitle{A certain zero-sum two-person game equivalent to
  the optimal assignment problem}.
\newblock In \bibinfo{booktitle}{\emph{Contributions to the Theory of Games,
  Volume II}}. \bibinfo{series}{Annals of Mathematics Studies},
  Vol.~\bibinfo{volume}{28}. \bibinfo{publisher}{Princeton University Press},
  \bibinfo{address}{Princeton, NJ}, \bibinfo{pages}{5--12}.
\newblock


\bibitem[\protect\citeauthoryear{Y{\i}lmaz}{Y{\i}lmaz}{2009}]%
        {yilmaz2009random}
\bibfield{author}{\bibinfo{person}{{\"O}zg{\"u}r Y{\i}lmaz}.}
  \bibinfo{year}{2009}\natexlab{}.
\newblock \showarticletitle{Random assignment under weak preferences}.
\newblock \bibinfo{journal}{\emph{Games and Economic Behavior}}
  \bibinfo{volume}{66}, \bibinfo{number}{1} (\bibinfo{year}{2009}),
  \bibinfo{pages}{546--558}.
\newblock


\end{thebibliography}

\end{document}